\documentclass[a4paper,UKenglish,cleveref, autoref, dvipsnames]{lipics-v2019}

\usepackage{tikz}
\usetikzlibrary{automata,arrows,shapes}
\usetikzlibrary{decorations.pathreplacing}
\usepackage[]{algorithm2e}
\usepackage[]{xcolor}
\usetikzlibrary{angles} 

\usepackage{amsmath}
\usepackage{amssymb,stmaryrd}
\usepackage{enumitem}
\setlength{\overfullrule}{10pt} 
\usepackage{multirow}

\nolinenumbers


\title{Energy mean-payoff games}

\author{Véronique Bruyère}{Université de Mons (UMONS), Belgium}{veronique.bruyere@umons.ac.be}{}{}

\author{Quentin Hautem\footnote{supported by a FRIA fellowship}}{Université de Mons (UMONS), 
Belgium}{q.hautem@gmail.com}{}{}

\author{Mickael Randour\footnote{F.R.S.-FNRS Research Associate}}{Université de Mons (F.R.S.-FNRS \& UMONS), Belgium}{mickael.randour@gmail.com}{}{}

\author{Jean-François Raskin}{Université libre de Bruxelles (ULB), Belgium}{jraskin@ulb.ac.be}{}{}

\authorrunning{V. Bruyère, Q. Hautem, M. Randour, and J.-F. Raskin}

\Copyright{V. Bruyère, Q. Hautem, M. Randour, and J.-F. Raskin}

\ccsdesc[100]{Software and its engineering~Formal methods}
\ccsdesc[100]{Theory of computation~Logic and verification}
\ccsdesc[100]{Theory of computation~Solution concepts in game theory}

\keywords{two-player zero-sum games played on graphs, energy and mean-payoff objectives, complexity study and construction of optimal strategies
}

\category{}

\relatedversion{}

\supplement{}

\funding{Work partially supported by the PDR project \emph{Subgame perfection in graph games} (F.R.S.-FNRS), the grant n°F.4520.18 \emph{ManySynth} (F.R.S.-FNRS), the ARC project \emph{Non-Zero Sum Game Graphs: Applications to Reactive Synthesis and Beyond} (Fédération Wallonie-Bruxelles), the EOS project \emph{Verifying Learning Artificial Intelligence Systems} (F.R.S.-FNRS \& FWO), and the COST Action 16228 \emph{GAMENET} (European
Cooperation in Science and Technology).}




\newcommand{\N}{\mathbb{N}}
\newcommand{\Z}{\mathbb{Z}}
\newcommand{\Q}{\mathbb{Q}}

\newcommand{\Nzero}{\N_0}





\newcommand{\playerOne}{\ensuremath{\mathcal{P}_1} } 
\newcommand{\playerTwo}{\ensuremath{\mathcal{P}_2 } }
\newcommand{\playerI}{\ensuremath{\mathcal{P}_i} }

\newcommand{\Plays}{\mathsf{Plays}}
\newcommand{\rhofactor}[2]{\rho{[{#1},{#2}]}}
\newcommand{\rhoprimfactor}[2]{\rho'{[{#1},{#2}]}}

\newcommand{\Obj}{\Omega}  

\newcommand{\pifactor}[2]{\pi{[{#1},{#2}]}}




\newcommand{\MPs}{\mathsf{\overline{MP}}}
\newcommand{\MPi}{\mathsf{\underline{MP}}}
\newcommand{\EMPi}{\mathsf{E \cap \underline{MP}}}
\newcommand{\EMPs}{\mathsf{E \cap \overline{MP}}}

\newcommand{\energy}{\mathsf{Energy}}






\newcommand{\ffclosed}[1]{first-closed}

\makeatletter    

\newcommand{\unnumberedcaption}%
	{\@dblarg{\@unnumberedcaption\@captype}}

\newcommand{\@unnumberedcaption}{}
\long\def\@unnumberedcaption#1[#2]#3{\par
  \addcontentsline{\csname ext@#1\endcsname}{#1}{%
    \protect\numberline{}{\ignorespaces #2}%
    }%
  \begingroup
    \@parboxrestore
    \normalsize
    \@makeunnumberedcaption{\ignorespaces #3}\par
  \endgroup}

\newcommand{\@makeunnumberedcaption}[1]{%
  \vskip\abovecaptionskip
  \sbox\@tempboxa{#1}%
  \ifdim \wd\@tempboxa >\hsize
    #1\par
  \else
    \global \@minipagefalse
    \hbox to\hsize{\hfil\box\@tempboxa\hfil}%
  \fi
  \vskip\belowcaptionskip}

\@ifundefined{abovecaptionskip}{%
  \newlength{\abovecaptionskip}%
  \setlength{\abovecaptionskip}{10pt}%
}{}
\@ifundefined{belowcaptionskip}{%
  \newlength{\belowcaptionskip}%
  \setlength{\belowcaptionskip}{0pt}%
}{}

\makeatother    

\begin{document}
  \maketitle	

\begin{abstract} 
In this paper, we study one-player and two-player energy mean-payoff games. Energy mean-payoff games are games of infinite duration played on a finite graph with edges labeled by 2-dimensional weight vectors. The objective of the first player (the protagonist) is to satisfy an energy objective on the first dimension and a mean-payoff objective on the second dimension. We show that optimal strategies for the first player may require infinite memory while optimal strategies for the second player (the antagonist) do not require memory. In the one-player case (where only the first player has choices), the problem of deciding who is the winner can be solved in polynomial time while for the two-player case we show co-NP membership and we give effective constructions for the infinite-memory optimal strategies of the protagonist. 
\end{abstract}

\section{Introduction}\label{sec:EMP:intro}

Graph games with $\omega$-regular objectives are a canonical mathematical model to formalize and solve the reactive synthesis problem~\cite{PnueliR89}. Extensions of graph games with quantitative objectives have been considered more recently as a model where, not only the correctness, but also the quality of solutions for the reactive synthesis problem can be formalized and optimized. A large effort has been invested in studying games with various kinds of objectives, see e.g.~\cite{BouyerFLMS08,BrimCDGR11,BruyereHR16a,Chatterjee0RR15,ChatterjeeRR14,ChatterjeeV17,em79,VelnerC0HRR15,ZP96}, see also Chapter~27 of~\cite{BloemCJ18} and the survey~\cite{Bruyere17}.

Two particularly important classes of objectives are \emph{mean-payoff} and \emph{energy} objectives. In a mean-payoff game, the edges of the game graph are labeled with integer weights that model payoffs received by the first player (the protagonist) and paid by the second player (the antagonist) when the edge is taken. The game is played for infinitely many rounds, and the protagonist aims at maximizing the mean value of edges traversed during the game while the antagonist tries to minimize this mean value.  Mean-payoff games have been studied in~\cite{em79} where it is shown that memoryless optimal strategies exist for both players. As a corollary of this result, mean-payoff games can be decided in NP $\cap$ co-NP. While pseudo-polynomial time algorithms for solving mean-payoff games have been developed in~\cite{BrimCDGR11,ZP96} as well as the recent pseudo-quasi-polynomial time algorithm in~\cite{DaviaudJL18}, it is a long standing open question whether or not those games can be solved in polynomial time. Energy games were defined more recently in~\cite{ChakrabartiAHS03}. In an energy game, edges are also labeled with integer weights that represent gains or losses of energy. In such a game, the protagonist tries to build an infinite path for which the total sum of energy in all the prefixes is bounded from below, while the antagonist has the opposite goal. Energy games can also be decided in NP $\cap$ co-NP and it is known that they are \emph{inter-reducible} with mean-payoff games~\cite{BouyerFLMS08}.

\emph{Energy mean-payoff} games that combine an energy and a mean-payoff objectives have not been yet studied. This is the main goal of this paper. This is a challenging problem for several reasons. First, multi-dimensional \emph{homogeneous} extensions of mean-payoff and energy games have been studied in a series of recent contributions~\cite{ChatterjeeRR14,JurdzinskiLS15,Velner15,VelnerC0HRR15}, and those works show that when going from one dimension to several, the close relationship between mean-payoff games and energy games is lost and specific new techniques need to be designed for solving those extensions. Second, \emph{pushdown mean-payoff games} have been studied in~\cite{ChatterjeeV17} and shown to be undecidable. 
Decision problems for energy mean-payoff games can be reduced to decision problems of pushdown mean-payoff games, even to the subclass of pushdown mean-payoff games with a one-letter stack alphabet. Unfortunately, pushdown mean-payoff games are undecidable in general and to the best of our knowledge the one-letter stack alphabet case has not been studied.

\subparagraph{Main contributions.} In this paper, we prove that energy mean-payoff games are decidable and more precisely, their decision problems lie in co-NP (Theorem~\ref{thm:twoPlayer}) for both cases of strict and non-strict inequality in the threshold constraint for the mean-payoff objective. To obtain this result, we first study \emph{one-player} energy mean-payoff games and characterize precisely the game graphs in which $\playerOne$ (the protagonist) can build an infinite path that satisfies the energy mean-payoff objective (Theorem~\ref{thm:caracterisation} and Theorem~\ref{thm:caracterisationLarge}). This characterization leads to polynomial time algorithms to solve the decision problems in the one-player case (Theorem~\ref{thm:onePlayer}). Then we show that in \emph{two-player} energy mean-payoff games memoryless optimal strategies always exist for $\playerTwo$ (the antagonist) who aims at spoiling the energy mean-payoff objective of $\playerOne$ (Proposition~\ref{prop:memoryless}). Combined with the polynomial time algorithms for the one-player case, this result leads to co-NP membership of the decision problems. While the memoryless result for $\playerTwo$ allows us to understand how this player should play in energy mean-payoff games, it does not prescribe how $\playerOne$ should play from winning vertices. To show how to effectively construct optimal strategies for $\playerOne$, we consider a reduction to \emph{4-dimensional energy games} in case of strict inequality for mean-payoff objective (Proposition~\ref{prop:reductionEnergy}). With the result of~\cite{JurdzinskiLS15}, this implies the existence of finite-memory strategies for $\playerOne$ to play optimally and of a pseudo-polynomial time algorithm to solve those instances. For non-strict inequalities, this reduction cannot be applied as, even for the one-player case, infinite-memory strategies are sometimes necessary to play optimally. In this case, we show how we can combine an infinite number of finite-memory strategies, that are played in sequence, in order to play optimally (Proposition~\ref{prop:non-strict-effective}). 

\subparagraph{Related work.} As already mentioned, multi-dimensional conjunctive extensions of mean-payoff games and multi-dimensional conjunctive extensions of energy games have been considered~\cite{ChatterjeeDHR10,ChatterjeeRR14,VelnerC0HRR15}. Deciding the existence of a winning strategy for $\playerOne$ in those games is co-NP-complete. Games with any Boolean combination of mean-payoff objectives have been shown undecidable in~\cite{Velner15}. Games with mean-payoff objectives and $\omega$-regular constraints have been studied in~\cite{ChatterjeeHJ05}, while games with energy objectives and $\omega$-regular constraints have been studied in~\cite{ChatterjeeD12}, and their multi-dimensional extensions in~\cite{AbdullaMSS13,ChatterjeeRR14,ColcombetJLS17}.
In~\cite{JurdzinskiLS15}, the authors have studied multi-dimensional energy games for the fixed initial credit and provided a pseudo-polynomial time algorithm to solve those games when the number of dimensions is fixed. Energy games with bounds on the energy level have been studied in~\cite{FahrenbergJLS11,JuhlLR13}. Games with the combination of an energy objective and an average-energy objective are investigated in~\cite{BouyerHMR017,BouyerMRLL18}. This seemingly related class of games is actually quite different from the energy mean-payoff games studied in this paper: e.g., they are EXPSPACE-hard whereas our games are in co-NP. Infinite-state energy games are investigated in~\cite{AbdullaAHMKT14} where energy objectives are studied on infinite game structures induced by one-counter automata or pushdown automata. Some work on other models dealing with energy have been studied, as battery edge systems~\cite{BokerHR14} and consumption games~\cite{BrazdilCKN12}. In the latter games, minimization of running costs have also been investigated~\cite{BrazdilKKN14}. Optimizing the expected mean-payoff in energy MDP's have been studied in~\cite{BrazdilKN16}. In~\cite{Kucera12}, Kucera presents an overview of results related to games and counter automata, which are close to energy constraints.

We now discuss \emph{mean-payoff pushdown games}~\cite{ChatterjeeV17} in more details. In those games, a stack is associated with a finite game structure, and players move from vertex to vertex while applying operations on the stack. Those operations are \emph{push} a letter, \emph{pop} a letter or \emph{skip} and can be respectively represented with weights $1$, $-1$ and $0$. The authors show that one-player pushdown games can be solved in polynomial time, thanks to the existence of \emph{pumpable paths}. Moreover, already in this case, $\playerOne$ needs infinite memory to win in mean-payoff pushdown games. In the two-player setting, determining the winner is undecidable. Doing a straight reduction of one-player energy mean-payoff games to one-player mean-payoff pushdown games would lead to a pseudo-polynomial solution, whereas we show here that we can solve the former games in polynomial time. In addition, we cannot use the concept of pumpable paths to obtain those results as the construction of~\cite{ChatterjeeV17} is inherent to the behavior of the stack of mean-payoff pushdown games. Indeed, after one step, the height of the stack can only change of one unity ($+1,-1,0$), whereas in energy mean-payoff games, the energy level can vary from $-W$ to $+W$, for an arbitrarily large integer $W \in \mathbb{N}$.

\subparagraph{Structure of the paper.} In Sect.~2, we introduce the necessary notations and preliminaries to this work. In Sect.~3, we study the one-player energy mean-payoff games. In Sect.~4, we study the two-player energy mean-payoff games.

\section{Preliminaries} \label{sec:prelim}

In this section, we introduce energy mean-payoff games and the related decision problems studied in this paper.

\subparagraph{Games structures.} 
A \emph{game structure} is a weighted directed graph $G = (V, V_1,V_2, E, w)$ such that $V_1, V_2$ form a partition of the finite set $V$, $V_i$ is the set of vertices controlled by player~\playerI, $i \in \{1,2\}$, $E \subseteq V \times V$ is the set of edges such that for all $v \in V$, there exists $v' \in V$ such that $(v,v') \in E$, and $w = (w_1,w_2) : E \rightarrow \Z^2$ is a weight function that assigns a pair of weights $w(e) = (w_1(e),w_2(e))$ to each edge $e \in E$. In the whole paper, we denote by $|V|$ the number of vertices of $V$, by $|E|$ the number of edges of $E$, and by $||E|| \in \Nzero$ the largest absolute value used by the weight function $w$. We say that a game structure is a \emph{player-$i$ game structure} when player~\playerI controls all the vertices, that is, $V_i = V$.

A \emph{play} in $G$ from an \emph{initial vertex} $v_0$ is an infinite sequence $\rho = \rho_0\rho_1 \ldots \rho_k \ldots$ of vertices such that $\rho_0 = v_0$ and $(\rho_k,\rho_{k+1}) \in E$ for all $k \geq 0$. A \emph{factor} of $\rho$, denoted by $\rhofactor{k}{\ell}$, is the finite sequence $\rho_k\rho_{k+1} \ldots \rho_\ell$. When $k = 0$, we say that $\rhofactor{0}{\ell}$ is the \emph{prefix} of length $\ell$ of $\rho$. The \emph{suffix} $\rho_k\rho_{k+1} \ldots $ of $\rho$ is denoted by $\rhofactor{k}{\infty}$. The set of plays in $G$ is denoted by $\Plays(G)$ or simply $\Plays$. A path or a cycle is \emph{simple} if there are no two occurrences of the same vertex (except for the first and last vertices in the cycle). A \emph{multicycle} $\cal C$ is a multiset of simple cycles (that may or may not be connected to each other). We extend the weight function $w$ to paths (resp. cycles, multicycles) as the sum $w(\pi) = (w_1(\pi),w_2(\pi))$ of the weights of their edges. In particular, for a multicycle $\cal C$, we have $w({\cal C}) = \sum_{\pi \in {\cal C}} w(\pi)$.

Let us recall the following notion. Given a path $\pi = \pi_0 \pi_1 \cdots \pi_n$, we consider its \emph{cycle decomposition} into a \emph{multiset of simple cycles} as follows. We push successively vertices $\pi _0, \pi_1, \ldots$ onto a stack. Whenever we push a vertex $\pi_\ell$ equal to a vertex $\pi_k$ already in the stack, i.e. a simple cycle $C = \pi_k \cdots \pi_{\ell}$ is formed, we remove this cycle from the stack except $\pi_k$ (we remove all the vertices until reaching $\pi_k$ that we let in the stack) and add $C$ to the cycle decomposition multiset of $\pi$. The cycle decomposition of a play $\rho = \rho_0\rho_1 \ldots $ is defined similarly.

For each dimension $j \in \{1,2\}$, the weight or \emph{energy level} of the prefix $\rhofactor{0}{k}$ of a play $\rho$ is $w_j(\rhofactor{0}{k})$, and the \emph{mean-payoff-inf} (resp. \emph{mean-payoff-sup}) of $\rho$ is $\MPi_j(\rho) = \liminf_{k \rightarrow \infty} \frac{1}{k} \cdot w_j(\rhofactor{0}{k})$ (resp. $\MPs_j(\rho) = \limsup_{k \rightarrow \infty}  \frac{1}{k} \cdot w_j(\rhofactor{0}{k})$). The following properties hold for both mean-payoff values. First, they are prefix-independent, that is, $\MPi_j(\pi \rho) = \MPi_j(\rho)$ and $\MPs_j(\pi \rho) = \MPs_j(\rho)$ for all finite paths $\pi$. Second for a play $\rho = \rho_0 \ldots \rho_{k-1}(\rho_k \dots \rho_l)^\omega$ that is eventually periodic, its mean-payoff-inf and mean-payoff-sup values coincide and are both equal to the average weight of the cycle $\rho_k \dots \rho_l\rho_k$, that is, $\frac{1}{l-k+1} \cdot w_j(\rho_k \dots \rho_l\rho_k)$.

\subparagraph{Strategies.} 
Given a game structure $G$, a \emph{strategy} $\sigma_i$ for player~$\playerI$ is a function $V^* \cdot V_i \to V$ that assigns to each path $\pi v$ ending in a vertex $v \in V_i$ a vertex $v'$ such that $(v, v') \in E$. Such a strategy $\sigma_i$ is \emph{memoryless} if it only depends on the last vertex of the path, i.e. $\sigma_i(\pi v) = \sigma_i(\pi' v)$ for all $\pi v, \pi' v \in V^* \cdot V_i$. It is a \emph{finite-memory} strategy if it can be encoded by a deterministic \emph{Moore machine} ${\cal M} = (M, m_0, \alpha_U, \alpha_N)$ where $M$ is a finite set of states (the memory of the strategy), $m_0 \in M$ is an initial memory state, $\alpha_U : M \times V \rightarrow M$ is an update function, and $\alpha_N : M \times V_i \rightarrow V$ is a next-move function. Such a machine defines a strategy $\sigma_i$ such that $\sigma_i(\pi v) = \alpha_N(\widehat{\alpha}_U(m_0,\pi),v)$ for all paths $\pi v \in V^* \cdot V_i$, where $\widehat{\alpha}_U$ extends $\alpha_U$ to paths as expected. The \emph{memory size} of $\sigma_i$ is then the size $|M|$ of $\cal M$. In particular $\sigma_i$ is memoryless when it has memory size one.

Given a strategy $\sigma_i$ for $\playerI$, a play $\rho$ is \emph{consistent} with $\sigma_i$ if for all its prefixes $\rhofactor{0}{k} \in V^* \cdot V_i$, we have $\rho_{k+1} = \sigma_i(\rhofactor{0}{k})$. A finite path $\pi$ consistent with $\sigma_i$ is defined similarly. Given a finite-memory strategy $\sigma_i$ and its Moore machine ${\cal M}$, we denote by $G(\sigma_i)$ the game structure obtained as the product of $G$ with $\cal M$. Notice that the set of plays from an initial vertex $v_0$ that are consistent with $\sigma_i$ is then exactly the set of plays in $G(\sigma_i)$ starting from $(v_0,m_0)$ where $m_0$ is the initial memory state of $\cal M$.

\subparagraph{Objectives.} 
Given a game structure $G$ and an initial vertex $v_0$, an \emph{objective} for player~$\playerOne$ is a set of plays $\Obj \subseteq \Plays(G)$. Given a strategy $\sigma_1$ for $\playerOne$, we say that $\sigma_1$ is \emph{winning for $\playerOne$ from $v_0$} if all plays $\rho \in \Plays(G)$ from $v_0$ that are consistent with $\sigma_1$ satisfy $\rho \in \Obj$. Given a strategy $\sigma_2$ for $\playerTwo$, we say that $\sigma_2$ is \emph{winning for $\playerTwo$ from $v_0$} if all plays $\rho \in \Plays(G)$ from $v_0$ that are consistent with $\sigma_2$ satisfy $\rho \not\in \Obj$.

We here consider the following objectives for dimension $j \in \{1,2\}$:
\begin{itemize}
\item \emph{Energy objective}. 
Given an \emph{initial credit} $c_0 \in \N$, the objective $\energy_j(c_0) = \{\rho \in \Plays(G) \mid \forall k \geq 0, c_0 + w_j(\rhofactor{0}{k}) \geq 0 \}$ requires that the energy level remains always nonnegative in dimension $j$.  
\item \emph{Mean-payoff-inf objective}. 
The objective $\MPi_j(\sim 0) = \{\rho \in \Plays(G) \mid \MPi_j(\rho) \sim 0 \}$ with ${\sim} \in \{>,\geq\}$ requires that the mean-payoff-inf value is $\sim 0$ in dimension $j$.
\item \emph{Mean-payoff-sup objective}.
The objective $\MPs_j(\sim 0) = \{\rho \in \Plays(G) \mid \MPs_j(\rho) \sim 0 \}$ with ${\sim} \in \{>,\geq\}$ requires that the mean-payoff-sup value is $\sim 0$ in dimension $j$.
\end{itemize}

\begin{remark} \label{rem:threshold}
Notice that it is not a restriction to work with threshold $0$ in mean-payoff-inf/sup objectives. Indeed arbitrary thresholds $\frac{a}{b} \in \Q$ can be reduced to threshold $0$ by replacing the weight function $w$ of $G$ by the function $b \cdot w - a$. Notice also that it is not a restriction to work with integer weights $w_1(e),w_2(e)$ labeling each edge $e \in E$. Indeed, as we work with threshold $0$, an arbitrary weight function $w : E \rightarrow \Q^2$ can be replaced by the function $b \cdot w : E \rightarrow \Z^2$ with an appropriate $b \in \Nzero$.
\end{remark}

\subparagraph{Decision problems.} 
In this paper we consider the following four variants of a decision problem implying an energy objective on the first dimension and a mean-payoff objective on the second dimension. Let ${\sim} \in \{>,\geq\}$:

\begin{itemize}
\item The \emph{energy mean-payoff decision problem $\EMPi^{\sim 0}$} asks, given a game structure $G$ and an initial vertex $v_0$, to decide whether there exist an initial credit $c_0 \in \N$ and a winning strategy $\sigma_1$ for player~$\playerOne$ from $v_0$ for the objective~$\Obj = \energy_1(c_0) \cap \MPi_2(\sim 0)$.
\item The \emph{energy mean-payoff decision problem $\EMPs^{\sim 0}$} asks, given a game structure $G$ and an initial vertex $v_0$, to decide whether there exist an initial credit $c_0 \in \N$ and a winning strategy $\sigma_1$ for player~$\playerOne$ from $v_0$ for the objective~$\Obj = \energy_1(c_0) \cap \MPs_2(\sim 0)$.
\end{itemize}

In this context, we also use the terminology of \emph{energy mean-payoff objectives} or \emph{energy mean-payoff games}. 


\begin{figure}[h]
\begin{minipage}[c]{.45\linewidth}
\centering
  \begin{tikzpicture}[scale=4]
    \everymath{\scriptstyle}
    \draw (0,0) node [circle, draw] (A) {$v_0$};
    \draw (0.75,0) node [circle, draw] (B) {$v_1$};
    
	\draw[->,>=latex] (A) to[bend left] node[above,midway] {$(0,-1)$} (B);
	\draw[->,>=latex] (B) to[bend left] node[below,midway] {$(0,-1)$} (A);
	
    \draw[->,>=latex] (B) .. controls +(45:0.4cm) and +(135:0.4cm) .. (B) node[above,midway] {$(-1,3)$};
    \draw[->,>=latex] (A) .. controls +(45:0.4cm) and +(135:0.4cm) .. (A) node[above,midway] {$(1,-1)$};
	\path (-0.2,0) edge [->,>=latex] (A);    
    
    \end{tikzpicture}
\caption{Energy mean-payoff game where $\playerOne$ wins with finite-memory for problems $\EMPi^{> 0}$ and $\EMPs^{> 0}$.}
\label{fig::EMP:finiteMemory}
\end{minipage}
\hspace{.4cm}
\begin{minipage}[c]{.45\linewidth}
\centering
  \begin{tikzpicture}[scale=4]
    \everymath{\scriptstyle}
    \draw (0,0) node [circle, draw] (A) {$v_0$};
    \draw (0.75,0) node [circle, draw] (B) {$v_1$};
    
	\draw[->,>=latex] (A) to[bend left] node[above,midway] {$(0,-1)$} (B);
	\draw[->,>=latex] (B) to[bend left] node[below,midway] {$(0,-1)$} (A);
	
    \draw[->,>=latex] (B) .. controls +(45:0.4cm) and +(135:0.4cm) .. (B) node[above,midway] {$(-1,1)$};
    \draw[->,>=latex] (A) .. controls +(45:0.4cm) and +(135:0.4cm) .. (A) node[above,midway] {$(1,-1)$};
	\path (-0.2,0) edge [->,>=latex] (A);    
    
    \end{tikzpicture}
\caption{Energy mean-payoff game where $\playerOne$ needs infinite memory to win for problems $\EMPi^{\geq 0}$ and $\EMPs^{\geq 0}$.}
\label{fig::EMP:infiniteMemory}
\end{minipage}
\end{figure}

\subparagraph{Introductory examples.} We provide two examples to illustrate the introduced concepts. 

\begin{example} \label{ex:memfinie}
Consider the player-1 game structure $G$ depicted in Figure~\ref{fig::EMP:finiteMemory}. Consider the cycle $C = v_0v_0v_0v_1v_1v_1v_0$ that loops twice on $v_0$, goes to $v_1$, loops twice on $v_1$, and comes back to $v_0$. Observe that $w(C) = (w_1(C),w_2(C)) = (0,2)$. Hence $\playerOne$ has a winning strategy, that consists in looping forever in this cycle $C$, for all four variants of the energy mean-payoff decision problem.
\end{example}

The second example will be useful later in this article. A similar example is given in~\cite{ChatterjeeV17}.

\begin{example} \label{ex:meminfinie} 
Consider the player-1 game structure $G$ depicted on Figure~\ref{fig::EMP:infiniteMemory}. It differs from the game structure of Figure~\ref{fig::EMP:finiteMemory} only by the weight $(-1,1)$ (instead of $(-1,3)$) of the edge $(v_1,v_1)$. We are going to show that $\playerOne$ has a winning strategy for both problems $\EMPi^{\geq 0}$ and $\EMPs^{\geq 0}$ with initial credit $c_0 = 0$. Notice that when there is only one player, the existence of a winning strategy for $\playerOne$ in the energy mean-payoff decision problem  is equivalent to the existence of a play belonging to the energy mean-payoff objective $\energy_1(c_0) \cap \MPi_2(\sim 0)$ (or $\energy_1(c_0) \cap \MPs_2(\sim 0)$) for some $c_0$. 

First, we show that the answer to both problems $\EMPi^{\geq 0}$ and $\EMPs^{\geq 0}$ is {\sf No} if $\playerOne$ only uses \emph{finite-memory} strategies. Indeed, any finite-memory strategy induces an outcome $\rho$ that eventually loops in some cycle $C$ of $G$. Let $C_0$ (resp. $C_1$, $C_2$) be the simple cycle $(v_0,v_0)$ (resp. $(v_1,v_1)$, $(v_0,v_1,v_0)$). If $\rho$ eventually loops forever on either cycle $C_0$ or cycle $C_1$, then clearly the mean-payoff objective or the energy objective is not satisfied. Therefore, cycle $C_2$ has to be taken in $C$ and we can assume that $C_0$ is visited $\alpha \in \N$ times, $C_1$ is visited $\beta \in \N$ times and $C_2$ is visited $\gamma \in \Nzero$ times along $C$. From equation $w(C) = \alpha \cdot w(C_0) + \beta \cdot w(C_1) + \gamma \cdot w(C_2) = (\alpha - \beta, -\alpha+\beta-2\cdot \gamma)$, we need to have $\alpha - \beta \geq 0$ for the energy objective. Indeed, if $\alpha - \beta < 0$, then for all initial credits $c_0$, the energy of $\rho$ will eventually drop below $0$. We also need to have $-\alpha+\beta-2\cdot \gamma \geq 0$ for the mean-payoff objective as $\MPi_2(\rho) = \MPs_2(\rho)$ is equal to the average weight of the cycle $C$. However as $\alpha - \beta \geq 0$ and $\gamma > 0$, then $-\alpha+\beta-2\cdot \gamma < 0$. This shows that $\playerOne$ cannot win under finite-memory strategies.

Let us now show that with \emph{infinite-memory} strategies, the answer to both problems $\EMPi^{\geq 0}$ and $\EMPs^{\geq 0}$ is {\sf Yes}. Let us first indicate how $\playerOne$ can win for the objective $\Obj = \energy_1(c_0) \cap \MPi_2(\geq 0)$ with $c_0 =0$. Consider the following strategy $\sigma_1$ for $\playerOne$: 
\begin{enumerate}
\item Initialize $Z = 1$
\item At round $Z$
\begin{enumerate}
\item Loop $Z$ times in cycle $C_0$
\item Take edge $(v_0,v_1)$
\item Loop $Z$ times in cycle $C_1$
\item Take edge $(v_1,v_0)$
\end{enumerate}
\item Increment $Z$ by $1$ and goto 2.
\end{enumerate}
Let us show that  $\rho \in \Obj$ where $\rho$ is the play from $v_0$ consistent with $\sigma_1$. Clearly, the energy level on the first dimension never drops below zero by construction, thus we only focus on the second dimension. Intuitively, the mean-payoff-inf value of $\rho$ will be nonnegative since the average weight at round $Z$ is of the form $\frac{-Z}{Z^2}$ which converges to $0$. Let us explain why in more details. Consider any prefix $\pi = \rhofactor{0}{k}$ of $\rho$ during round $Z$. Then one can check that $k \leq \sum_{l=1}^{Z}(2 \cdot l+2) = (Z-1)(Z+2)$ (each complete round $l$ uses $2 \cdot l+2$ edges) and that $w_2(\pi) \geq -3 \cdot Z + 1$ which is the energy level just after performing (b) (since afterwards the sum of weights grows when looping $Z$ times in $C_1$). Thus during round $Z$, $\frac{1}{k} \cdot w_2(\rhofactor{0}{k}) \geq \frac{-3 \cdot Z+1}{(Z-1)(Z+2)}$ which converges to 0 when $Z \rightarrow \infty$. It follows that $\MPi_2(\rho) \geq 0$. This shows that $\rho \in \Obj$.

Finally, notice that the above argument also holds for $\Obj = \energy_1(c_0) \cap \MPs_2(\geq 0)$ with $c_0 = 0$ since $\MPs_2(\rho) \geq \MPi_2(\rho)$ for any play $\rho$. 
\end{example}

\section{One-player setting} \label{sec:OneP}

Within this section, we investigate player-1 game structures, that is, game structures where player~$\playerOne$ is the only one to play. In this context, $\playerOne$ has a winning strategy for the energy mean-payoff objective for some initial credit $c_0$  if and only if there exists a play belonging to this objective. For player-1 game structures, we show that the energy mean-payoff decision problem can be solved in polynomial time for all of its four variants. However depending on the used relation ${\sim} \in \{>,\geq\}$ for the mean-payoff objective, memory requirements for winning strategies of $\playerOne$ differ. We already know that $\playerOne$ needs infinite memory in case of non-strict inequalities by Example~\ref{ex:meminfinie}. In case of strict inequalities, we show that finite-memory strategies are sufficient for $\playerOne$. All these results will be useful in Section~\ref{sec:twoP} when we will investigate the general case of two-player energy mean-payoff games.

\begin{theorem} \label{thm:onePlayer}
The energy mean-payoff decision problem for player-1 game structures can be solved in polynomial time. Moreover,
\begin{itemize}
\item for both problems  $\EMPi^{> 0}$ and $\EMPs^{> 0}$, pseudo-polynomial-memory strategies are sufficient and necessary for $\playerOne$ to win;
\item for both problems $\EMPi^{\geq 0}$ and $\EMPs^{\geq 0}$, in general, $\playerOne$ needs infinite memory to win.
\end{itemize}
\end{theorem}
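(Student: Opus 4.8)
As noted at the start of this section, in a player-1 game $G$ a winning strategy for $\playerOne$ (for some initial credit) exists iff $G$ contains a play in the corresponding objective, so the whole statement rests on a combinatorial description of such game structures --- the content of Theorem~\ref{thm:caracterisation} (for ${\sim} = {>}$) and Theorem~\ref{thm:caracterisationLarge} (for ${\sim} = {\geq}$). The plan is therefore: (a) establish these two characterisations; (b) derive from them a polynomial-time decision procedure; (c) derive the memory bounds. The expected shape of the characterisations is that $\playerOne$ wins iff some \emph{reachable} strongly connected component (SCC) $S$ of $G$ contains a multicycle $\mathcal{C}$ with $w_1(\mathcal{C}) \geq 0$ and $w_2(\mathcal{C}) \sim 0$ (for the non-strict case, together with a condition guaranteeing that the energy can be kept nonnegative throughout, as the witnessing play is then built by a ``rounds'' construction as in Example~\ref{ex:meminfinie} rather than by looping on a fixed cycle). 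For ${\sim} = {>}$ one moreover shows $\mathcal{C}$ can be taken \emph{connected}, hence realisable as a single closed walk: a component of $\mathcal{C}$ with positive first weight (which must exist if some component has negative first weight) can be amplified to pay, in the first dimension, for the edges linking the remaining components; this is why finite memory suffices when ${\sim} = {>}$ but not when ${\sim} = {\geq}$ (cf.\ Example~\ref{ex:meminfinie}, where a good multicycle exists but no good closed walk does).

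\textbf{Polynomial time.} The reachable SCCs are found in linear time. Inside a fixed SCC $S$, the weight vectors realised by circulations supported on $S$ form the polyhedral cone generated by the simple-cycle weight vectors of $S$ (flow decomposition), so the existence of a circulation $f$ with $\sum_e f(e)\, w_1(e) \geq 0$ and $\sum_e f(e)\, w_2(e) \geq 1$ --- the strict inequality $w_2 > 0$ being scale-invariant --- is the feasibility of a linear program with $O(|E|)$ variables, $O(|V|)$ conservation constraints and coefficients bounded by $||E||$, decidable in polynomial time. The side condition of Theorem~\ref{thm:caracterisationLarge} is of the same nature and again reduces to polynomially many linear-programming / positive-cycle tests. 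Hence the four decision problems are in polynomial time.

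\textbf{Memory, strict case.} \emph{Sufficiency:} a basic feasible solution of the above linear program, scaled by the least common denominator, is an integer circulation --- a multicycle --- of total size polynomial in $|V|$, $|E|$ and $||E||$, hence pseudo-polynomial in $|G|$; connecting its components as in step~(a) yields a closed walk $D$ in $S$ with $w_1(D) \geq 0$, $w_2(D) > 0$ and $|D|$ still pseudo-polynomial. The strategy ``reach $D$ along a shortest path, then loop on $D$ forever'' has memory $O(|V| + |D|)$ and is winning: with $c_0 \approx |D| \cdot ||E||$ the energy never drops below $0$, since each full traversal of $D$ is energy-nondecreasing and within one traversal it decreases by at most $|D| \cdot ||E||$, while $\MPi_2 = \MPs_2 = w_2(D)/|D| > 0$ because the play is ultimately periodic with period $D$. \emph{Necessity:} consider the family $(G_M)_{M \geq 1}$ with vertices $v_0, v_1$, self-loops $(1,-1)$ on $v_0$ and $(-M, M+3)$ on $v_1$ and edges $(v_0,v_1)$, $(v_1,v_0)$ both labelled $(0,-1)$, so $|V| = 2$ and $||E|| = M+3$. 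Writing $C_0, C_1, C_2$ for the simple cycles as in Example~\ref{ex:meminfinie}, a closed walk $D = \alpha C_0 + \beta C_1 + \gamma C_2$ (so $\gamma \geq 1$) with $w_1(D) = \alpha - \beta M \geq 0$ and $w_2(D) = -\alpha + \beta(M+3) - 2\gamma > 0$ forces $\beta M \leq \alpha < \beta(M+3) - 2\gamma$, hence length $\alpha + \beta + 2\gamma \geq M + 3$, with equality for $(\alpha, \beta, \gamma) = (M, 1, 1)$; and $\playerOne$ does win by looping on that $D$ with $c_0 = 0$. Since the outcome of a finite-memory strategy of memory size $m$ is ultimately periodic with minimal period $p \leq |V| \cdot m$, and that period --- as a closed walk --- must have $w_1 \geq 0$ and $w_2 > 0$ to be winning, we get $p \geq M+3$ and thus $m \geq (M+3)/2$, pseudo-polynomial in $|G_M|$. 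The same $G_M$ shows finite memory can be achieved (with $\Theta(M)$ states).

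\textbf{Non-strict case; main obstacle.} For $\EMPi^{\geq 0}$ and $\EMPs^{\geq 0}$, Example~\ref{ex:meminfinie} already exhibits a player-1 instance where $\playerOne$ wins but no finite-memory strategy does, proving the infinite-memory lower bound; the matching polynomial-time upper bound is the one above via Theorem~\ref{thm:caracterisationLarge}. The main obstacle is step~(a), i.e.\ the ``hard'' direction of the two characterisations: from an arbitrary (possibly infinite-memory) winning play, extract the advertised multicycle. This should follow by analysing the cycle decomposition of a winning play --- prefix-independence of the mean-payoff together with the energy lower bound constrain the ``net'' weight of the cycles used infinitely often to have $w_1 \geq 0$ and $w_2 \sim 0$ --- but making this precise inside a single reachable SCC, and (for ${\sim} = {>}$) producing a \emph{bounded-size connected} witness, is the delicate part, and it is precisely this size bound that yields the pseudo-polynomial memory figure.
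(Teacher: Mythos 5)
Your proposal is correct in its main claims but reaches them by a genuinely different route in two places. For the polynomial-time bound, the paper does not use linear programming: it reduces the existence of a reachable good cycle (resp.\ multicycle) to the existence of a zero-weight cycle (resp.\ multicycle) in a graph augmented with self-loops of weight $(-1,0)$ and $(0,-1)$, and invokes the polynomial-time zero-cycle detection of Kosaraju and Sullivan (Theorem~\ref{thm:cyclepoly}). Your LP over circulations on each reachable SCC is an equally valid test --- flow decomposition does identify realisable circulation weights with the cone of simple-cycle weights, and a basic feasible solution does give a pseudo-polynomial integer witness --- but it buys you nothing the combinatorial reduction does not, and it silently relies on the equivalence ``multicycle with $w_1\geq 0$, $w_2>0$ inside one SCC $\Leftrightarrow$ good (connected) cycle'', which is exactly Proposition~\ref{prop:goodcycle} and must be proved. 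For the pseudo-polynomial memory lower bound, the paper uses a two-vertex family with connecting edges of weight $(W,-W)$ and argues at a shifted threshold $>-\epsilon$ (licensed by Remark~\ref{rem:threshold}); your family $G_M$ works directly at threshold $>0$ and your period-length computation is correct, so this is a clean alternative. The infinite-memory claim via Example~\ref{ex:meminfinie} and the overall architecture (reduce everything to Theorems~\ref{thm:caracterisation} and~\ref{thm:caracterisationLarge}, which the paper's own proof of this theorem also simply invokes) match the paper.

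One step of your sketch would fail as written: to connect a good multicycle into a single closed walk you propose to ``amplify a component with positive first weight to pay, in the first dimension, for the linking edges.'' Amplifying only that component can make $w_2$ of the total arbitrarily negative (take $w(C)=(-1,2)$, $w(C')=(1,-1)$ and linking paths of weight $(-5,-5)$: adding copies of $C'$ alone restores dimension~1 but drives dimension~2 below $0$). The correct argument, which is the content of Lemma~\ref{lem:goodcycle} and the backward direction of Proposition~\ref{prop:goodcycle}, first shows that either some simple cycle is itself good, or the cone generated by two suitable simple cycles meets the \emph{open} positive quadrant (the angle-$<180^o$ condition); one then scales the \emph{whole} combination, in the fixed proportion $a:b$ with $a=xx'+yy'$, $b=x^2+y^2$, until both coordinates strictly dominate the linking paths. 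You flag the characterisation theorems as the main obstacle and defer them, which is consistent with how the paper organises the proof, but this particular micro-justification should be replaced by the two-case argument above; your mean-payoff and energy verifications for the resulting ultimately periodic play, and everything downstream of the characterisations, are sound.
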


To prove Theorem~\ref{thm:onePlayer}, we will characterize the existence of a winning strategy for $\playerOne$ for some initial credit $c_0$ by the existence of a particular cycle or multicycle, that we call \emph{good}.

\begin{definition}\label{def:goodCycle}
Let $G$ be a game structure and $v_0$ be an initial vertex. 
\begin{itemize}
\item We say that a cycle~$C$  is a \emph{good cycle} if $w_1(C) \geq 0$ and $w_2(C) > 0$. A good cycle $C$ is \emph{reachable} if it is reachable from $v_0$. 
\item We say that a multicycle $\cal C$ is a \emph{good multicycle} if $w_1({\cal C}) \geq 0$ and $w_2({\cal C}) \geq 0$. A good multicyle $\cal C$ is \emph{reachable} if all its simple cycles are in the same connected component reachable from~$v_0$. 
\end{itemize}
\end{definition}

There exists a simple characterization of the existence of a winning strategy for $\playerOne$ for either the objective $\energy_1(c_0) \cap \MPi_2(> 0)$ or the objective $\energy_1(c_0) \cap \MPs_2(> 0)$ for some initial credit $c_0$: both are equivalent to the existence of a reachable good cycle. 

\begin{theorem}\label{thm:caracterisation}
Let $G$ be a player-1 game structure and $v_0$ be an initial vertex. The following assertions are equivalent.
\begin{enumerate}
\item There exist an initial credit $c_0$ and a winning strategy for $\playerOne$ from $v_0$ for the objective $\energy_1(c_0) \cap \MPi_2(> 0)$.
\item There exist an initial credit $c_0$ and a winning strategy for $\playerOne$ from $v_0$ for the objective $\energy_1(c_0) \cap \MPs_2(> 0)$.
\item There exists a reachable good cycle.
\end{enumerate}
\end{theorem}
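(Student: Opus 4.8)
The plan is to prove the chain of implications $(1)\Rightarrow(2)\Rightarrow(3)\Rightarrow(1)$, which in fact reduces to $(1)\Rightarrow(3)$ and $(3)\Rightarrow(1)$ together with the observation that $\MPi_2(\rho)>0$ implies $\MPs_2(\rho)>0$, giving $(1)\Rightarrow(2)$ for free; and, symmetrically, since the objective in $(2)$ is weaker than that in $(1)$, it suffices to show $(2)\Rightarrow(3)$ and $(3)\Rightarrow(1)$. So I will establish $(3)\Rightarrow(1)$ and $(2)\Rightarrow(3)$.

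For $(3)\Rightarrow(1)$, suppose there is a reachable good cycle $C$ with $w_1(C)\geq 0$ and $w_2(C)>0$. Let $\pi$ be a finite path from $v_0$ to some vertex $v$ on $C$, and let $c_0 = |w_1(\pi)| + |V|\cdot\|E\|$ (or any sufficiently large value that covers the most negative energy dip along $\pi$ and within one traversal of $C$). The strategy $\sigma_1$ is: follow $\pi$, then loop in $C$ forever. The resulting play $\rho = \pi\cdot C^\omega$ is eventually periodic. On the first dimension, the energy level after $\pi$ is $c_0 + w_1(\pi)\geq 0$, and since each full traversal of $C$ adds $w_1(C)\geq 0$, the energy never drops below $c_0 - (\text{max negative partial sum along }\pi) - (\text{max negative partial sum within }C)$, which is nonnegative by choice of $c_0$; hence $\rho\in\energy_1(c_0)$. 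On the second dimension, by the prefix-independence of $\MPi_2$ and the fact that for an eventually periodic play the mean-payoff equals the average weight of the repeated cycle, $\MPi_2(\rho) = \frac{w_2(C)}{|C|} > 0$. Thus $\rho\in\energy_1(c_0)\cap\MPi_2(>0)$, and since $G$ is a player-1 structure this play witnesses a winning strategy, giving $(1)$ (and a fortiori $(2)$).

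For $(2)\Rightarrow(3)$ (which also covers $(1)\Rightarrow(3)$ since the objective in $(1)$ is stronger), assume $\playerOne$ has a winning strategy $\sigma_1$ from $v_0$ with some initial credit $c_0$ for $\energy_1(c_0)\cap\MPs_2(>0)$; in a player-1 structure this means there is a play $\rho$ from $v_0$ with $c_0 + w_1(\rhofactor{0}{k})\geq 0$ for all $k$ and $\limsup_k \frac{1}{k}w_2(\rhofactor{0}{k}) > 0$. Consider the cycle decomposition of $\rho$ into simple cycles as defined in the preliminaries: at each prefix the weight $w_j(\rhofactor{0}{k})$ equals the sum of the weights of the simple cycles extracted so far plus the weight of the bounded-length stack content. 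I want to extract from $\rho$ a single simple cycle $C$ with $w_1(C)\geq 0$ and $w_2(C)>0$, reachable from $v_0$ (it is, since every extracted cycle touches a vertex visited by $\rho$, hence reachable). First, the energy constraint forces that no simple cycle extracted infinitely often can have $w_1 < 0$: if $C$ with $w_1(C)<0$ were extracted infinitely often while all cofinitely-often cycles have $w_1 \le 0$... more carefully, the set of simple cycles is finite, so some finite sub-multiset $\mathcal{D}$ of them is extracted infinitely often; since the stack content is bounded by $|V|\cdot\|E\|$ in absolute value on each dimension, along the subsequence of prefixes after the last "transient" cycle, $w_1(\rhofactor{0}{k})$ is, up to a bounded additive term, a nonnegative integer combination of the $w_1$ of cycles in $\mathcal{D}$; boundedness from below of energy then forces every cycle in $\mathcal{D}$ to satisfy $w_1 \geq 0$ — indeed if one had $w_1(C)<0$ it would be extracted arbitrarily many more times than... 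I need the cleaner argument: if $w_1(C)<0$ for some $C$ extracted infinitely often, then because the other infinitely-extracted cycles have... this is exactly where a Ramsey/averaging argument is needed. Second, the $\MPs_2>0$ condition: the average weight $\frac{1}{k}w_2(\rhofactor{0}{k})$ along the subsequence realizing the limsup tends to a positive value, and since this equals (up to the bounded stack term, divided by $k\to\infty$) the weighted average of $w_2$ over the extracted cycles, some infinitely-extracted simple cycle $C^*$ must have $w_2(C^*)>0$. The subtlety is that the cycle with $w_2>0$ need not be the same as a cycle with $w_1\geq 0$ — but since \emph{all} infinitely-extracted cycles have $w_1\geq 0$ (by the first point), $C^*$ itself satisfies $w_1(C^*)\geq 0$ and $w_2(C^*)>0$, so $C = C^*$ is the desired reachable good cycle.

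The main obstacle is making the cycle-decomposition accounting rigorous: one must argue that the energy-boundedness of $\rho$ genuinely forces every recurrently-extracted simple cycle to have nonnegative first-dimension weight, and that the strictly positive $\MPs_2$ forces some recurrently-extracted cycle to have strictly positive second-dimension weight, all while controlling the bounded-but-nonzero contribution of the stack content (whose weight lies in $[-(|V|-1)\|E\|,(|V|-1)\|E\|]$ on each coordinate) and handling the finitely many "transient" cycles that are extracted only finitely often. I expect to phrase this via: pick the time $N$ after which only recurrent cycles are ever extracted again; for $k > N$, write $w_j(\rhofactor{0}{k}) = w_j(\rhofactor{0}{N}) + \sum_{C \text{ recurrent}} n_C(k)\, w_j(C) + (\text{stack term})$ where $n_C(k)\to\infty$ for at least one $C$ and $n_C(k)\geq 0$; then the lower-boundedness of the first expression over all $k$ forces $w_1(C)\geq 0$ for each recurrent $C$, and dividing the second expression by $k$ and taking limsup, positivity forces $w_2(C)>0$ for some recurrent $C$.
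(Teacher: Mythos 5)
Your direction $(3)\Rightarrow(1)$ (reach the good cycle with a large enough initial credit and loop forever) is correct and is exactly the paper's Lemma~\ref{lem:cycle}. The problem is in $(2)\Rightarrow(3)$, where your argument contains a genuine gap: the key intermediate claim that energy-boundedness forces \emph{every} recurrently-extracted simple cycle to have $w_1\geq 0$ is false. In the game of Figure~\ref{fig::EMP:finiteMemory}, the winning play loops forever on $v_0v_0v_0v_1v_1v_1v_0$; its cycle decomposition extracts the simple cycle $(v_1,v_1)$ with weight $(-1,3)$ infinitely often, yet the energy stays bounded below because the recurrent cycle $(v_0,v_0)$ with weight $(1,-1)$ compensates. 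A nonnegative integer combination $\sum_C n_C(k)\,w_1(C)$ can remain bounded below while some $w_1(C)<0$, as long as the counts grow together; so the ``forcing'' step you rely on does not hold, and you correctly sensed this when you wrote that a Ramsey/averaging argument is still needed. Worse, your conclusion --- a single \emph{simple} cycle $C^*$ with $w_1(C^*)\geq 0$ and $w_2(C^*)>0$ --- is strictly stronger than assertion $(3)$ and is false in general: in that same example the three simple cycles have weights $(1,-1)$, $(-1,3)$ and $(0,-2)$, none of which is a good cycle, yet $\playerOne$ wins for $\energy_1(c_0)\cap\MPi_2(>0)$. The good cycle whose existence the theorem asserts may have to be non-simple (this is precisely case~(2) of Proposition~\ref{prop:goodcycle}).

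To repair the approach along your lines you would need two additional ingredients that are absent from the proposal: first, an argument producing a nonnegative integer combination of recurrent simple cycles with total weight $\geq 0$ in dimension $1$ and $>0$ in dimension $2$ (this already requires something like choosing prefixes that minimize the accumulated $w_1$ among those realizing the $\limsup$ of $\tfrac{1}{k}w_2$, i.e., the paper's local-minima surgery in disguise); and second, a construction turning such a combination into an actual cycle of the graph, which is nontrivial because the connecting paths between the simple cycles contribute weights that must be absorbed by scaling up the loop counts (Lemma~\ref{lem:goodcycle} and the $\alpha,\beta$ construction in the proof of Proposition~\ref{prop:goodcycle}). The paper sidesteps the cycle decomposition entirely: it fixes a vertex $v$ attaining infinitely many local minima of the energy, observes that if every factor between two such minima had $w_2\leq 0$ one could splice a prefix $\pi_N$ with arbitrarily large $w_2$ (and $w_1$ no smaller than the minimal one) onto a segment returning to the next local minimum, yielding directly a reachable, generally non-simple, good cycle.
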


In case of non-strict inequalities, there exists also a simple characterization: $\playerOne$ can win for either the objective $\energy_1(c_0) \cap \MPi_2(\geq 0)$ or the objective $ \energy_1(c_0) \cap \MPs_2(\geq 0)$ for some initial credit $c_0$ if and only if there exists a reachable good multicycle. 

\begin{theorem}\label{thm:caracterisationLarge}
Let $G$ be a player-1 game structure and $v_0$ be an initial vertex. The following assertions are equivalent.
\begin{enumerate}
\item There exist an initial credit $c_0$ and a winning strategy for $\playerOne$ from $v_0$ for the objective $\energy_1(c_0) \cap \MPi_2(\geq 0)$.
\item There exist an initial credit $c_0$ and a winning strategy for $\playerOne$ from $v_0$ for the objective $\energy_1(c_0) \cap \MPs_2(\geq 0)$.
\item There exists a reachable good multicycle.
\end{enumerate}
\end{theorem}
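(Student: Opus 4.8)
The plan is to establish the three implications, the only easy one being $(1)\Rightarrow(2)$: since $\MPi_2(\rho)\le\MPs_2(\rho)$ for every play $\rho$, any strategy witnessing $(1)$ (with the same initial credit) also witnesses $(2)$. The two substantial directions are $(2)\Rightarrow(3)$ and $(3)\Rightarrow(1)$.

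For $(2)\Rightarrow(3)$, I would fix a play $\rho$ from $v_0$ and $c_0\in\N$ with $c_0+w_1(\rhofactor{0}{k})\ge 0$ for all $k$ and $\MPs_2(\rho)\ge 0$. Let $U$ be the set of vertices visited infinitely often by $\rho$; it is strongly connected, reachable from $v_0$, and some suffix of $\rho$ stays in $U$. Look at the cycle decomposition of that suffix: each of its prefixes of length $k$ splits into a simple path (of bounded weight) and a multiset $\mathcal D_k$ of simple cycles inside $U$. The energy bound gives $w_1(\mathcal D_k)\ge -c_1$ for \emph{all} $k$ and a fixed constant $c_1$; the condition $\MPs_2(\rho)\ge 0$ gives, for every $\varepsilon>0$, infinitely many $k$ with $w_2(\mathcal D_k)\ge -\varepsilon k-c_2$. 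Writing $\mathcal D_k$ through its multiplicity vector over the finite set of simple cycles of $U$ and dividing by the total multiplicity $T_k$ (which tends to $\infty$), one obtains points of the fixed rational polytope $P=\mathrm{conv}\{(w_1(C),w_2(C)) : C\text{ simple cycle of }U\}$ whose first coordinate is $\ge -c_1/T_k$ and whose second coordinate is $\ge -\varepsilon|V| - c_2'/T_k$. Letting $\varepsilon\to 0$ along a suitable subsequence, compactness of $P$ yields a point of $P\cap(\R_{\ge 0})^2$; this intersection is a nonempty rational polytope, hence contains a rational point, and clearing denominators turns it into a nonzero nonnegative integer combination of simple cycles of $U$, i.e. a reachable good multicycle.

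For $(3)\Rightarrow(1)$: if there is a reachable good cycle we are done by Theorem~\ref{thm:caracterisation} (a winner for $\MPi_2(>0)$ is in particular a winner for $\MPi_2(\ge 0)$), so assume there is none. Let $C^{(1)},\dots,C^{(r)}$ be the simple cycles of a reachable good multicycle $\mathcal C$ with multiplicities $\mu_1,\dots,\mu_r\ge 1$, all inside a strongly connected component $S$ reachable from $v_0$, and write $W_j=\sum_i\mu_i w_j(C^{(i)})\ge 0$. A short case analysis using ``no reachable good cycle'' shows that either all $C^{(i)}$ have weight $(0,0)$ — then reach $S$ and loop one cycle forever, giving $\MPi_2=0$ and constant energy — or some cycle $C^\star$ of $\mathcal C$ has $w_1(C^\star)>0$. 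In the latter case I would generalize the strategy of Example~\ref{ex:meminfinie}, playing inside $S$ in rounds $Z=1,2,\dots$. In round $Z$, I first perform all cycles $C^{(i)}$ with $w_1(C^{(i)})\ge 0$, each about $Z\mu_i$ times, plus (only if $W_1=0$) a fixed number of extra copies of $C^\star$ chosen once and for all so as to dominate the bounded ``navigation'' overhead between cycles — this raises the energy by $\Theta(Z)$; then I perform the remaining cycles ($w_1(C^{(i)})<0$) about $Z\mu_i$ times each, bringing the energy back to its round‑start value plus a nonnegative amount. Hence the energy never drops below (round‑start value) minus a fixed constant, and the round‑start values are bounded below by a fixed constant, so a single finite initial credit $c_0$ suffices (also absorbing the finite path from $v_0$ to $S$). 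For the mean‑payoff: round $Z$ has length $\Theta(Z)$ and net second‑coordinate weight $ZW_2$ minus a bounded overhead, so after round $Z$ the elapsed length is $\Theta(Z^2)$ while the second‑coordinate value never falls below $-\Theta(Z)$; therefore the running averages $\tfrac1k w_2(\rhofactor{0}{k})$ tend to $0$ from below at their worst points, giving $\MPi_2\ge 0$, and $(1)$ follows.

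The main obstacle is the bookkeeping in $(3)\Rightarrow(1)$, specifically controlling the energy dimension when $W_1=0$: the fixed navigation cost between the cycles of $\mathcal C$ is paid once per round, so a naive construction lets it accumulate linearly and the energy becomes unbounded below. The fix is the observation that, unless $\mathcal C$ is a trivial all‑zero multicycle (handled separately), $\mathcal C$ must contain a cycle with strictly positive first weight, which can be looped a fixed extra number of times per round to make each round net‑positive on the energy dimension while perturbing the mean‑payoff dimension only by $O(Z)$ over $\Theta(Z^2)$ steps. A secondary delicate point is the convex‑geometry step in $(2)\Rightarrow(3)$: one must pass from an arbitrary, possibly infinite‑memory, winning play to a \emph{finite} good multicycle, and the key is precisely that the normalized multiplicity vectors of the cycle decomposition lie in a fixed rational polytope whose intersection with the nonnegative quadrant, being a nonempty rational polytope, contains an integer‑scalable point.
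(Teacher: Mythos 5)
Your proof is correct, but for the direction $(2)\Rightarrow(3)$ it takes a genuinely different route from the paper. The paper never analyses the cycle decomposition of the winning play directly: it perturbs the second weight function to $w_2+\epsilon$, invokes Theorem~\ref{thm:caracterisation} to obtain a reachable good cycle in each perturbed game $G_\epsilon$, and then uses the finiteness of pairs of simple cycles together with the structural description of good cycles in Proposition~\ref{prop:goodcycle} (the angle $<180^o$ condition) to pass to the limit $\epsilon_n\to 0$ and extract two simple cycles making an angle $\leq 180^o$, hence a good multicycle. You instead work on the play itself: normalized multiplicity vectors of the cycle decomposition of a suffix, compactness of the rational polytope spanned by the simple-cycle weights, and an integer-scalable rational point in its intersection with the nonnegative quadrant. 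This is closer in spirit to the multi-mean-payoff arguments of~\cite{VelnerC0HRR15} (from which the authors explicitly say their proofs differ); it is self-contained and bypasses Proposition~\ref{prop:goodcycle} entirely, at the cost of the convex-geometry step you rightly flag (one needs the rational-polytope argument to recover integer multiplicities). What the paper's detour buys is the two-cycle normal form of Proposition~\ref{prop:goodcycle}, which it then reuses for the polynomial-time algorithm and for $(3)\Rightarrow(1)$. For that last direction your construction is essentially the paper's: the same rounds of length growing linearly in $Z$, the same fix of looping a fixed extra number of times per round on a cycle with strictly positive first weight to absorb the bounded navigation cost (the paper's constant $\gamma$ on the cycle $C'$ with $w_1(C')=x'>0$), and the same estimate that a $\Theta(Z)$ dip over a $\Theta(Z^2)$ prefix length yields $\MPi_2\geq 0$; you merely run the scheme on all $r$ cycles of the multicycle rather than first reducing to two cycles via Lemma~\ref{lem:goodcycle}, and your case split (all cycles of weight $(0,0)$ versus some cycle with $w_1>0$, justified by the absence of a reachable good cycle) is a sound substitute for the paper's case analysis.
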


A similar characterization appears for multi-mean-payoff games and multi-energy games studied in~\cite{VelnerC0HRR15}: when the objective is an intersection of several mean-payoff-inf objectives (resp. several energy objectives), and when he plays alone, $\playerOne$ has a winning strategy if and only if there exists a reachable non negative multicycle (resp. a reachable non negative cycle) in the game structure. Nevertheless, the proofs of those results differ substantially from the proofs of our results.

Let us illustrate the statements of Theorems~\ref{thm:caracterisation} and~\ref{thm:caracterisationLarge} with the two previous Examples~\ref{fig::EMP:finiteMemory} and~\ref{fig::EMP:infiniteMemory}.

\begin{example}
We first come back to the game structure of Figure~\ref{fig::EMP:finiteMemory}. The cycle $C$ mentioned in Example~\ref{ex:memfinie} is a reachable good cycle since $w(C) = (0,2)$. 
    By Theorem~\ref{thm:caracterisation}, it follows that $\playerOne$ is winning for the energy mean-payoff decision problem with strict inequalities (and thus also with non-strict inequalities), as already observed in Example~\ref{ex:memfinie}.

Let us now come back to the player-1 game structure of Figure~\ref{fig::EMP:infiniteMemory}. Recall that there exists an infinite-memory winning strategy for $\playerOne$ for $c_0 = 0$ in case of non-strict inequalities but no finite-memory winning strategy for any $c_0$. By Theorem~\ref{thm:caracterisationLarge}, there should exist a reachable good multicycle. Indeed, consider the multicycle ${\cal C} = \{C,C'\}$ with $C = (v_0,v_0)$ and $C' = (v_1,v_1)$: we have $w({\cal C}) = w(C) + w(C') = (1,-1) + (-1,1) = (0,0)$. Moreover by Theorems~\ref{thm:onePlayer} and~\ref{thm:caracterisation}, there is no reachable good cycle in this game.
\end{example}

The rest of Section~\ref{sec:OneP} is devoted to the proofs of the above mentioned results. This needs several intermediate steps that are detailed below.

\subsection{Characterization in case of strict inequalities}

We begin by providing the proof for the characterization stated in Theorem~\ref{thm:caracterisation}.
Let us first give a definition related to the energy objective.

\begin{definition}\label{def:localMin}
Let $G$ be a game structure and $\rho \in \Plays(G)$ be a play. We say that position $k$ in $\rho$ is a \emph{local minimum} for the energy if $\forall \ell > k$, $w_1(\rhofactor{0}{k}) \leq w_1(\rhofactor{0}{\ell})$.
\end{definition}

Thus, a position $k$ is a local minimum for the energy if from this position, the energy never drops below $w_1(\rhofactor{0}{k})$. In other words, we have $w_1(\rhofactor{k}{\ell}) \geq 0$ for all $\ell > k$. We show that if a play satisfies the energy objective, then necessarily there are infinitely many local minima for the energy in this play.

\begin{lemma}\label{lem:minLocaux}
If a play $\rho$ satisfies an energy objective $\energy_1(c_0)$ with $c_0 \in \N$, then there are infinitely many local minima for the energy in $\rho$.
\end{lemma}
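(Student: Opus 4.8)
The plan is to prove the contrapositive in essence: show that if there are only finitely many local minima for the energy, then the play cannot satisfy $\energy_1(c_0)$ for any $c_0 \in \N$. Fix a play $\rho$ satisfying $\energy_1(c_0)$, so $c_0 + w_1(\rhofactor{0}{k}) \geq 0$ for every $k \geq 0$; equivalently the sequence $(w_1(\rhofactor{0}{k}))_{k \geq 0}$ is bounded below by $-c_0$. The first step is to observe that a bounded-below sequence of integers attains its infimum on every tail: for each index $m$, the set $\{w_1(\rhofactor{0}{\ell}) : \ell \geq m\}$ is a nonempty set of integers bounded below, hence has a least element, realized at some position $k \geq m$. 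By the very definition of local minimum, that position $k$ is a local minimum for the energy, since $w_1(\rhofactor{0}{k}) \leq w_1(\rhofactor{0}{\ell})$ for all $\ell > k$ (indeed for all $\ell \geq m$, a fortiori for all $\ell > k \geq m$).

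The second step is to upgrade ``a local minimum exists in every tail'' to ``there are infinitely many local minima''. This is immediate: given any finite set of local minima, let $m$ be strictly larger than all of them; applying the previous paragraph to this $m$ produces a local minimum at some position $k \geq m$, which is new. Hence no finite set can contain all local minima, so there are infinitely many of them. That completes the argument.

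I do not anticipate a genuine obstacle here — the statement is essentially the observation that the partial-sum sequence $w_1(\rhofactor{0}{k})$ is $\Z$-valued and bounded below, and such a sequence has infinitely many ``record low or tied'' indices that are never undercut afterwards. The only point requiring a modicum of care is making sure the infimum of a tail is actually attained (which uses integrality — the well-ordering of $\N$ after a shift by $c_0$) rather than merely approached; with rational or real weights one would instead have to argue with $\liminf$ and an $\varepsilon$-slack, but here integrality makes it clean. I would present it in the two short steps above, with the integer well-ordering argument stated explicitly as the crux.
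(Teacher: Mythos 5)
Your argument is correct and matches the paper's proof in essence: both exploit that the integer-valued, bounded-below sequence of partial sums attains its infimum on every tail, that any attaining position is a local minimum, and then iterate (the paper builds an explicit increasing sequence $k_0 < k_1 < \cdots$ of such positions, which is just a rephrasing of your "no finite set contains them all" step). Your explicit remark that integrality is what guarantees attainment of the infimum is a point the paper leaves implicit, but the route is the same.
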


\begin{proof}
As $\forall k$, $w_1(\rhofactor{0}{k}) + c_0 \geq 0$, there exists $k_0$ such that $w_1(\rhofactor{0}{k_0}) \leq w_1(\rhofactor{0}{\ell})$ for all $\ell \geq k_0$. Now, as for all $\ell \geq k_0$, $w_1(\rhofactor{0}{k_0}) \leq w_1(\pifactor{0}{\ell})$, there exists $k_1 > k_0$ such that $w_1(\rhofactor{0}{k_1}) \leq w_1(\rhofactor{0}{\ell})$ for all $\ell \geq k_1$. We continue this construction to build a sequence of indexes $(k_n)_{n \geq 0}$ such that each $k_n$ is a local minimum for the energy by construction.
\end{proof}

The next lemma provides a partial proof of Theorem~\ref{thm:caracterisation}. It states that given a reachable good cycle, for a well-chosen initial credit, there exists a winning strategy for $\playerOne$ for problems $\EMPi^{> 0}$ and $\EMPs^{> 0}$ (that consists in reaching the simple good cycle and looping in it). This lemma also states that when the reachable cycle has a weight $\geq (0,0)$, then $\playerOne$ has a winning strategy for problems $\EMPi^{\geq 0}$ and $\EMPs^{\geq 0}$.

\begin{lemma}\label{lem:cycle}
Let $G$ be a game structure and $v_0$ be an initial vertex. 
\begin{itemize}
    \item If $G$ has a reachable good cycle $C$, then there exist an initial credit $c_0$ and a winning strategy for $\playerOne$ from $v_0$ for the objective $\energy_1(c_0) \cap \MPi_2(> 0)$ (resp. $\energy_1(c_0) \cap \MPs_2(> 0)$).
    \item If $G$ has a reachable cycle $C$ such that $w(C) \geq (0,0)$, then there exist an initial credit $c_0$ and a winning strategy for $\playerOne$ from $v_0$ for the objective $\energy_1(c_0) \cap \MPi_2(\geq 0)$ (resp. $\energy_1(c_0) \cap \MPs_2(\geq 0)$).
\end{itemize}
Moreover, when this cycle is simple, the winning strategy is memoryless. 
\end{lemma}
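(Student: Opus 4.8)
The statement to prove is Lemma~\ref{lem:cycle}. The proof is a direct construction: given a reachable good cycle (or a reachable cycle of weight $\geq (0,0)$), $\playerOne$ first navigates from $v_0$ to a vertex on the cycle along some simple path $\pi$, and then loops forever in the cycle. The only subtlety is choosing the initial credit $c_0$ large enough to absorb the energy losses incurred on $\pi$ and \emph{within} a single traversal of the cycle, and checking that the mean-payoff value of the resulting eventually periodic play is as required.

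\begin{proof}
Let $C = c_0' c_1' \cdots c_{p-1}' c_0'$ be the reachable (good) cycle, and let $v = c_0'$. Since $C$ is reachable from $v_0$, fix a finite path $\pi$ from $v_0$ to $v$; we may assume $\pi$ is simple, so $|\pi| \leq |V|$. Consider the play $\rho = \pi \cdot C^\omega$ from $v_0$, and let $\sigma_1$ be the (player-$1$) strategy that produces $\rho$: follow $\pi$, then repeat $C$ forever. If $C$ is a simple cycle, then along $\rho$ the next move depends only on the current vertex (once we are on the cycle, the position in $C$ is determined by the vertex since no vertex repeats in a simple cycle; on $\pi$, likewise if $\pi$ is chosen simple and vertex-disjoint from $C$ except at $v$ — or more simply one observes the outcome is the lasso $\pi C^\omega$ and a lasso with simple loop is realized by a memoryless strategy), so $\sigma_1$ is memoryless.

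\emph{Energy.} Let $M = \max\bigl(0,\ \max_{0 \leq k \leq |\pi|}(-w_1(\pifactor{0}{k})),\ \max_{0 \leq k < p}(-w_1(C[0,k]))\bigr)$, the worst-case energy deficit incurred on a prefix of $\pi$ or on a proper prefix of one traversal of $C$; this is finite and bounded by $(|V|+|V|)\cdot\|E\|$. Put $c_0 = M$. For any prefix $\rhofactor{0}{\ell}$ of $\rho$: if $\ell \leq |\pi|$ then $c_0 + w_1(\rhofactor{0}{\ell}) \geq c_0 - M \geq 0$; if $\ell > |\pi|$, write $\ell = |\pi| + qp + r$ with $0 \leq r < p$, so $w_1(\rhofactor{0}{\ell}) = w_1(\pi) + q\cdot w_1(C) + w_1(C[0,r])$. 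Since $w_1(C) \geq 0$ in both items, $w_1(\rhofactor{0}{\ell}) \geq w_1(\pi) + w_1(C[0,r]) \geq -M$, hence $c_0 + w_1(\rhofactor{0}{\ell}) \geq 0$. Thus $\rho \in \energy_1(c_0)$.

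\emph{Mean-payoff.} The play $\rho = \pi\, C^\omega$ is eventually periodic, so by the prefix-independence of $\MPi_2$ and $\MPs_2$ and the fact (recalled in the preliminaries) that both values of an eventually periodic play equal the average weight of the repeated cycle, we get $\MPi_2(\rho) = \MPs_2(\rho) = \frac{1}{p} w_2(C)$. If $C$ is a good cycle then $w_2(C) > 0$, so this common value is $> 0$ and $\rho$ satisfies both $\MPi_2(> 0)$ and $\MPs_2(> 0)$; if instead $w(C) \geq (0,0)$ then $w_2(C) \geq 0$, so the value is $\geq 0$ and $\rho$ satisfies both $\MPi_2(\geq 0)$ and $\MPs_2(\geq 0)$. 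Combining with the energy bound above, $\sigma_1$ is winning for $\playerOne$ from $v_0$ for the stated objective with initial credit $c_0$, and it is memoryless when $C$ is simple.
\end{proof}

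\medskip

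The only place requiring a little care is the memorylessness claim when $C$ is simple: one must make sure the path $\pi$ used to reach the cycle does not force the strategy to remember how it got there. I would handle this by taking $\pi$ to be a simple path and noting that the outcome is a lasso with simple loop $C$; such a play is the unique play from $v_0$ consistent with the memoryless strategy that, on each vertex of $\pi \cup C$, plays the successor prescribed by $\pi$ then by $C$ (the two prescriptions agree only possibly at $v = c_0'$, where we let $C$ take precedence — this is harmless since once we leave $v$ along $C$ we never return to the ``first visit'' regime). Everything else is the routine bookkeeping above: bounding the transient energy deficit by $O(|V|\cdot\|E\|)$ and invoking prefix-independence of the mean-payoff.
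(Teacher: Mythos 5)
Your proof follows essentially the same route as the paper's: reach the cycle via a simple access path, loop in it forever, and use that the mean-payoff of the eventually periodic play $\pi\, C^\omega$ equals the average weight of $C$, which is $>0$ (resp.\ $\geq 0$) by hypothesis. The one concrete flaw is in the energy bookkeeping: with $c_0 = M$, where $M$ bounds \emph{separately} the worst deficit on a prefix of $\pi$ and the worst deficit on a proper prefix of $C$, your claimed inequality $w_1(\pi) + w_1(C[0,r]) \geq -M$ does not hold in general --- each summand can individually be as low as $-M$, so the sum is only guaranteed to be $\geq -2M$, and the energy could dip to $-M$ with your choice of credit. Taking $c_0 = 2M$, or defining $M$ as the sum of the two maxima, fixes this immediately. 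The paper sidesteps the issue by first rotating the cycle to start at the vertex where the running energy along $C$ is minimal: every prefix of the rotated cycle then has nonnegative $w_1$, so only the access path's deficit, at most $(|V|-1)\cdot ||E||$, must be covered by $c_0$. Your discussion of the memorylessness claim (resolving a possible clash between the prescriptions of $\pi$ and $C$ at shared vertices) is more careful than the paper's, which simply asserts that the lasso is the outcome of a memoryless strategy.
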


\begin{proof}
The first case is easy to prove with $c_0 = (|V|-1) \cdot ||E||$. Indeed, consider a reachable good cycle $\pi = \pi_0 \ldots \pi_k$ and let $v$ be the vertex of $\pi$ where the energy level on the first dimension is the lowest, i.e.,  $v = \pi_j$ where $w_1(\pifactor{0}{j}]) \leq w_1(\pifactor{0}{\ell})$ for all $\ell \in \{0,\ldots,k\}$. Let $v \pi'$ be the good cycle $\pi$ starting from $v$ and $\lambda$ be a simple path from $v_0$ to $v$. As $w_1(v\pi') = w_1(\pi) \geq 0$ and $|w_1(\lambda)| \leq c_0$, the play $\rho = \lambda \pi'^{\omega}$ belongs to $\energy_1(c_0)$. Moreover $\MPi_2(\rho)$ is the average weight (on the second dimension) of the cycle $\pi$ and is thus equal to $w_2(\pi) >0$. Hence we have both $\rho \in \energy_1(c_0) \cap \MPi_2(> 0)$ and $\rho \in \energy_1(c_0) \cap \MPs_2(> 0)$. Notice that if the good cycle is simple, then $\rho$ is the outcome of a memoryless strategy.

The second case is solved similarly.
\end{proof}

We now have all the ingredients to prove Theorem~\ref{thm:caracterisation}.

\begin{proof}[Proof of Theorem~\ref{thm:caracterisation}] We first prove $(3) \Rightarrow (1) \Rightarrow (2)$. Implication $(3) \Rightarrow (1)$ follows from Lemma~\ref{lem:cycle}. Implication $(1) \Rightarrow (2)$ is trivial since $\MPs_2(\rho) \geq \MPi_2(\rho)$ for all plays $\rho$.

We thus focus on implication $(2) \Rightarrow (3)$. Suppose the existence of a play $\rho \in \Plays(G)$ and an initial credit $c_0$ such that $\forall k$, $c_0 + w_1(\rhofactor{0}{k}) \geq 0$, and $\MPs_2(\rho) > 0$. By Lemma~\ref{lem:minLocaux}, there exist infinitely many local minima for the energy. As $V$ is finite, there exists $v\in V$ and infinitely many local minima associated to $v$. We suppose that we only consider those local minima. If there exist two local minima $k, \ell$ such that $w_2(\rho[k,\ell]) > 0$, then $\rho[k,\ell]$ is a reachable good cycle by construction. Therefore, we suppose that for every pair of local minima $k,\ell$, we have $w_2(\rho[k,\ell]) \leq 0$. 
Let us denote by $k_0$ the first local minimum and let $\rho'$ be the suffix of $\rho$ starting at $k_0$, i.e. $\rho' = \rhofactor{k_0}{\infty}$. Notice that $\MPs_2(\rho') > 0$ since the mean-payoff-sup value is prefix-independent. As $\MPs_2(\rho') > 0$, for every $n \in \N$, there exists a prefix $\pi_n$ of $\rho'$, say $\pi_n = \rho'[0,i_n]$ for some $i_n \in \N$, such that $w_2(\pi_n) \geq n$. Again, since $V$ is finite, there exists a vertex $w \in V$ and infinitely many prefixes that end $w$. We suppose that we only consider those prefixes. See Figure~\ref{fig:EMP:proof1} for the construction so far.

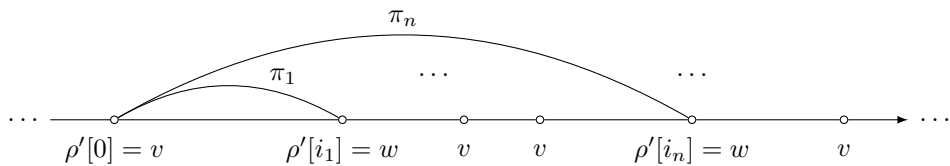
\begin{figure}[h]
\centering
\begin{tikzpicture}[scale=2]

    \draw (-1.6,0) node [circle] (A) {$\ldots$};
    \draw (-1,0) node [circle, draw,inner sep=1pt] (B) {};
    \draw (0.5,0) node [circle, draw,,inner sep=1pt] (C) {};
    \draw (1.3,0) node [circle, draw,,inner sep=1pt] (D) {};
    \draw (1.8,0) node [circle, draw,,inner sep=1pt] (E) {};
    \draw (2.8,0) node [circle, draw,,inner sep=1pt] (F) {};
    \draw (3.8,0) node [circle, draw,,inner sep=1pt] (G) {};
    \draw (4.4,0) node [circle] (H) {$\ldots$};
    \draw (2.8,0.3) node [circle] (M) {$\ldots$};
    
    
    \draw (A) -- (B);
	\draw (B) -- (C);
	\draw (C) -- (D);
	\draw (D) -- (E);
	\draw (E) -- (F);
	\draw (F) -- (G);
	\draw[->,>=latex] (G) to (H);    
	
	\draw (1.1,0.3) node [circle] (I) {$\ldots$};
	\draw (-1,-0.2) node [circle] (J) {$\rho'[0]=v$};
	\draw (1.3,-0.2) node [circle] (K) {$v$};
	\draw (1.8,-0.2) node [circle] (L) {$v$};
	\draw (3.8,-0.2) node [circle] (M) {$v$};
	\draw (0.5,-0.2) node [circle] (J) {$\rho'[i_1]=w$};
	\draw (2.8,-0.2) node [circle] (K) {$\rho'[i_n]=w$};
    
    \draw (B) to[bend left] node[above,near end] {$\pi_1$} (C);
    \draw (B) to[bend left] node[above,midway] {$\pi_n$} (F);

\end{tikzpicture}
\caption{Vertices $v$ represent local minima for the energy, and prefixes $\pi_n$ of $\rho'$ are such that $w_2(\pi_n) \geq n$ and end in vertex $w$.}
\label{fig:EMP:proof1}
\end{figure}

Since $w_1(\pi_n) \geq -c_0$ for all $n$ (as $k_0$ is a local minimum for the energy in $\rho$), there exists some $m \in \N$ such that for all  $n \in \N$,
\begin{eqnarray} \label{eqm}
w_1(\pi_m) \leq w_1(\pi_n).
\end{eqnarray}
Let $\pi_m$ be such a prefix, and consider the factor $\rho'[i_m,k]$ where $k$ is the first local minimum for the energy after $i_m$. Let us focus on the path $\rho'[0,k]$ composed of the concatenation of $\pi_m$ with $\rho'[i_m,k]$ (see Figure~\ref{fig:EMP:proof2}). Remark that $\rho'[0,k]$ is a cycle from $v$ to $v$ with 
\begin{eqnarray} \label{eqk}
w_1(\rho'[0,k]) \geq 0
\end{eqnarray}
(as $k_0$ is a local minimum for the energy). 

\begin{figure}
\centering
\begin{tikzpicture}[scale=2]

    \draw (-1.6,0) node [circle] (A) {$\ldots$};
    \draw (-1,0) node [circle, draw,inner sep=1pt] (B) {};
    \draw (2,0) node [circle, draw,,inner sep=1pt] (C) {};
    \draw (3.4,0) node [circle, draw,,inner sep=1pt] (D) {};
    \draw (4.4,0) node [circle] (E) {$\ldots$};

    
    \draw (A) -- (B);
	\draw (B) -- (C);
	\draw (C) -- (D);
	\draw[->,>=latex] (D) to (E);

	\draw (-1,-0.2) node [circle] (J) {$\rho'[0]=v$};
	\draw (3.4,-0.2) node [circle] (M) {$\rho'[k] = v$};
	\draw (1.89,-0.2) node [circle] (J) {$\rho'[i_m] \ = w$};
    
    \draw (B) to[bend left] node[below,midway] {$\pi_m$} (C);
    \draw[dashed] (B) to[bend right=80] node[above,midway] {$\pi_N$} (C);

\end{tikzpicture}
\caption{Prefix $\pi_m$ is replaced by $\pi_N$, with $w_2(\pi_N) \geq w_2(\rho'[i_m,k])$ where $k$ is the first local minimum for the energy after $i_m$.}
\label{fig:EMP:proof2}
\end{figure}
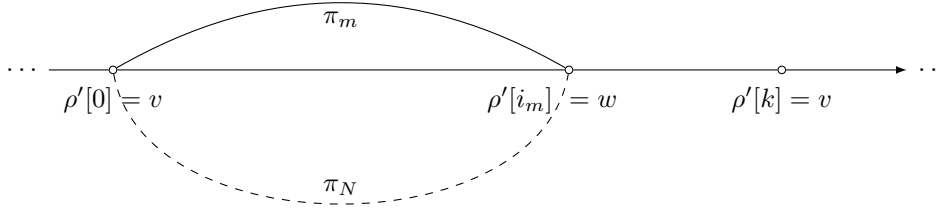

To finish the proof, it would be sufficient to show that $w_2(\rho'[0,k]) > 0$, but we assumed that such a path (from a local minimum to another one) is such that $w_2(\rho'[0,k]) \leq 0$. Thus we have  $w_2(\rho'[0,k]) = w_2(\pi_m) + w_2(\rho'[i_m,k]) \leq 0$ such that $w_2(\pi_m) \geq m$, which means that $w_2(\rho'[i_m,k]) \leq -m$. However, we are going to show that we can easily replace the prefix $\pi_m$ by another one to balance $w_2(\rho'[i_m,k])$ and obtain the desired result. Let $N \in \N$ be such that $N > |w_2(\rho'[i_m,k])|$ and consider the prefix $\pi_N$, which is such that $w_2(\pi_N) \geq N$ and $w_1(\pi_N) \geq w_1(\pi_m)$ by $(\ref{eqm})$. We construct the path $\pi^*$ as the concatenation of $\pi_N$ with $\rho'[i_m,k]$ (see Figure~\ref{fig:EMP:proof2}) and claim that this path is the desired reachable good cycle. First, $\pi^*$ is indeed a path in the graph since the last vertex of $\pi_N$ and the first vertex of $\rho'[i_m,k]$ are both equal to $w$. Second, $\pi^*$ is a cycle from $v$ to $v$ that is reachable from the initial vertex. Third, we have $w_2(\pi^*) = w_2(\pi_N) + w_2(\rho'[i_m,k]) > 0$ by construction of $N$. Finally, as $w_1(\rho'[0,k]) = w_1(\pi_m) + w_1(\rho'[i_m,k]) \geq 0$ by $(\ref{eqk})$ and $w_1(\pi_m) \leq w_1(\pi_N)$ by $(\ref{eqm})$, it yields that $w_1(\pi^*) = w_1(\pi_N) + w_1(\rho'[i_m,k]) \geq 0$.
\end{proof}

\subsection{Properties of good cycles and good multicycles}

Proof of Theorem~\ref{thm:caracterisation} does not provide any information regarding the shape of the reachable good cycle. In this section, we give such a precise description for both good cycles and good multicycles.

\begin{proposition} \label{prop:goodcycle}
Let $G$ be a graph structure. There exists a reachable good cycle if and only if
\begin{enumerate}
\item either there exists a reachable good cycle that is simple;
\item or there exist two simple cycles $C$, $C'$ that are in the same reachable strongly connected component, and such that their respective weight vectors $w(C) = (-x,y)$ and $w(C') = (x',-y')$ satisfy $x,x',y \in \Nzero$ and $y' \in \N$ and make an angle $< 180^o$;
\end{enumerate}
There exists a reachable good multicycle if and only if
\begin{enumerate}
\item either there exists a reachable good multicycle ${\cal C} = \{C\}$ composed of a unique simple cycle $C$;
\item or there exist two simple cycles $C$, $C'$ that are in the same reachable strongly connected component, and such that their respective weight vectors $w(C) = (-x,y)$ and $w(C') = (x',-y')$ satisfy $x,x', y \in \Nzero$ and $y' \in \N$ and make an angle $\leq 180^o$;
\end{enumerate}
\end{proposition}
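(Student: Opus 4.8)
The plan is to prove both equivalences by essentially the same argument, namely by analyzing the cycle decomposition of a (multi)cycle with the prescribed sign conditions on its total weight, and applying a geometric lemma about when a nonnegative combination of integer vectors lies in the nonnegative quadrant. I will treat the "good cycle" case first and then indicate the (minor) modifications needed for "good multicycle". In both cases the "if" direction is immediate: in case~(1) the simple (multi)cycle is already the witness; in case~(2), one takes a suitable nonnegative multicycle combination $a\cdot C + b\cdot C'$ of the two simple cycles and observes that the angle condition guarantees exactly that this combination has nonnegative (resp. positive on the second coordinate) weight — this is elementary planar geometry, since two vectors making an angle strictly less than $180^\circ$ span a proper cone and one can choose positive coefficients hitting the first quadrant. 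One then connects the two cycles through the strongly connected component to realize this multicycle combination as a single genuine cycle (traverse $C$ the required number of times, walk to $C'$ inside the SCC, traverse $C'$, walk back); reachability from $v_0$ is inherited.

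The substance is the "only if" direction. Suppose $\rho$ witnesses a reachable good cycle $\gamma$ with $w_1(\gamma)\ge 0$ and $w_2(\gamma)>0$. First I would decompose $\gamma$ into its cycle decomposition, a multiset of simple cycles $\{C_1,\dots,C_r\}$ all lying in a single reachable strongly connected component, with $\sum_i w(C_i) = w(\gamma)$, so $\sum_i w_1(C_i)\ge 0$ and $\sum_i w_2(C_i)>0$. If some $C_i$ already has $w_1(C_i)\ge 0$ and $w_2(C_i)>0$, we are in case~(1). Otherwise, no simple cycle in the decomposition is itself good; in particular every $C_i$ with $w_2(C_i)>0$ must have $w_1(C_i)<0$, i.e.\ weight of the shape $(-x,y)$ with $x\in\Nzero$, $y\in\N$, while every $C_i$ with $w_1(C_i)\ge 0$ must have $w_2(C_i)\le 0$, of the shape $(x',-y')$ with $x'\in\N$, $y'\in\Nzero$ (cycles with both coordinates $\le 0$ and cycles with $w_1<0,w_2\le 0$ can only hurt the sums, but I still need to argue they can be discarded). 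The key point is to extract \emph{two} simple cycles from this list whose weight vectors have the required forms and make an angle $<180^\circ$.

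For that I would use the following observation: since $\sum_i w_1(C_i)\ge 0$ but (we are assuming) not all the "$w_1\ge 0$" cycles are non-helpful, there must be at least one cycle $C'$ in the SCC with $w_1(C')>0$; and since $\sum_i w_2(C_i)>0$ there must be at least one cycle $C$ with $w_2(C)>0$, forcing $w_1(C)<0$ as above. Write $w(C)=(-x,y)$ and $w(C')=(x',-y')$ with $x,y,x'\in\Nzero$ (actually $x'>0$ here; $y'$ may be $0$). These are two vectors, one in the open second quadrant region $\{(-x,y): x\ge 0, y>0\}$ and one in $\{(x',-y'): x'>0, y'\ge 0\}$; elementary case analysis on $y'$ and on whether $x=0$ shows the angle between them is $<180^\circ$ precisely unless the two vectors are antiparallel, i.e.\ $(-x,y)$ and $(x',-y')$ point in exactly opposite directions. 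The crux — and the step I expect to be the main obstacle — is to rule out the antiparallel (angle exactly $180^\circ$) case: if \emph{every} such pair $C,C'$ extracted from the decomposition were antiparallel, I need to derive a contradiction with $w_2(\gamma)>0$ (for the strict/good-cycle version) — morally, antiparallel pairs can cancel only to the boundary $w_2=0$, never strictly positive — whereas for the good-multicycle version angle $\le 180^\circ$ is allowed and this is exactly why the two statements differ. I would handle this by a minimality/extremality argument: among all cycles in the SCC, pick $C$ maximizing (or well-choosing) the slope $w_2/w_1$ direction and $C'$ complementarily, so that if they were antiparallel the whole cone generated by all the $C_i$ collapses to a line, contradicting $\sum w_2(C_i)>0$ together with $\sum w_1(C_i)\ge0$; the non-strict case tolerates exactly the collapsed-to-a-line scenario and yields angle $=180^\circ$.

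Finally, once the two cycles $C,C'$ in the same reachable SCC with the stated weight shapes and angle $<180^\circ$ (resp.\ $\le 180^\circ$) are in hand, nothing more is needed: that is literally case~(2). For the good-multicycle statement the argument is the same but easier, since we only need $w_1({\cal C})\ge 0$ and $w_2({\cal C})\ge 0$, the cycle decomposition of (the disjoint union of the simple cycles of) ${\cal C}$ is already a multiset of simple cycles, no connecting path is required in the "if" direction beyond being in one reachable connected component, and the boundary case angle $=180^\circ$ becomes admissible, which is what makes the multicycle characterization strictly weaker than the cycle one. I would present the good-cycle proof in full and then state that the good-multicycle case is obtained by replacing every strict inequality on the second coordinate by a non-strict one throughout.
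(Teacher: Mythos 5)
Your overall skeleton (cycle decomposition of the good cycle, extraction of a pair of simple cycles, reassembly through the SCC with multiplicities coming from a cone lemma) is the same as the paper's, and your ``if'' direction is essentially the paper's proof. The genuine gap is in the extraction step of the ``only if'' direction. You claim that for an arbitrary $C$ with $w_2(C)>0$ (hence $w_1(C)<0$) and an arbitrary $C'$ with $w_1(C')>0$ (hence $w_2(C')\le 0$), the required angle condition holds ``unless the two vectors are antiparallel''. This is false for the notion of angle actually used in the statement: ``angle $<180^o$'' there means that the cone positively spanned by $w(C)$ and $w(C')$ meets the open first quadrant, i.e.\ $x'y-xy'>0$ (this is exactly what Lemma~\ref{lem:goodcycle} exploits), and that condition can fail for non-antiparallel pairs. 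For instance $w(C)=(-3,1)$ and $w(C')=(1,-3)$ are not antiparallel, yet no combination $a\cdot w(C)+b\cdot w(C')$ with $a,b>0$ is $\geq (0,0)$, so such a pair is not a witness for case~(2); and such pairs can perfectly well occur inside the decomposition of a genuine good cycle (add, say, $(-1,5)$ and $(5,-1)$ to the multiset). So the failure mode you propose to rule out is the wrong one, and an arbitrary choice of the pair does not work.

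The extremality argument you gesture at is therefore not an optional patch for a degenerate case but the heart of the proof, and your justification for it (``if they were antiparallel the whole cone generated by all the $C_i$ collapses to a line'') is also not the right argument: antiparallelism of two chosen cycles does not force the remaining cycles onto that line. What is needed is a separating-line argument: among the decomposition cycles with $w_2>0$, choose $C$ with $w(C)=(-x,y)$ minimizing the ratio $\frac{y}{-x}$, and consider the line through the origin generated by $w(C)$. If every cycle of the decomposition were on or below this line, their sum $w(\pi)$ would be too, contradicting $w_1(\pi)\geq 0$ and $w_2(\pi)>0$; so some $C'$ lies strictly above the line, and minimality of the ratio (ruling out the second quadrant) together with non-goodness (ruling out the first) places $w(C')$ in the fourth quadrant with $x'y-xy'>0$. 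This also disposes of the third-quadrant cycles you were unsure how to discard: they lie below the line automatically and need no separate treatment. With this replacement your proof goes through, and the multicycle variant then only relaxes strict to non-strict inequalities (with the extra boundary subcase of a cycle lying on the line in the direction opposite to $C$), as you indicate.
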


Notice the differences between the second cases of Proposition~\ref{prop:goodcycle}: for good cycles, angle $< 180^o$, and for good multicycles, angle $\leq 180^o$. For good cycles, the second case is depicted in Figure~\ref{fig:cond2}. Let us illustrate the characterization given in Proposition~\ref{prop:goodcycle} with our two running examples.

\begin{figure}[h]
	\centering
	\begin{tikzpicture}[scale=.7]
	\node[] (a) at (-1,3){};
	\node[] (b) at (0,0){};
	\node[] (c) at (3,-2){};
	\draw [very thin,gray] (-3,-3) grid (3,3);
	\draw[->,>=latex, thick] (-3,0) to (3.4,0);
	\draw (3.4,-0.2) node [circle] (L) {$w_1$};
	\draw[->,>=latex, thick] (0,-3) to (0,3.3);
	\draw (-0.2,3.4) node [circle] (L) {$w_2$};
	\draw[->,>=latex, thick] (0,0) to node[near end,left] {$w(C)$} (-1,3);
	\draw[very thick,dotted] (0,0) to node[midway,above] {} (1,-3);
	\draw[->,>=latex, thick] (0,0) to node[near end,right] {$w(C')$} (3,-2);
	\pic [draw,angle eccentricity=1.5] {angle = c--b--a};
	\end{tikzpicture}
	\caption{Geometrical view of the second case of Proposition~\ref{prop:goodcycle} for good cycles.} 
	\label{fig:cond2}
\end{figure}
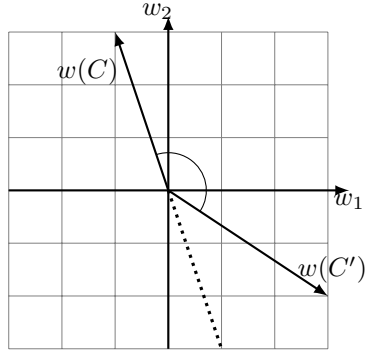

%
%
%

\begin{example}
In case of Figure~\ref{fig::EMP:finiteMemory}, the good cycle is characterized by the two cycles $C = (v_1,v_1)$, $C' = (v_0,v_0)$ with respective weights $(-1,3)$, $(1,-1)$. In case of Figure~\ref{fig::EMP:infiniteMemory}, the reachable good multicycle is characterized by the two cycles $C = (v_1,v_1)$, $C' = (v_0,v_0)$ with respective weights $(-1,1)$, $(1,-1)$. Moreover one can check that there is no reachable good cycle by inspecting the three simple cycles of the game: the two cases of Proposition~\ref{prop:goodcycle} never occur.
\end{example}

For good cycles, there is an additional result stated in the following corollary.

\begin{corollary} \label{cor:goodcycle}
Let $G$ be a player-1 game structure. There exists a reachable good cycle if and only if for the decision problem $\EMPi^{> 0}$ (resp. $\EMPs^{> 0}$), $\playerOne$ has a finite-memory winning strategy of size pseudo-polynomial in $||E||$.
\end{corollary}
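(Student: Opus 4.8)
The plan is to derive the corollary directly from Theorem~\ref{thm:caracterisation} together with the structural description of good cycles given in Proposition~\ref{prop:goodcycle}. For the ``if'' direction, suppose $\playerOne$ has a finite-memory winning strategy for $\EMPi^{> 0}$ (resp. $\EMPs^{> 0}$) for some initial credit $c_0$; then in particular there is a play in the objective $\energy_1(c_0) \cap \MPi_2(> 0)$ (resp. with $\MPs_2$), and Theorem~\ref{thm:caracterisation} gives a reachable good cycle. (This direction does not even use the finiteness of the memory.) For the ``only if'' direction, assume there is a reachable good cycle. By Proposition~\ref{prop:goodcycle}, we are in one of two cases: either there is a reachable \emph{simple} good cycle $C$, or there are two simple cycles $C$, $C'$ in a common reachable strongly connected component with $w(C) = (-x,y)$, $w(C') = (x',-y')$, $x,x',y \in \Nzero$, $y'\in\N$, making an angle $< 180^\circ$.

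In the first case, Lemma~\ref{lem:cycle} already yields a \emph{memoryless} winning strategy (reach $C$ and loop forever), with initial credit $c_0 = (|V|-1)\cdot \|E\|$, so the memory size is $1$, which is trivially pseudo-polynomial. In the second case, the strategy for $\playerOne$ is to alternate between the two simple cycles $C$ and $C'$ in the right proportion. Concretely, since the two weight vectors make an angle $< 180^\circ$, there are positive integers $a, b$ (bounded polynomially in $\|E\|$, since each $|w_j(C)|, |w_j(C')| \le |V|\cdot\|E\|$) such that the multicycle consisting of $a$ copies of $C$ and $b$ copies of $C'$ has weight $(a\cdot(-x) + b\cdot x',\, a\cdot y + b\cdot(-y'))$ with first coordinate $\ge 0$ and second coordinate $> 0$ — this is exactly the content of the ``$<180^\circ$'' condition; I would pick $a, b$ minimally, e.g. via the obvious choice governed by the slopes, yielding $a,b \le |V|\cdot\|E\|$. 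The strategy then is: reach the SCC, go to a vertex of $C$, loop $a$ times around $C$, travel along a path inside the SCC to $C'$, loop $b$ times around $C'$, travel back to $C$, and repeat forever. The resulting play is eventually periodic with period bounded by $a\cdot|C| + b\cdot|C'| + 2|V| = O(|V|^2\cdot\|E\|)$, so it is realized by a finite-memory strategy whose memory size is of this pseudo-polynomial order. It remains to check that this play satisfies the objective: the mean-payoff value on the second dimension equals the average weight of the combined cycle, which is $> 0$ by the choice of $a,b$; and the energy level on the first dimension, although it may dip during the $C$-phases (since $w_1(C) = -x < 0$), returns to a level no lower than at the start of each super-period (since the combined cycle has nonnegative weight on the first dimension), and within one super-period the drop is bounded by $a\cdot x + (\text{path lengths})\cdot\|E\| = O(|V|^2\cdot\|E\|^2)$, so taking $c_0$ of this order keeps the energy nonnegative throughout. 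Both the ``inf'' and ``sup'' variants are handled by the same play, since it is eventually periodic and hence $\MPi_2 = \MPs_2$.

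The main obstacle is the bookkeeping in the second case: one has to (i) argue that the angle condition really does produce positive multipliers $a,b$ with the required sign pattern on the combined weight vector, and exhibit explicit polynomial bounds on them; (ii) bound the lengths of the connecting paths inside the SCC by $|V|$; and (iii) verify that, with $c_0$ chosen of pseudo-polynomial size, the energy never becomes negative at \emph{any} intermediate position, not merely at the super-period boundaries — this requires noting that the worst case within a super-period is after all the $C$-loops and before any $C'$-loops, and bounding that worst-case dip. None of these steps is deep, but getting the explicit pseudo-polynomial constants right is the part that needs care. Everything else follows immediately from the two results already proved.
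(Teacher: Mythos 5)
Your overall architecture matches the paper's: the ``if'' direction is immediate from Theorem~\ref{thm:caracterisation}, and for the ``only if'' direction both you and the paper split via Proposition~\ref{prop:goodcycle} into the simple-cycle case (memoryless by Lemma~\ref{lem:cycle}) and the two-cycle case, where the strategy alternates between $C$ and $C'$ with pseudo-polynomial multipliers. However, there is a genuine gap in how you choose those multipliers. You propose to pick $a,b$ ``minimally, governed by the slopes,'' which in effect means $b/a = x/x'$; this makes the first coordinate of $a\cdot w(C) + b\cdot w(C')$ exactly $0$ (and the second coordinate as small as $x'y - xy'$, possibly $1$). Your energy argument then claims the level ``returns to a level no lower than at the start of each super-period,'' but one super-period of the actual play is not the multicycle $a\,C + b\,C'$: it also contains the two connecting paths $\pi_{C,C'}$ and $\pi_{C',C}$, whose first-dimension weights can each be as negative as $-(|V|-1)\cdot\|E\|$. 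With a combination whose first coordinate is only $\geq 0$ (and in your minimal choice, $=0$), the full super-period cycle can have strictly negative weight on dimension~$1$, the energy drifts to $-\infty$ across super-periods, and no finite initial credit works. The same conflation affects the mean-payoff dimension: $ay - by'$ can be $1$ while the connecting paths contribute $-2(|V|-1)\cdot\|E\|$, so the average weight of the repeated cycle need not be positive.

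The fix is exactly what the paper's Lemma~\ref{lem:goodcycle} and equation~(\ref{eq:genere}) provide: first choose $a = xx' + yy'$ and $b = x^2 + y^2$ (so that $a\cdot w(C) + b\cdot w(C') = (x'y - xy')\cdot(y,x)$ is \emph{strictly} positive in \emph{both} coordinates --- this is where the angle being strictly less than $180^{\circ}$ together with $x,y \in \Nzero$ is used), and then scale to $\alpha = 2a\,|V|\cdot\|E\|$, $\beta = 2b\,|V|\cdot\|E\|$ so that the surplus $\alpha\cdot w(C) + \beta\cdot w(C') \geq 2|V|\cdot\|E\|\cdot(1,1)$ strictly dominates the weights of the two connecting paths in both dimensions. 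The resulting $\alpha,\beta$ are of order $(|V|\cdot\|E\|)^{3}$ rather than your claimed $|V|\cdot\|E\|$, but still pseudo-polynomial, which is all the corollary needs. Once this quantitative step is in place, the rest of your argument (eventual periodicity, $\MPi_2 = \MPs_2$ for periodic plays, and a pseudo-polynomial initial credit covering the worst intra-period dip) goes through as in Lemma~\ref{lem:cycle}.
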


Before proving Proposition~\ref{prop:goodcycle} and Corollary~\ref{cor:goodcycle}, let us establish the following lemma. 

\begin{lemma} \label{lem:goodcycle}
Let $(-x,y)$ and $(x',-y')$ be two vectors with $x,x',y \in \Nzero$ and $y' \in \N$.
\begin{itemize}
\item If these vectors make an angle $< 180^o$, then there exist $a,b \in \Nzero$ such that $a = x  x' +  y  y'$, $b = x^2 + y^2$, and $a \cdot (-x,y) + b \cdot (x',-y') > (0,0)$.
\item If these vectors make an angle $\leq 180^o$, then there exist $a,b \in \Nzero$ such that $a \cdot (-x,y) + b \cdot (x',-y') \geq (0,0)$.
\end{itemize}
\end{lemma}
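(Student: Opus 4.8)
The plan is to exploit the specific choice $a = xx' + yy'$, $b = x^2 + y^2$ and simply compute the two coordinates of $a\cdot(-x,y) + b\cdot(x',-y')$, then argue positivity (resp. nonnegativity) from the angle hypothesis reinterpreted as a condition on a cross product. First I would record the basic geometry: for the two vectors $u = (-x,y)$ and $v = (x',-y')$, the statement "$u$ and $v$ make an angle $< 180^\circ$" means they are not antiparallel, equivalently the (scalar) cross product $u_1 v_2 - u_2 v_1 = (-x)(-y') - y x' = xy' - x'y$ is nonzero, and in fact, given the sign pattern of $u$ (pointing up-left, second quadrant or an axis) and $v$ (pointing down-right, fourth quadrant or an axis), one checks that "angle $< 180^\circ$" forces $xy' - x'y > 0$ while "angle $\le 180^\circ$" gives only $xy' - x'y \ge 0$. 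I would state this as the first step, with a one-line justification via the determinant/orientation of the ordered pair $(u,v)$.

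Next I would do the arithmetic. The first coordinate of $a u + b v$ is
\[
a\cdot(-x) + b\cdot x' = -(xx' + yy')x + (x^2+y^2)x' = -x^2 x' - xyy' + x^2 x' + y^2 x' = y^2 x' - xyy' = y(yx' - xy'),
\]
which, since $y \in \Nzero$, has the same sign as $-(xy' - x'y)$, i.e.\ it is $\le 0$ in the first case and it could be strictly negative — so this computation as written gives the wrong sign, and I would instead compute the first coordinate and *check which coordinate the chosen $a,b$ make vanish*. Redoing it cleanly: the intended outcome is that $a u + b v$ is (a positive multiple of) a vector with nonnegative entries; the natural candidate is that $a, b$ are chosen so that $a u + b v$ is orthogonal-complement-ish to $u$, i.e.\ $b = \|u\|^2 = x^2+y^2$ and $a = \langle u, v\rangle \cdot(-1)$? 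I would carefully recompute $\langle u,v\rangle = (-x)(x') + (y)(-y') = -(xx'+yy')$, so $a = xx'+yy' = -\langle u,v\rangle$. Then $a u + b v = -\langle u,v\rangle\, u + \|u\|^2 v = \|u\|^2\big(v - \tfrac{\langle u,v\rangle}{\|u\|^2} u\big)$, which is $\|u\|^2$ times the component of $v$ orthogonal to $u$. So $a u + b v$ is orthogonal to $u = (-x,y)$; being orthogonal to a second-quadrant vector means lying along $(y,x)$ (up to sign), which has both coordinates in $\Nzero$ — nonnegative. The magnitude is $\|u\|^2 \cdot \|v_\perp\|$, which is $>0$ exactly when $v$ is not parallel to $u$, i.e.\ exactly when the angle is strictly less than $180^\circ$ (they are nonzero and non-antiparallel); when the angle is $\le 180^\circ$ we only get $v_\perp$ possibly zero, hence $\ge (0,0)$.

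So the key steps in order are: (1) translate the angle condition into $xy' - x'y > 0$ (resp.\ $\ge 0$) using the sign pattern of the two vectors; (2) observe that with $a = xx'+yy'$, $b = x^2+y^2$ the vector $au+bv$ equals $\|u\|^2$ times the orthogonal projection of $v$ off $u$, hence is a nonnegative scalar multiple of $(y,x)\in\Nzero^2$; (3) compute the explicit coordinate, $(au+bv)_1 = -x\,a + x'\,b$ and $(au+bv)_2 = y\,a - y'\,b$, verify both equal $(xy'-x'y)$ times a nonnegative quantity ($y$ and $x$ respectively), and conclude $au+bv > (0,0)$ in the strict case and $\ge (0,0)$ in the large case; (4) note $a,b \in \Nzero$ since $x,y\in\Nzero$ forces $b = x^2+y^2 \ge 1$ and $x,x',y\in\Nzero$, $y'\in\N$ force $a = xx'+yy' \ge 1$. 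The main obstacle — really just bookkeeping — is getting the signs in step (1) exactly right for all the boundary cases where one of $x', y'$ is zero (so that one of the vectors lies on an axis), and checking that in those cases the claimed inequalities $au + bv > (0,0)$ (resp.\ $\ge$) still hold; I expect this to reduce to a short case inspection rather than any real difficulty. For the large-angle case, I would additionally remark that one may not need the specific $a,b$: any $a,b\in\Nzero$ with $a/b$ equal to the appropriate ratio $x'/x$ (when the vectors are exactly antiparallel) works, and the projection argument covers the generic subcase uniformly.
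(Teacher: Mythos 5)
Your overall strategy (plug in the explicit $a=xx'+yy'$, $b=x^2+y^2$, compute the two coordinates, and read off positivity from a $2\times 2$ determinant) is exactly the paper's, and your observation that $a\cdot(-x,y)+b\cdot(x',-y')$ equals $\|(-x,y)\|^2$ times the component of $(x',-y')$ orthogonal to $(-x,y)$ is a pleasant way to see where $a$ and $b$ come from. However, there is a genuine sign error in step (1) that the projection argument does not repair. You translate ``angle $<180^o$'' as $xy'-x'y>0$; the condition actually needed (and the one the paper intends, cf.\ Figure~\ref{fig:cond2}: the angle is the \emph{oriented} angle from $(x',-y')$ counterclockwise to $(-x,y)$, equivalently $(x',-y')$ lies strictly \emph{above} the line spanned by $(-x,y)$) is $x'y-xy'>0$. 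The distinction matters: under your reading --- the unsigned angle, i.e.\ ``the vectors are not antiparallel'' --- the lemma is false. Take $(-x,y)=(-10,1)$ and $(x',-y')=(1,-10)$: the unsigned angle is $\arccos(-20/101)\approx 101^o<180^o$, yet no $a,b\in\Nzero$ give $a\cdot(-10,1)+b\cdot(1,-10)>(0,0)$, since that would force $b>10a$ and $a>10b$ simultaneously.

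Your own arithmetic exposes the inconsistency: the two coordinates of $a\cdot(-x,y)+b\cdot(x',-y')$ are $y(x'y-xy')$ and $x(x'y-xy')$, i.e.\ the vector equals $(x'y-xy')\cdot(y,x)$ --- not ``$(xy'-x'y)$ times a nonnegative quantity'' as asserted in your step (3) --- so with your step-(1) inequality both coordinates come out $\leq 0$. The projection argument cannot rescue this, because knowing that $au+bv$ is orthogonal to the second-quadrant vector $u$ only pins it down to $\pm\lambda(y,x)$ with $\lambda\geq 0$, and deciding which sign occurs is precisely the determinant condition you got backwards. Once the hypothesis is read correctly as $x'y-xy'>0$ (resp.\ $\geq 0$), your computation collapses to the paper's identity $a\cdot(-x,y)+b\cdot(x',-y')=(x'y-xy')\cdot(y,x)$ and everything goes through; the paper handles the exactly-antiparallel case separately by an arbitrary cancelling positive combination, although your explicit $a,b$ also yield $(0,0)$ there.
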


\begin{proof}
We first treat the particular situation of two vectors $(-x,y)$ and $(x',-y')$ that make an angle of $180^o$. Clearly there exists $a,b \in \Nzero$ such that $a \cdot (-x,y) + b \cdot (x',-y')  = (0,0)$. In this way we get the second case of the proposition for this particular situation.

Consider now two vectors $(-x,y), (x',-y')$ that make an angle $< 180^o$. Under this hypothesis, we treat together and in the same way the two cases of the proposition. As $(-x,y), (x',-y')$ form a basis in $\Q^2$, there exist $a,b \in \Q$ such that
\begin{eqnarray} \label{eq:ab}
(y,x) = a^* \cdot (-x,y) + b^* \cdot (x',-y').
\end{eqnarray}
One can show that 
$$a^* = \frac{x  x' +  y  y'}{x'  y -  x y'}  \quad\mbox{ and }\quad b^* = \frac{x^2 + y^2}{x' y -  x y'}.$$  
Therefore by multiplying $(\ref{eq:ab})$ by $x'  y -  x y'$, we get 
\begin{eqnarray}
(x'  y -  x  y' ) \cdot (y,x)  =  a \cdot (-x,y) + b \cdot (x',-y') \label{eq:cycle}
\end{eqnarray}
with $a = x  x' +  y  y'$ and $b = x^2 + y^2$. 
Notice that $a, b$ are both in $\Nzero$ since $x,x' \in \Nzero$ and $y, y' \in \N$ by hypothesis. Moreover, $(x'  y -  x  y' ) \cdot (y,x) > (0,0)$ since $x'  y -  x  y' \in \Nzero$ due to the hypothesis that the vectors  $(-x,y)$ and $(x',-y')$ make an angle $< 180^o$.
\end{proof}

\begin{proof}[Proof of Proposition~\ref{prop:goodcycle} and Corollary~\ref{cor:goodcycle}]
We begin with the case of good cycles. The proof for good multicycles share similar arguments, we will explain the main differences later in the proof. 

\subparagraph{Case of good cycles.} 

Let $\pi$ be a reachable good cycle and let us show that cases (1) or (2) of Proposition~\ref{prop:goodcycle} occur. Let us suppose that $\pi$ is not simple (otherwise case (1) occurs), and let $\{C_1, \ldots, C_n\}$ be its cycle decomposition. As $\pi$ is a cycle, we have $w(\pi) = \sum_{k=1}^n w(C_k)$. Without lost of generality  (with respect to $w(\pi)$), we suppose that  $\{C_1, \ldots, C_n\}$ contains no cycle $C_k$ with weight $w(C_k) = (0,0)$. If there exists some cycle $C_k$ which is a good cycle, then again case $(1)$ occurs (recall that each $C_k$ is simple). Hence we assume that for all $k$, cycle $C_k$ is not a good cycle. We provide a geometrical reasoning (see Figure~\ref{fig:cond2}) to get the cycles $C, C'$ of case (2) of Proposition~\ref{prop:goodcycle}.

Let $C$ be a cycle among $\{C_1, \ldots, C_n\}$ such that $w_2(C) > 0$. Such a cycle exists, otherwise we would obtain $w_2(\pi) = \sum_{k=1}^n w_2(C_k) \leq 0$ in contradiction with $\pi$ being a good cycle. Moreover $w_1(C) < 0$, otherwise $C$ would be a good cycle. Thus $w(C) = (-x,y)$ with $x,y \in \Nzero$.  We take such a cycle $C \in \{C_1, \ldots, C_n\}$ with the minimum ratio $\frac{y}{-x}$. Now consider the line generated by $w(C)$ in $\Q^2$, with $\frac{y}{-x}$ being its slope (see Figure~\ref{fig:cond2}). 

If the weights $w(C_k)$ of all cycles $C_k$ are under or on this line, then so is $w(\pi)= \sum_{k=1}^n w(C_k)$, again in contradiction with $\pi$ being a good cycle. Therefore, let $C'$ be a cycle among $\{C_1, \ldots, C_n\}$ which is strictly above the line generated by $w(C)$. It follows that vectors $w(C), w(C')$ make an angle $< 180^o$.  Moreover $w(C')$ is in the fourth quadrant and equal to $(x',-y')$ with $x' \in \Nzero$ and $y' \in \N$ (again, see Figure~\ref{fig:cond2}). Indeed $w(C')$ is neither in the first quadrant since it is not a good cycle, nor in the second quadrant by minimality of the ratio $\frac{y}{-x}$. 

Let us now prove that if cases (1) or (2) of Proposition~\ref{prop:goodcycle} occur, then there exists a reachable good cycle. This is trivially true if case (1) occurs. Suppose that case (2) occurs. Let $C, C'$ be two cycles satisfying the conditions of case (2), and let $\pi_{C,C'}$ be a simple path from $C$ to $C'$ and $\pi_{C',C}$ be a simple path from $C'$ to $C$. We are going to show how to construct a good cycle $\pi$ from the two cycles $C, C'$ and the two paths  $\pi_{C,C'}, \pi_{C',C}$. This cycle $\pi$ is not simple and has the following shape:
\begin{enumerate}
\item Loop $\alpha$ times in $C$
\item Follow path $\pi_{C,C'}$
\item Loop $\beta$ times in $C'$
\item Follow path $\pi_{C',C}$.
\end{enumerate}
Let us explain how to choose constants $\alpha$ and $\beta$ such that $\pi$ is good and such that they are pseudo-polynomial in $||E||$. By Lemma~\ref{lem:goodcycle}, there exist $a,b \in \Nzero$ such that $a = x  x' +  y  y'$, $b = x^2 + y^2$, and 
\begin{eqnarray}
a \cdot (-x,y) + b \cdot (x',-y') = a \cdot w(C) + b \cdot w(C')  > (0,0). \label{eq:cycle}
\end{eqnarray}
Hence, (\ref{eq:cycle}) indicates that we can loop on $C$ and $C'$, i.e. combine $w(C)$ and $w(C')$, to obtain a vector of positive weights, as large as we want, in particular to balance the possibly negative weights of $\pi_{C,C'}$ and $\pi_{C',C}$. More precisely, as $\pi_{C,C'}, \pi_{C',C}$ are simple paths and thus of weight (in dimension one and two) with absolute value bounded by $(|V|-1) \cdot ||E||$, it is enough to choose constants $\alpha$ and $\beta$ as follows:
\begin{eqnarray} \label{eq:genere}
\alpha = 2 a \cdot |V| \cdot ||E|| \quad\mbox{ and }\quad \beta = 2 b \cdot  |V| \cdot ||E||.
\end{eqnarray}
In this way, we get 
\begin{eqnarray*}
w(\pi) &=& \alpha \cdot w(C) + w(\pi_{C,C'}) + \beta \cdot w (C') + w(\pi_{C',C}) \\
          &\geq&  2 |V| \cdot ||E|| \cdot (1,1) + w(\pi_{C,C'}) + w(\pi_{C',C})  \quad\quad \mbox{by (\ref{eq:cycle}) and (\ref{eq:genere})} \\
          &>& (0,0).
\end{eqnarray*} 
Therefore, $\pi$ is a reachable good cycle and constants $\alpha$ and $\beta$ are pseudo-polynomial in $||E||$ since $x, x', y, y'$ are bounded by $|V| \cdot ||E||$ (the cycles $C,C'$ are simple).

We have proved that when cases (1) or (2) of Proposition~\ref{prop:goodcycle} occur, there exists a reachable good cycle. We can go further and derive from this cycle, for both decision problems $\EMPi^{> 0}$ and $\EMPs^{> 0}$, a finite-memory winning strategy for $\playerOne$ with pseudo-polynomial size. In this way, Corollary~\ref{cor:goodcycle} will be also proved. Recall from Lemma~\ref{lem:cycle} that given a reachable good cycle, for a well-chosen initial credit, there exists a winning strategy for $\playerOne$ for problems $\EMPi^{> 0}$ and $\EMPs^{> 0}$ that consists in reaching the simple good cycle and looping in it. In case (1) of Proposition~\ref{prop:goodcycle}, this strategy is memoryless, and in case (2) of Proposition~\ref{prop:goodcycle}, it is a finite-memory strategy of size pseudo-polynomial in $||E||$ since constants $\alpha, \beta$ of (\ref{eq:genere}) are pseudo-polynomial in $||E||$.

\subparagraph{Case of good multicycles.} 
We now turn to the proof of Proposition~\ref{prop:goodcycle} for multicycles. Let ${\cal C} = \{C_1, \ldots, C_n\}$ be a reachable good multicycle (each $C_k$ is a simple cycle). Let us show that cases (1) or (2) of Proposition~\ref{prop:goodcycle} occur in a way similar to what was done previously for good cycles (the proof is thus here sketched). Without lost of generality, we suppose that  $\cal C$ contains no cycle $C_k$ with weight $w(C_k) = (0,0)$. If $\cal C$ contains a cycle $C_k$ with $w(C_k)\geq (0,0)$, then case (1) occurs. Hence we assume that for all $k$, cycle $C_k$ has weight $w(C_k) \not\geq (0,0)$. It follows that there exists $C \in {\cal C}$ such that $w_2(C) \geq 0$ and $w_1(C) < 0$. Thus $w(C) = (-x,y)$ with $x \in \Nzero$ and $y \in \N$, and we take such a cycle $C$ with the minimum ratio $\frac{y}{-x}$. We consider the line generated by $w(C)$ in $\Q^2$. 

The weights $w(C_k)$ cannot be all strictly under this line. Hence there exists some cycle $C' \in {\cal C}$ that is either strictly above the line or on it and in the direction opposite to $C$. In the first case, we conclude (as for good cycles) that $w(C') = (x',-y')$ with $x',y' \in \Nzero$ and vectors $w(C), w(C')$ make an angle $< 180^o$. Moreover, we get that $y \in \Nzero$. In the second case, the vectors $w(C), w(C')$ in opposite directions make an angle of $180^o$ and thus $w(C') = (x',-y')$ with $x' \in \Nzero$, $y' \in \N$. Moreover, as $w(C') \not\geq (0,0)$, we get $y, y' \in \Nzero$. 

Let us now prove that if cases (1) or (2) of Proposition~\ref{prop:goodcycle} occur, then there exists a reachable good multicycle. This is trivially true if case (1) occurs. Suppose that case (2) occurs. Let $C, C'$ be two cycles satisfying the conditions of case (2) and let show how to construct a good multicycle $\cal C$ from them. By Lemma~\ref{lem:goodcycle}, there exist $a,b \in \Nzero$ such that $a \cdot w(C) + b \cdot w(C')  \geq (0,0)$. Hence the required multicycle $\cal C$ is composed of $a$ occurrences of cycle $C$ and $b$ occurrences of cycle $C'$.
\end{proof}

\subsection{Characterization in case of non-strict inequalities}

We now prove the characterization given in Theorem~\ref{thm:caracterisationLarge} in case of a mean-payoff objectives with non-strict inequality.

\begin{proof}[Proof of Theorem~\ref{thm:caracterisationLarge}]
We prove that $(1) \Rightarrow (2) \Rightarrow (3) \Rightarrow (1)$. Implication $(1) \Rightarrow (2)$ is immediate since $\MPs_2(\rho) \geq \MPi_2(\rho)$ for all plays $\rho$. 

Let us prove implication $(2) \Rightarrow (3)$. Suppose that $\playerOne$ is winning for the objective $\energy_1(c_0) \cap \MPs_2(\geq 0)$ for some $c_0$. We are going to show that there exists a reachable good multicycle as described in Proposition~\ref{prop:goodcycle}. From hypothesis (2), it follows that for all $\epsilon \in \Q$, $\epsilon > 0$, $\playerOne$ is winning for the objective $\energy_1(c_0) \cap \MPs_2(> -\epsilon)$ (using threshold $\neq 0$ is allowed by Remark~\ref{rem:threshold}). We consider the game structure $G_\epsilon$ obtained from $G$ by replacing function $w_2$ by function $w_2^\epsilon = w_2 + \epsilon$ (function $w_1$ is left unchanged; using rational weights are allowed by Remark~\ref{rem:threshold}). Hence $\playerOne$ is winning in this game $G_\epsilon$ for the objective $\energy_1(c_0) \cap \MPs_2(> 0)$, and by Theorem~\ref{thm:caracterisation} there exists a reachable good cycle in $G_\epsilon$ as described in Proposition~\ref{prop:goodcycle}.

We first assume that there exists $\epsilon \leq \frac{1}{|V|}$ such that this reachable good cycle $C$ is simple, thus with length $k \leq |V|$. Let us show that ${\cal C} = \{C\}$ is the required multicycle in $G$ with $w({\cal C}) \geq (0,0)$. As $C$ is a good cycle in $G_\epsilon$,  we have $w_1(C) \geq 0$ and $w_2^\epsilon(C) > 0$. Hence
$$w_2^\epsilon(C)  = w_2(C) + k \cdot \epsilon > 0.$$ 
Suppose that in $G$, $w_2(C) < 0$, i.e. $w_2(C) \leq -1$. Then in $G_\epsilon$, $w_2^\epsilon(C) \leq -1 + k \cdot \epsilon \leq -1 + |V| \cdot  \frac{1}{|V|} = 0$ which is impossible. It follows that $w(C) \geq (0,0)$ as announced.

We then assume that for all $\epsilon \leq \frac{1}{|V|}$, there is no reachable good cycle $C_\epsilon$ that is simple. Then we have two simple cycles $C_\epsilon, C'_\epsilon$ as described in Proposition~\ref{prop:goodcycle}. As there is a finite number of simple cycles in $G$, there exists a sequence $\epsilon_n \rightarrow 0$ using the same pair of cycles $C,C'$ with respective length $k,k' \leq |V|$ such that in game $G_{\epsilon_n}$, 
\begin{eqnarray} \label{eq:epsilon}
w^{\epsilon_n}(C) = (-x,y + k \cdot \epsilon_n) \quad \mbox{ and } \quad  w^{\epsilon_n}(C') = (x',-y' + k' \cdot \epsilon_n)
\end{eqnarray} 
such that $x,x' \in \Nzero$, $y + k \cdot \epsilon_n > 0$, $y' - k' \cdot \epsilon_n \geq 0$, and vectors $w^{\epsilon_n}(C), w^{\epsilon_n}(C')$ make an angle $< 180^o$. The latter condition is equivalent to
\begin{eqnarray} 
x' \cdot (y + k \cdot \epsilon_n) - x \cdot (y' - k' \cdot \epsilon_n) &>& 0 \nonumber \\
x' y - x y' + (x'  k + x  k') \cdot \epsilon_n &>& 0. \label{eq:angle}
\end{eqnarray}
When $\epsilon_n \rightarrow 0$ in (\ref{eq:epsilon}) and (\ref{eq:angle}), we get in the game $G$ that $w(C) = (-x,y)$, $w(C') = (x',-y')$ with $x,x'\in \Nzero$ and $y, y' \in \N$, and $x' y - x y' \geq 0$ showing that vectors $w(C), w(C')$ make an angle $\leq 180^o$. If $y = 0$, it follows that $y' = 0$ by the angle $\leq 180^o$, and thus ${\cal C} = \{C'\}$ is a reachable good multicycle. If $y \in \Nzero$, it follows from case (2) of Proposition~\ref{prop:goodcycle} that there exists a good reachable multicycle in $G$.

We now prove implication $(3) \Rightarrow (1)$. Suppose that there exists a reachable good multicycle and let us construct a winning strategy for $\playerOne$ for the decision problem $\EMPi^{\geq 0}$. We apply Proposition~\ref{prop:goodcycle}. If the reachable good multicycle is composed of a unique simple cycle $C$ with weight $w(C) \geq (0,0)$, then $\playerOne$ has a memoryless winning strategy (by Lemma~\ref{lem:cycle}). So let us suppose by Proposition~\ref{prop:goodcycle} that there exist two simple cycles $C$, $C'$ in the same reachable connected component such that their weight vectors $w(C) = (-x,y)$ and $w(C') = (x',-y')$ satisfy $x,x',y \in \Nzero$ and $y' \in \N$ and make an angle $\leq 180^o$. Moreover by Lemma~\ref{lem:goodcycle}, there exist $\alpha, \beta \in \Nzero$ such that
\begin{eqnarray}
\alpha \cdot w(C) + \beta \cdot w(C') \geq (0,0). \label{eq:positive} 
\end{eqnarray}
As  $C, C'$ are in the same reachable connected component, let $\pi_{C,C'}$ be a simple path from $C$ to $C'$, $\pi_{C',C}$ be a simple path from $C'$ to $C$, and $\pi_{0,C'}$ be a simple path from the initial vertex $v_0$ to $C'$. All those paths have their weight bounded by $|V| \cdot ||E||$. To balance the possibly negative energy of $\pi_{C,C'}$ and $\pi_{C',C}$, we choose $\gamma \in \N$ such that
\begin{eqnarray}
\gamma \cdot w_1(C') = \gamma \cdot x' \geq (|\pi_{C,C'}|+|\pi_{C',C}|) \cdot ||E||. \label{eq:gamma} 
\end{eqnarray}
Notice that $\gamma \leq 2|V|\cdot ||E||$. Consider the following strategy $\sigma_1$ for $\playerOne$: 
\begin{enumerate}
\item Follow path $\pi_{0,C'}$
\item Initialize $Z = 1$
\item At round $Z$
\begin{enumerate}
\item Loop $Z \cdot \beta + \gamma$ times in cycle $C'$
\item Follow path  $\pi_{C',C}$
\item Loop $Z \cdot \alpha$ times in cycle $C$
\item Follow path  $\pi_{C,C'}$
\end{enumerate}
\item Increment $Z$ by $1$ and goto 3.
\end{enumerate}
Notice that Example~\ref{ex:meminfinie} is a particular case of the studied situation with $\alpha = \beta = 1$ and $\gamma = 0$. With arguments similar to the ones done for Example~\ref{ex:meminfinie}, let us prove that the play $\rho$ consistent with $\sigma_1$  belongs to $\Obj = \energy_1(c_0) \cap \MPi_2 (\geq 0)$ with $c_0 =  |w_1(\pi_{0,C'})|$. By definition of $c_0$ and thanks to (\ref{eq:positive}) and (\ref{eq:gamma}), the energy level on the first dimension never drops below zero. Thus we only focus on the second dimension. Consider any prefix $\pi = \rhofactor{0}{k}$ of $\rho$ during round $Z$. The length $k$ is $\pi$ is upper bounded by
\begin{eqnarray}
k            \leq |\pi_{0,C'}| + \sum_{\ell=1}^{Z} ((\ell  \beta + \gamma)\cdot |C'| + |\pi_{C',C}| + \ell  \alpha \cdot |C| + |\pi_{C,C'}| ) \in {\cal O}(Z^2).\label{eq:denominateur}
\end{eqnarray}
The weight $w_2(\pi)$ is lower bounded by   
\begin{eqnarray}
w_2(\pi)    \geq  - |V|\cdot ||E|| ~-~ (Z-1) \cdot (\gamma + 2) \cdot |V|\cdot ||E|| ~-~ (Z\beta + \gamma  + 1)  \cdot |V|\cdot ||E|| ~\in {\cal O}(Z).\label{eq:numerateur}
\end{eqnarray}
Indeed, the first term is a lower bound for $w_2(\pi_{0,C'})$, in view of (\ref{eq:positive}) the second term is a lower bound on the remaining negative weight after rounds $1, 2, \ldots, Z-1$, and the last term is a lower bound on the worst weight during round $Z$ (just after (b)). By (\ref{eq:denominateur}) and (\ref{eq:numerateur}), it follows that the mean-payoff-inf value  $\MPi_2(\rho)$ of $\rho$ is $\geq 0$ since the average weight $\frac{w_2(\pi)}{k}$ during round $Z$ is lower bounded by a quantity of the form $\frac{-Z}{Z^2}$ which converges to $0$. This shows that $\rho \in \Obj$.
\end{proof}

\subsection{Proof of Theorem~\ref{thm:onePlayer}}

We now have all the ingredients to prove Theorem~\ref{thm:onePlayer} which is the main result of Section~\ref{sec:OneP}. By Theorems~\ref{thm:caracterisation} and~\ref{thm:caracterisationLarge}, solving the energy mean-payoff desicion problem for player-1 game structures reduces to decide whether there exists a reachable good cycle or multicycle. We will show that this can be tested in polynomial time thanks to a result in~\cite{KosarajuS88}. In case of mean-payoff objectives with strict inequality, when a reachable good cycle exists, we know that $\playerOne$ has a winning strategy with memory size pseudo-polynomial in $||E||$ by Corollary~\ref{cor:goodcycle}, and we will provide an example of game where pseudo-polynomial memory is necessary for $\playerOne$ to win. In case of mean-payoff objectives with non-strict inequality, when a reachable good multicycle exists, we know that $\playerOne$ may need a strategy with infinite memory to win by Example~\ref{ex:meminfinie}.

\begin{theorem}{\cite{KosarajuS88}} \label{thm:cyclepoly}
Let $G$ be a game structure.
\begin{enumerate}
\item Deciding whether $G$ contains a multicycle $\pi$ with $w(\pi) = (0,0)$  can be done in polynomial time.
\item Deciding whether $G$ contains a cycle $\pi$ with $w(\pi) = (0,0)$  can be done in polynomial time.
\end{enumerate}
\end{theorem}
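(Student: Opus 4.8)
The plan is to treat the two items separately: the first admits a short self-contained proof by linear programming, while the second is the genuinely delicate statement, namely the two-dimensional \emph{zero-cycle problem}, and for it I would rely on the argument of~\cite{KosarajuS88}.

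For the first item, note that a multicycle of $G$ is, after reindexing, the same object as a nonzero map $f:E\to\N$ satisfying the flow-conservation (Kirchhoff) equations at every vertex, i.e.\ for each $v\in V$ the total of $f$ over the edges entering $v$ equals the total over the edges leaving $v$: given a multiset of simple cycles, let $f(e)$ count the occurrences of $e$; conversely, a nonzero nonnegative integer circulation decomposes into simple cycles by repeatedly extracting one from its support. Under this correspondence $w(\pi)=\sum_{e\in E}f(e)\,w(e)$, so ``$G$ has a multicycle $\pi$ with $w(\pi)=(0,0)$'' is equivalent to feasibility over $\R_{\geq 0}$ of the linear system formed by the conservation equations together with $\sum_{e}f(e)\,w_1(e)=0$, $\sum_{e}f(e)\,w_2(e)=0$ and $\sum_{e}f(e)\geq 1$ (the last inequality rules out $f=0$ and is harmless since any feasible point can be rescaled). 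Indeed, if the system is feasible it has a rational solution, which scales to an integer circulation and hence to the desired multicycle; conversely a zero-weight multicycle yields such a circulation, rescaled to satisfy $\sum_e f(e)\geq 1$. As linear-programming feasibility is polynomial, this settles the first item.

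The second item is harder, and one should not expect to reuse the first: the support of a single closed walk is strongly connected, whereas a zero-weight multicycle may only be realizable as a \emph{disconnected} union of simple cycles. For example, take $V=\{x,y\}$ with a self-loop of weight $(1,0)$ on $x$, a self-loop of weight $(-1,0)$ on $y$, an edge $x\to y$ of weight $(0,1)$ and an edge $y\to x$ of weight $(0,0)$; then $\{$self-loop on $x$, self-loop on $y\}$ is a zero-weight multicycle, but any closed walk using both self-loops must also traverse $x\to y$, adding a strictly positive amount on the second coordinate that is never compensated, so $G$ has no zero-weight cycle. The plan for the second item is therefore: first reduce to a single strongly connected component (the support of any zero-weight closed walk lies inside one, and conversely an Eulerian circuit of a zero-weight connected balanced multigraph is such a walk), and then decide whether that component contains a zero-weight circulation whose support is \emph{connected}. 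This connectivity requirement is the main obstacle: it is not a linear constraint, and a brute-force search in the product of $G$ with a bounded range of energy vectors does not help either, since a zero-weight cycle may be forced to have length exponential in $||E||$ (for instance a self-loop of weight $(2^n,0)$ joined to a self-loop of weight $(-1,0)$ by zero-weight edges forces roughly $2^n$ traversals of the second loop). This is exactly where I would invoke~\cite{KosarajuS88}: in fixed dimension the set of achievable weight vectors of closed walks through a given vertex, though infinite, is structured enough (a finite union of cosets of lattices) that membership of $(0,0)$ can be tested in polynomial time, and I would defer the details of that analysis to~\cite{KosarajuS88}.
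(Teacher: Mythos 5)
The paper does not prove Theorem~\ref{thm:cyclepoly} at all: both items are imported as a black box from~\cite{KosarajuS88}, so there is no in-paper argument to compare yours against. Your treatment is consistent with that and goes somewhat further. The linear-programming proof of item (1) --- multicycles are exactly the nonzero nonnegative integer circulations, the weight is linear in the circulation, and rational feasibility rescales to an integer witness --- is correct and standard (it is essentially the argument used for multi-mean-payoff games in~\cite{VelnerC0HRR15}, which the paper discusses nearby). Your analysis of item (2) correctly isolates why it does not follow from item (1): the support of a closed walk must be strongly connected, your two-vertex example of a zero-weight multicycle with no zero-weight cycle is valid, and the $(2^n,0)$ versus $(-1,0)$ example correctly rules out any search whose running time is polynomial in the binary encoding of the weights, since a witness cycle may need length about $2^n$. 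The one inaccuracy is the parenthetical description of how~\cite{KosarajuS88} actually decides item (2): their method is not a lattice-coset membership test but a recursion that repeatedly computes, via linear programming, the maximal support of a zero-weight circulation, restricts the graph to that support, decomposes it into strongly connected components, and recurses until some component carries an everywhere-positive zero-weight circulation (an integer multiple of which is a connected Eulerian multigraph, whose Euler circuit is the desired zero cycle). Since you explicitly defer those details to the reference --- exactly as the paper does --- this misdescription is cosmetic rather than a gap.
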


\begin{proof}[Proof of Theorem~\ref{thm:onePlayer}] We begin by showing that the energy mean-payoff decision problem can be solved in polynomial time in case of strict inequalities.
By Theorem~\ref{thm:caracterisation}, solving the decision problem $\EMPi^{>0}$ or $\EMPs^{>0}$ is equivalent to testing the existence a reachable good cycle in $G$. The latter property can be checked in polynomial time as follows. Let $G'$ be the graph composed of the vertices of $G$ reachable from the initial state $v_0$. We derive two graphs $G'_1$ and $G'_2$ from $G'$ where
\begin{itemize}
\item in $G'_1$, one self-loop with weight $(-1,0)$ is added to each vertex of $G'$;
\item in $G'_2$, two self-loops with respective weights $(-1,0)$ and $(0,-1)$ are added to each vertex of $G'$.
\end{itemize}
Clearly, $G'_1$ and $G'_2$ can be computed in polynomial time. One can easily verify that there is a good cycle in $G'$, i.e. a cycle $\pi$ with $w_1(\pi) \geq 0$ and $w_2(\pi) > 0$, if and only if there is a cycle $\pi_2$ in $G'_2$ with $w(\pi_2) = (0,0)$, but no cycle $\pi_1$ in $G'_1$ with $w(\pi_1) = (0,0)$. Indeed, the first condition guarantees the existence of a cycle $\pi$ in $G'$ with weight $w(\pi) \geq (0,0)$ whereas the second condition guarantees that $w_2(\pi) \neq 0$. These tests can be done in polynomial time by the second statement of Theorem~\ref{thm:cyclepoly}.

The arguments are simpler for proving that the energy mean-payoff decision problem can be solved in polynomial time in case of non-strict inequalities.
By Theorem~\ref{thm:caracterisationLarge}, solving the decision problem $\EMPi^{\geq 0}$ or $\EMPs^{\geq 0}$ is equivalent to testing the existence of a reachable good multicyle.  Let $G'$ be the graph composed of the vertices of $G$ reachable from the initial state $v_0$ and such that self-loops with respective weights $(-1,0)$ and $(0,-1)$ are added to each vertex. This graph can be computed in polynomial time, and  there is a multicycle in $G'$ with weight $(0,0)$ if and only there exists a reachable good multicycle in $G$. This test can be done in polynomial time by the first statement of Theorem~\ref{thm:cyclepoly}.

We now turn to the memory requirements of winning strategies for $\playerOne$. In case of non-strict inequalities for the mean-payoff objective, Example~\ref{ex:meminfinie} indicates that infinite memory is necessary for $\playerOne$ to win. In case of strict inequalities, finite-memory strategies with size pseudo-polynomial in $||E||$ are sufficient for $\playerOne$ to win by Corollary~\ref{cor:goodcycle}.

\begin{figure}[h]
\centering
  \begin{tikzpicture}[scale=4]
    \everymath{\scriptstyle}
    \draw (0,0) node [circle, draw] (A) {$v_0$};
    \draw (0.75,0) node [circle, draw] (B) {$v_1$};
    
	\draw[->,>=latex] (A) to[bend left] node[above,midway] {$(W,-W)$} (B);
	\draw[->,>=latex] (B) to[bend left] node[below,midway] {$(W,-W)$} (A);
	
    \draw[->,>=latex] (B) .. controls +(45:0.4cm) and +(135:0.4cm) .. (B) node[above,midway] {$(-1,1)$};
	\path (-0.2,0) edge [->,>=latex] (A);    
    
    \end{tikzpicture}
\caption{Player-1 game structure where $\playerOne$ needs pseudo-polynomial memory to win.}
\label{fig::EMP:pseudoPolyMemory}
\end{figure}

It remains to prove that pseudo-polynomial memory is necessary in case of strict inequalities. Consider the player-1 game structure on Figure~\ref{fig::EMP:pseudoPolyMemory} with $||E|| = W \in \Nzero$. In this game, $\playerOne$ has a finite-memory winning strategy for the objective $\Obj = \energy_1(c_0) \cap \MPi(> -\epsilon)$ for $c_0 = W$ and for all $\epsilon \in \Q, \epsilon > 0$ (using threshold $\neq 0$ is allowed by Remark~\ref{rem:threshold}). Indeed his winning strategy consists in repeating the following cycle $\pi$: go from $v_0$ to $v_1$, loop $2 W$ times in $v_1$, and go back to $v_0$. As this cycle $\pi$ has weight $w(\pi) = (0,0)$, the energy objective is satisfied with the initial credit $c_0 = W$, and the mean-payoff-inf objective is satisfied with non-strict inequality $\geq 0$, and thus with strict inequality $> -\epsilon$ for all $\epsilon$. One can check that this strategy uses a Moore machine with $2||E|| + 1$ memory states.

Let us prove that $\playerOne$ has no finite-memory strategy $\sigma_1$ with size $||E||$ to positively solve the problem $\EMPi^{>-\epsilon}$. Assume the contrary and take $\epsilon = \frac{1}{2W}$. Suppose that the cycle $\pi$ infinitely repeated by strategy $\sigma_1$ has a cycle decomposition using $\alpha$ occurrences of cycle $v_0v_1v_0$ and $\beta$ occurrences of cycle $v_1v_1$, that is, 
\begin{eqnarray}
w(\pi) = \alpha \cdot (2W, -2W) + \beta \cdot (-1,1) \label{eq:poids}
\end{eqnarray}
As this cycle is simple in the graph $G(\sigma_1)$ (equal to the product of $G$ with the Moore machine of $\sigma_1$), we have 
\begin{eqnarray}
0 <  |\pi| =  2\alpha + \beta \leq |V| \cdot ||E|| = 2 W \label{eq:simple}
\end{eqnarray}
As $\sigma_1$ is winning, we have by $(\ref{eq:poids})$
\begin{eqnarray}
\alpha \cdot 2W - \beta &\geq& 0 \label{eq:energy}\\
\frac{-\alpha \cdot 2W + \beta}{2 \alpha + \beta} &>& - \epsilon \label{eq:MP}
\end{eqnarray}
If $\alpha  = 0$, then in (\ref{eq:simple}) $\beta > 0$ and in (\ref{eq:energy}) $\beta \leq 0$, which is impossible. Hence $\alpha \geq 1$. If $\alpha \cdot 2W - \beta = 0$, then $\beta = \alpha \cdot 2W \geq 2W$ in contradiction with $\alpha \geq 1$ and $2\alpha  + \beta \leq  2 W$ in (\ref{eq:simple}). Hence $\alpha \cdot 2W - \beta \geq 1$. It follows with (\ref{eq:MP})  that
$$\frac{1}{2 \alpha + \beta} \leq \frac{\alpha \cdot 2W - \beta}{2 \alpha + \beta} < \epsilon = \frac{1}{2W}.$$
Therefore $2W < 2 \alpha + \beta$ in contradiction with (\ref{eq:simple}).

\end{proof}



\section{Two-player setting}\label{sec:twoP}

In this section we consider two-player energy mean-payoff games. We show that the four variants of the energy mean-payoff decision problem are in co-NP. To establish this, we show that if the answer to this problem is {\sf No}, then $\playerTwo$ has a spoiling memoryless strategy $\sigma_2$ that he can use for all initial credits $c_0 \in \N$. In the game structure $G(\sigma_2)$, $\playerOne$ is then the only player and we can apply the results of the previous section, in particular Theorem~\ref{thm:onePlayer}. 
We also show that in case of mean-payoff objectives with strict inequality, the energy mean-payoff decision problem can be reduced to the unknown initial credit problem for 4-dimensional energy games. If follows by~\cite{JurdzinskiLS15} that our decision problem can be solved in pseudo-polynomial time and that finite-memory winning strategies with pseudo-polynomial size for $\playerOne$ exist and can effectively be constructed. In case of mean-payoff objectives with non-strict inequality, we already know that infinite memory is necessary for $\playerOne$ in player-1 energy mean-payoff games by Theorem~\ref{thm:onePlayer}. We show how to construct such strategies. The results that we establish in this section are summarized in the following theorem.

\begin{theorem} \label{thm:twoPlayer}
The energy mean-payoff decision problem for two-player game structures is in co-NP. Moreover,
\begin{itemize}
\item both problems  $\EMPi^{> 0}$ and $\EMPs^{> 0}$ can be solved in pseudo-polynomial time and exponential-memory strategies are sufficient for $\playerOne$ to win;
\item for both problems $\EMPi^{\geq 0}$ and $\EMPs^{\geq 0}$, in general, $\playerOne$ needs infinite memory to win.
\end{itemize}
\noindent
In all cases, winning strategies can be effectively constructed for both players.
\end{theorem}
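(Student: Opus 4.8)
The statement has three ingredients: co-NP membership, the pseudo-polynomial-time bound and exponential memory for $\playerOne$ in the strict case, and the infinite-memory construction for $\playerOne$ in the non-strict case; effectiveness for $\playerTwo$ comes essentially for free once the first ingredient is in place. For co-NP membership, the plan is to reduce a \textsf{No}-instance to the one-player setting. The crucial step is to establish that if there is no $c_0$ and no winning strategy for $\playerOne$ from $v_0$, then $\playerTwo$ has a single \emph{memoryless} strategy $\sigma_2$ that is spoiling for \emph{all} initial credits simultaneously (Proposition~\ref{prop:memoryless}). Granting this, the co-NP algorithm is: guess such a $\sigma_2$ (a polynomial-size object, namely one outgoing edge per $\playerTwo$-vertex); the product $G(\sigma_2)$ is a player-$1$ game structure, and by Theorem~\ref{thm:onePlayer} one can decide in polynomial time whether $\playerOne$ has a winning strategy there for some $c_0$. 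Soundness of the certificate is immediate (every outcome consistent with $\sigma_2$ is a play of $G(\sigma_2)$), and completeness is exactly Proposition~\ref{prop:memoryless}; moreover enumerating all memoryless $\sigma_2$ and running the polynomial-time test gives an effective construction of $\playerTwo$'s strategy.

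To prove Proposition~\ref{prop:memoryless}, I would fix $v_0$ from which the answer is \textsf{No}, note that by Borel determinacy $\playerTwo$ wins for every fixed $c_0$, and then build the uniform memoryless strategy by combining (i) positional determinacy for $\playerTwo$ of the unknown-initial-credit energy game on dimension~$1$, (ii) positional determinacy for $\playerTwo$ of the threshold mean-payoff game on dimension~$2$, and (iii) a decomposition of the arena along $\playerOne$'s winning region for the dimension-$1$ energy objective: on the sub-arena where $\playerOne$ can keep the first energy bounded below, $\playerTwo$ must already be able to enforce a bad mean-payoff on dimension~$2$, so the two memoryless witnesses can be glued on the corresponding trap-like regions. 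I expect this step --- showing that the objective $\{\rho : \liminf_{k} w_1(\rhofactor{0}{k}) = -\infty\} \cup \{\rho : \MP_2(\rho) \not\sim 0\}$ is half-positional for $\playerTwo$ and that the gluing is correct --- to be the main obstacle of the theorem.

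For the strict case ($\EMPi^{>0}$ and $\EMPs^{>0}$), the plan is to reduce the decision problem to the unknown-initial-credit problem for $4$-dimensional energy games (Proposition~\ref{prop:reductionEnergy}), and then invoke~\cite{JurdzinskiLS15}. The energy objective on dimension~$1$ is kept; the mean-payoff objective $\MP_2(>0)$ is turned into an energy objective using that a mean-payoff value, if positive, is the average of a simple cycle and hence at least $\frac{1}{|V|}$, so that winning $\MP_2(>0)$ is equivalent to winning $\MP_2(\geq 0)$ for the rescaled weight $|V|\cdot w_2-1$, which is in turn equivalent to an energy condition; the remaining dimensions are what makes this equivalence compatible with the conjunction and with the distinction between $\MPi$ and $\MPs$ in a two-player arena (it is worth first checking whether a product construction recording a bounded amount of information about dimension~$2$ lets one get down to fewer dimensions). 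Once this reduction is established, the fixed-dimension result of~\cite{JurdzinskiLS15} yields a pseudo-polynomial-time algorithm and finite-memory winning strategies for $\playerOne$ of exponential size, which translate back to the original game; the point to verify is the correctness of the reduction in both directions, in particular turning a $\playerOne$-strategy witnessing $\energy_1(c_0)\cap\MP_2(>0)$ into one for the energy game and conversely.

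For the non-strict case ($\EMPi^{\geq 0}$ and $\EMPs^{\geq 0}$), infinite memory for $\playerOne$ is already forced by Example~\ref{ex:meminfinie}, which is a player-$1$ instance, so only the construction of a winning strategy remains when the answer is \textsf{Yes} (Proposition~\ref{prop:non-strict-effective}). The plan is to fix a sequence of positive rationals $\epsilon_n \to 0$ and, for each $n$, to shift $w_2$ by $\epsilon_n$ (clearing denominators as in Remark~\ref{rem:threshold}) to obtain a game $G_{\epsilon_n}$ in which the objective $\energy_1(c_0)\cap\MP_2(\geq 0)$ implies $\energy_1(c_0)\cap\MP_2(>0)$; by the strict case $G_{\epsilon_n}$ admits a finite-memory strategy $\sigma_1^{(n)}$ winning uniformly from its winning region. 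One then plays $\sigma_1^{(1)}, \sigma_1^{(2)}, \ldots$ in sequence, staying longer and longer in each phase, exactly in the spirit of the ``round~$Z$'' strategies of Example~\ref{ex:meminfinie} and of the proof of Theorem~\ref{thm:caracterisationLarge}: each $\sigma_1^{(n)}$ keeps the dimension-$1$ energy above a fixed level so that a finite global initial credit absorbs all the bounded transition costs, while the dimension-$2$ average stays above a quantity of order $\frac{-Z}{Z^2}$ which tends to $0$ since phase $n$ guarantees mean-payoff at least $-\epsilon_n$ and $\epsilon_n\to 0$. The delicate point, specific to the two-player setting, is that a phase switch must only be performed from a vertex that is winning for the next strategy $\sigma_1^{(n+1)}$; one must argue that such re-synchronisation is always reachable (this is where the uniformity from the strict case is used) and combine it with the rate estimate to conclude that the resulting infinite-memory strategy is winning and effectively constructible.
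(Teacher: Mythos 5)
Your top-level assembly coincides with the paper's: co-NP by guessing a memoryless spoiling strategy for \playerTwo (Proposition~\ref{prop:memoryless}) and running the polynomial one-player test of Theorem~\ref{thm:onePlayer} on $G(\sigma_2)$; the strict case by reduction to $4$-dimensional energy games and~\cite{JurdzinskiLS15}; the non-strict case by chaining finite-memory strategies for $\MPi_2(>-\epsilon_i)$ with $\epsilon_i\to 0$. Your last paragraph is essentially Proposition~\ref{prop:non-strict-effective} as the paper proves it, and your worry about re-synchronisation dissolves because the winning region for the non-strict objective is contained in the winning region of every $\epsilon_i$-relaxation, so any vertex reached during a phase is a legal starting point for the next strategy (the paper handles the energy bookkeeping across phases via Lemma~\ref{lem:energy} and the constants $\kappa,\gamma$). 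However, two of the supporting arguments you sketch would fail as stated. For Proposition~\ref{prop:memoryless}, your plan rests on the claim that if the conjunction is lost then, on \playerOne's winning region for the dimension-$1$ energy objective alone, \playerTwo can already spoil the dimension-$2$ mean-payoff. This is false: take a single \playerOne-vertex with two self-loops of weights $(1,-1)$ and $(-1,1)$ and the objective $\energy_1(c_0)\cap\MPi_2(>0)$. \playerOne wins each conjunct separately but loses the conjunction (no good cycle exists, by Theorem~\ref{thm:caracterisation}), and there is no \playerTwo-vertex at all, so nothing can be spoiled on that region. The obstruction is exactly that \playerOne may need different strategies for the two conjuncts, which no gluing of separate positional witnesses can detect. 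The paper instead proves the proposition by induction on $|E|-|V|$, splitting the outgoing edges of a \playerTwo-vertex into two sub-arenas and recombining \playerOne's strategies using prefix-independence and convexity of $\MPi_2$ together with Lemma~\ref{lem:energy}; this edge-induction in the style of~\cite{GimbertZ05,KopczynskiN14} is the step you cannot avoid.

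Your reduction for the strict case is also unsound as sketched: it relies on the premise that a winning value of $\MPi_2$ is at least $\frac{1}{|V|}$, so that $\MPi_2(>0)$ can be replaced by $\MPi_2(\geq 0)$ for the rescaled weights $|V|\cdot w_2-1$. Under the energy constraint on dimension~$1$ the witnessing good cycle is in general non-simple and of pseudo-polynomial length (case~(2) of Proposition~\ref{prop:goodcycle}); for instance with self-loops of weights $(1,-1)$ and $(-W,W+1)$ the best mean-payoff on dimension~$2$ compatible with bounded energy is $\frac{1}{W+1}$, which is below $\frac{1}{|V|}$, so your rescaled test answers {\sf No} on a {\sf Yes}-instance, and repairing it by rescaling with a pseudo-polynomial factor would destroy the polynomiality of the reduction. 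The paper's gadget is genuinely different: every edge of $G$ is replaced by a five-edge widget with two fresh vertices, and two auxiliary energy dimensions with weights $(-1,1)$ and $(1,-1)$ force the number of loops on the new $(s,s)$ edge to match the number of traversed original edges, while each block of such loops is entered through an edge charging $-1$ on dimension~$2$; keeping all four energies nonnegative then forces every cycle of the product to have strictly positive $w_2$, which is what converts $\MPi_2(>0)$ into a pure energy condition. You would need this (or an equivalent) construction; the rescaling shortcut does not work here.
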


The proof of this result is detailed in the following sections. 

\subsection{Memoryless winning strategies for $\playerTwo$}

For all four variants of mean-payoff energy objective, we here establish that $\playerTwo$ does not need any memory for his winning strategies. Therefore, thanks to Theorem~\ref{thm:onePlayer}, the energy mean-payoff decision problem can be solved in co-NP.

%

\begin{proposition}\label{prop:memoryless}
Let ${\sim} \in \{>,\geq\}$. For all energy mean-payoff games $G$ and all initial vertices $v_0$, if the answer to the energy mean-payoff problem $\EMPi^{\sim 0}$ (resp. $\EMPs^{\sim 0}$) is {\sf No}, then there exists a memoryless strategy $\sigma_2$ for $\playerTwo$ such that for all initial credits $c_0 \in \N$, no play $\rho$ consistent with $\sigma_2$ from $v_0$ belongs to $\Obj = \energy_1(c_0) \cap \MPi_2(\sim 0)$ (resp. to $\Obj = \energy_1(c_0) \cap \MPs_2(\sim 0)$).
\end{proposition}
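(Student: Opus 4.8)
The plan is to assume that the answer to the decision problem is {\sf No} and extract a single memoryless spoiling strategy for $\playerTwo$ that works uniformly for every initial credit $c_0$. The first step is to get, for each fixed $c_0 \in \N$, a \emph{memoryless} winning strategy for $\playerTwo$ against the objective $\energy_1(c_0) \cap \MPi_2(\sim 0)$. This objective is not of a shape for which memoryless determinacy for the antagonist is standard, so I would instead reason as follows: since the answer is {\sf No}, for every $c_0$ the antagonist wins against $\energy_1(c_0)\cap \MPi_2(\sim 0)$; I would first argue that $\playerTwo$ in fact has a memoryless strategy that is winning \emph{against the larger objective} $\MPi_2(\sim 0)$ alone restricted to plays staying energy-feasible, or — more robustly — split into the two natural cases. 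Either $\playerTwo$ can already falsify the mean-payoff objective on the second dimension from $v_0$ (ignoring energy); then by memoryless determinacy of mean-payoff games~\cite{em79} he has a single memoryless $\sigma_2$ with $\MPs_2(\rho) < 0$ (hence $\not\sim 0$ for the $\MPi$ variant too) on every consistent play, and this $\sigma_2$ obviously spoils $\Obj$ for every $c_0$. Or $\playerOne$ can guarantee $\MPi_2(\sim 0)$ (i.e.\ win the mean-payoff game) from $v_0$; then the only way the answer can be {\sf No} is that $\playerTwo$ can force the energy on the first dimension to be unbounded below for every $c_0$, which (by memoryless determinacy of energy/mean-payoff games on dimension~1, viewing $-w_1$) gives a memoryless $\sigma_2$ with $\liminf_k w_1(\rhofactor{0}{k}) = -\infty$, equivalently $\MPi_1(\rho) < 0$ or energy unbounded, on every consistent play — again spoiling $\energy_1(c_0)$ for every $c_0$ simultaneously.

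The subtle point, and the main obstacle, is the \emph{mixed} case: it may be that from $v_0$ neither player wins the pure mean-payoff game nor the pure energy game outright, yet $\playerTwo$ still wins the conjunction for every $c_0$ by exploiting the \emph{interaction} — forcing $\playerOne$ to choose between losing energy and losing mean-payoff. To handle this I would not try to prove memoryless determinacy of the conjunctive objective directly (it is a conjunction of a quantitative threshold objective and a prefix-independent one, and the relevant general results such as those of~\cite{GimbertZ05} on memoryless determinacy apply to each condition but not obviously to their intersection with the $\exists c_0$ quantifier). Instead I would proceed via a \emph{fixed-credit reduction followed by a uniformization argument}. For a fixed $c_0$, encode the energy level (capped at some bound $B = |V|\cdot\|E\|$, beyond which it can be treated as ``safe'' by known energy-game arguments) into the vertex, obtaining a finite game $G_{c_0}$ on $V \times \{-c_0,\dots,B\}$ with a single prefix-independent objective — ``the energy component never drops to $-c_0{-}1$, and $\MPi_2 \sim 0$'' — which, being an intersection of a safety objective and a mean-payoff objective, admits memoryless optimal strategies for \emph{both} players by~\cite{GimbertZ05} (safety is memoryless-determined; mean-payoff is memoryless-determined; their intersection still is, since on the finite expanded arena the game reduces to mean-payoff on the safe subarena). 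This yields, in $G_{c_0}$, a memoryless winning $\sigma_2^{c_0}$ for $\playerTwo$; projecting it back to $G$ gives a strategy on $G$ that depends only on the current vertex \emph{and} the (capped) energy level — i.e.\ finite memory, not memoryless, in $G$.

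The final step is the uniformization over $c_0$. I claim one can extract a genuinely memoryless strategy on $G$ that works for \emph{all} $c_0$ at once. The key observation is monotonicity: if the answer is {\sf No} for the $\exists c_0$ question, then for every $c_0$ the antagonist wins $G_{c_0}$, and — because raising the cap only adds vertices to the safe region — a memoryless antagonist strategy can be chosen consistently. Concretely, take $c_0$ larger than $|V|\cdot\|E\|$ (a larger initial credit makes $\playerOne$'s task easier, so if $\playerTwo$ still wins there he wins for all smaller $c_0$ too); in $G_{c_0}$ take a memoryless winning $\sigma_2^{c_0}$. Now project: I would argue that the \emph{decision made on the first visit to each vertex} already suffices, because a memoryless-determinacy argument à la~\cite{GimbertZ05} applied to the ``worst payoff function'' $\Obj = \energy_1 \cap \MPi_2(\sim 0)$ — which is both prefix-independent-modulo-the-safety-part and submixing in the sense of~\cite{GimbertZ05} — shows that if $\playerTwo$ has any winning strategy for the family $\{\Obj\}_{c_0}$ then he has a memoryless one. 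Then one checks the play consistent with this memoryless $\sigma_2$ from $v_0$: in the one-player game $G(\sigma_2)$, by Theorem~\ref{thm:onePlayer} there is no reachable good cycle (resp.\ good multicycle), so no play satisfies $\energy_1(c_0)\cap\MPi_2(\sim 0)$ for any $c_0$, which is exactly the conclusion. I expect the delicate part to be justifying the memoryless-determinacy transfer for the intersection objective together with the $\exists c_0$ quantifier cleanly — likely the authors invoke~\cite{GimbertZ05} on a suitable reformulation — and the rest (the two ``pure'' cases and the capping argument) to be routine.
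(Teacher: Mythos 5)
Your proposal does not establish the proposition; the two steps that carry all the weight are precisely the ones that are asserted rather than proved. First, the claim that the capped game $G_{c_0}$ (safety on the energy component intersected with mean-payoff) ``admits memoryless optimal strategies for both players'' is false for $\playerOne$ --- the paper itself shows $\playerOne$ needs infinite memory even in the one-player case for $\sim\,=\,\geq$ --- and for $\playerTwo$ it is exactly the statement to be proved: his objective is the \emph{complement}, a disjunction whose energy disjunct is not prefix-independent, so the submixing criterion of~\cite{GimbertZ05} and the results of~\cite{KopczynskiN14} do not apply directly (the paper says this explicitly just before its proof). Second, the uniformization over $c_0$ is broken: a strategy of $\playerTwo$ that spoils $\energy_1(c_0)\cap\MPi_2(\sim 0)$ for $c_0=|V|\cdot\|E\|$ need not spoil it for larger credits, and the {\sf No} answer quantifies over \emph{all} $c_0$, so there is no single ``large enough'' credit to reduce to; the subsequent ``decision on the first visit suffices'' projection is again an unproved assertion of memorylessness. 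Your opening case split (pure mean-payoff loss vs.\ pure energy loss) is fine as far as it goes, but as you note it leaves open the mixed case, and that is where the argument collapses.

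The paper's actual proof avoids all of this by the edge-induction technique (an adaptation of~\cite{Brazdil2010,ChatterjeeRR14,GimbertZ05,KopczynskiN14} to the non-prefix-independent setting): induct on $|E|-|V|$; pick a $\playerTwo$ vertex $v^*$ with at least two outgoing edges and split them into $E_\ell$ and $E_r$, giving subgames $G_\ell,G_r$. If the answer is {\sf No} in one subgame, the induction hypothesis yields a memoryless spoiler there that also spoils in $G$. Otherwise $\playerOne$ wins both subgames for some credits, and one interleaves his two winning strategies at $v^*$ (routing each excursion according to whether its first edge lies in $E_\ell$ or $E_r$) to build a winning strategy in $G$, contradicting the {\sf No} answer; the mean-payoff part of the combined play is handled by prefix-independence and convexity of $\MPi_2(\sim 0)$, and the energy part by Lemma~\ref{lem:energy}, which guarantees that the energy level at each return to $v^*$ stays above the minimal credits needed by the two subgame strategies. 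If you want to salvage your approach, you would have to prove the uniform memoryless-spoiling statement for the conjunction directly, which is essentially this induction.
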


As a preambule to the proof of this proposition, we state the following lemma. 

\begin{lemma}
\label{lem:variants-memoryless}
For all energy mean-payoff games $G$ and initial vertices $v_0$,
let ${\sim} \in \{>, \geq \}$ and let $\sigma_2$ be a memoryless strategy for $\playerTwo$. Then 
$\sigma_2$ is winning from $v_0$ for $\energy_1(c_0) \cap \MPi_2(\sim 0)$ for all initial credits $c_0$
if and only if $\sigma_2$ is winning from $v_0$ for $\energy_1(c_0) \cap \MPs_2(\sim 0)$ for all initial credits $c_0$.
\end{lemma}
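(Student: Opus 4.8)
The plan is to prove the two implications separately; one is immediate and the other reduces to the one-player characterizations of Section~\ref{sec:OneP}.

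\textbf{Easy direction.} Since $\MPi_2(\rho) \le \MPs_2(\rho)$ for every play $\rho$ and the relation ${\sim} \in \{>,\ge\}$ is upward closed, we have $\energy_1(c_0) \cap \MPi_2(\sim 0) \subseteq \energy_1(c_0) \cap \MPs_2(\sim 0)$ for every $c_0$. Hence if no play from $v_0$ consistent with $\sigma_2$ lies in the larger set for any $c_0$, then none lies in the smaller set for any $c_0$; that is, $\sigma_2$ winning from $v_0$ for $\energy_1(c_0) \cap \MPs_2(\sim 0)$ for all $c_0$ implies $\sigma_2$ winning from $v_0$ for $\energy_1(c_0) \cap \MPi_2(\sim 0)$ for all $c_0$.

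\textbf{Converse.} Here I would argue by contraposition. Since $\sigma_2$ is memoryless, the structure $G' = G(\sigma_2)$ is obtained from $G$ by retaining, from each vertex of $V_2$, only the edge chosen by $\sigma_2$; reassigning all vertices of $G'$ to $\playerOne$ changes neither the set of plays nor the weights, so $G'$ is a player-1 game structure whose plays from $v_0$ are exactly the plays of $G$ from $v_0$ that are consistent with $\sigma_2$. Suppose $\sigma_2$ is not winning for the $\MPs$-variant, i.e. there are an initial credit $c_0$ and a play from $v_0$ in $G'$ belonging to $\energy_1(c_0) \cap \MPs_2(\sim 0)$, so $\playerOne$ wins the corresponding one-player energy mean-payoff game in $G'$ from $v_0$. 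Apply Theorem~\ref{thm:caracterisation} (if ${\sim}$ is $>$) or Theorem~\ref{thm:caracterisationLarge} (if ${\sim}$ is $\ge$) to $G'$: the existence of such a winning play is equivalent to the existence of a reachable good cycle (resp. reachable good multicycle) in $G'$, which in turn is equivalent to the existence of some initial credit $c_0'$ and a play from $v_0$ in $G'$ belonging to $\energy_1(c_0') \cap \MPi_2(\sim 0)$. That play is a play of $G$ consistent with $\sigma_2$, so $\sigma_2$ is not winning from $v_0$ for $\energy_1(c_0') \cap \MPi_2(\sim 0)$ either. This establishes the contrapositive of the missing implication.

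The argument is essentially bookkeeping; the only point requiring care — and the main thing to get right — is the handling of the quantifier over initial credits: the credit witnessing failure of the $\MPs$-variant and the one witnessing failure of the $\MPi$-variant need not coincide, but since ``not winning for all $c_0$'' is an existential statement over $c_0$, passing between them is legitimate, and this is precisely the slack exploited by Theorems~\ref{thm:caracterisation} and~\ref{thm:caracterisationLarge}, which are stated ``for some initial credit $c_0$''. All the substantive content is imported from those two theorems.
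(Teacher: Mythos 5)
Your proof is correct and follows essentially the same route as the paper, which simply notes that one should view $G(\sigma_2)$ as a player-1 game structure and invoke Theorems~\ref{thm:caracterisation} and~\ref{thm:caracterisationLarge}; your write-up just makes explicit the quantifier bookkeeping over initial credits and the (anyway subsumed) easy inclusion $\MPi_2(\sim 0) \subseteq \MPs_2(\sim 0)$.
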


The proof of this lemma is immediate: consider the game structure $G(\sigma_2)$ induced by a memoryless strategy $\sigma_2$ for $\playerTwo$ and apply Theorem~\ref{thm:caracterisation} and Theorem~\ref{thm:caracterisationLarge} in $G(\sigma_2)$.

We now proceed to the proof of Proposition~\ref{prop:memoryless}. Note that energy objectives are not prefix-independent objectives and this proposition does not directly follow from the results of~\cite{KopczynskiN14}. However our proof is an adaptation of the proof technique of~\cite{Brazdil2010,ChatterjeeRR14,GimbertZ05,KopczynskiN14}.

\begin{proof}[Proof of Proposition~\ref{prop:memoryless}]
We only need to establish the result for the problem $\EMPi^{\sim 0}$ as we can then directly obtain the result for the problem $\EMPs^{\sim 0}$ using Lemma~\ref{lem:variants-memoryless}. Let us denote by $\Obj(c_0)$ the objective $\energy_1(c_0) \cap \MPi_2(\sim 0)$.

We prove the proposition by induction on the number $\kappa = |E| - |V|$. Suppose that the answer to the decision problem is {\sf No}. 

If $\kappa = 0$, then every vertex belonging to $\playerTwo$ has a unique outgoing edge, and $\playerTwo$ has only one (memoryless) strategy $\sigma_2$. Therefore for all initial credits $c_0 \in \N$, no play $\rho$ consistent with $\sigma_2$ from $v_0$ belongs to $\Obj(c_0)$.

Suppose now the statement of Proposition~\ref{prop:memoryless} holds for $\kappa \leq n$ for some $n \in \N$ and let us prove that it is true for $\kappa = n+1$. For this purpose, let $G$ be a game structure such that $|E| - |V| = n+1$ and let $v_0$ be the initial vertex. If every vertex $v \in V_2$ has a unique outgoing edge, we are done as before. So suppose that there exists some vertex $v^* \in V_2$ that has at least two outgoing edges. We partition this set of edges into two non-empty subsets $E_\ell$ and $E_r$ and we define from $G$ two smaller game structures, denoted $G_\ell$ and $G_r$, with the same vertices and edges except that the set of outgoing edges from $v^*$ is restricted to  $E_\ell$ and $E_r$  respectively. By construction of $G_\ell$ and $G_r$, we have $|E_j| - |V_j| \leq n$ for $j \in \{\ell,r\}$, and so we can use the induction hypothesis on them. 

Suppose first that the answer to the decision problem is also {\sf No} in either $G_\ell$ or $G_r$. Then by induction hypothesis $\playerTwo$ has a memoryless winning strategy $\sigma_2$ in $G_\ell$ (resp. in $G_r$). As $\sigma_2$ is also winning for him in the whole game $G$, we are done.  

Suppose now that the answer to the decision problem is {\sf Yes} in both $G_\ell$ and $G_r$. Hence for each $j \in \{\ell,r\}$, let $\sigma_1^j$ be a winning strategy for $\playerOne$ from $v_0$ in $G_j$ for the objective $\Obj(c_0^j)$ for some $c_0^j \in \mathbb{N}$. We will show that $\playerOne$ is also winning in $G$ for the objective $\Obj(d_0)$ for some well-chosen $d_0$. This is in contradiction with the negative answer to the decision problem in $G$, hence only the previous situation holds and we are done. 

If for some $j \in \{\ell,r\}$, each play from $v_0$ in $G_j$ consistent with $\sigma_1^j$ never visits $v^*$, then $\sigma_1^j$ is also winning for $\playerOne$ in the whole game $G$ and we are done. So suppose that this is not the case: it follows that $\playerOne$ is also winning from $v^*$ in each $G_j$, $j \in \{\ell,r\}$. We denote by $\tau_1^j$ a winning strategy for him from $v^*$ in $G_j$ for minimal initial credit $c_j^*$. We can assume without lost of generality that $c_\ell^* \geq c_r^*$, that is  $c_\ell^* = c_r^* + \Delta$ for some $\Delta \geq 0$. 

Let us show that $\playerOne$ is winning in $G$ for the objective $\Obj(d_0)$ with $d_0 = c_0^\ell$. For this purpose, let us fix some notations. Let $\pi v$ be a path in $V^* \cdot V_1$ that visits $v^*$. We decompose $\pi v$ as a prefix $\pi_{v_0,v^*}$ from $v_0$ to the first visit of $v^*$, a (possibly empty) sequence $C_1,\ldots,C_n$ of cycles from one visit of $v^*$ to the next one, and a suffix $\pi_{v^*,v}$ from the last visit of $v^*$ to the last vertex $v$ of $\pi v$. We label each of the paths $C_1,\ldots,C_n,\pi_{v^*,v}$ with $\ell$ (resp. $r$) if its first edge belongs to $E_\ell$ (resp. $E_r$). We denote by $\pi'_\ell$ (resp. $\pi'_r$) the path constructed from $\pi$ by removing $\pi_{v_0,v^*}$ and all the paths $C_1,\ldots,C_n,\pi_{v^*,v}$ that are labeled by $r$ (resp. $\ell$). In this way $\pi'_j$ is a path in $G_j$ from $v^*$, for both $j \in \{\ell,r\}$. A similar decomposition can be done for a play $\rho$ that visits $v^*$. Two cases occur: either there is an infinite number of cycles $C_1,C_2, \ldots$, or there is a finite number of cycles $C_1,\ldots,C_n$ followed by a suffix $\rho_{v^*}$ of $\rho$ from the last visit of $v^*$. As done with the path $\pi v$, we denote by $\rho'_\ell$ (resp. $\rho'_r$) the play (or path) constructed from $\rho$ by removing $\rho_{v_0,v^*}$ as well as all $C_i$ and $\rho_v^*$ (if it exists) labeled by $r$ (resp. $l$). 

We can now construct a winning strategy $\lambda_1$ of $\playerOne$ from $v_0$ in $G$ for the objective $\Obj(c_0^\ell)$ as follows. Let $\pi v$ be a path in $V^* \cdot V_1$. 
\begin{itemize}
\item If $\pi v$ does not visit $v^*$, we let $\lambda_1(\pi v) = \sigma_1^\ell(\pi).$
\item If $\pi v$ visits $v^*$, consider its decomposition into $\pi_{v_0,v^*}$, $C_1,\ldots,C_n$, and $\pi_{v^*,v}$. If $j \in \{r,l\}$ is the label of $\pi_{v^*,v}$, we let $\lambda_1(\pi v) = \tau_1^j(\pi'_j)$.
\end{itemize}
Let $\rho$ be a play from $v_0$ in $G$ consistent with $\lambda_1$. Let us show that $\rho$ is winning for $\Obj(c_0^\ell)$. If $\rho$ does not visit $v^*$, then by definition of $\lambda_1$, $\rho$ is a play in $G_\ell$ consistent $\sigma_1^\ell$ from $v_0$ and is thus winning for $\Obj(c_0^\ell)$. If $\pi$ visits $v^*$, we decompose $\rho$ as explained previously as a prefix $\rho_{v_0,v^*}$, followed by a finite or infinite sequence of cycles $C_i$, and an eventual  suffix $\rho_{v^*}$. We also consider $\rho'_\ell$ and $\rho'_r$. 

First notice that the energy objective $\energy_1(c_0^\ell)$ is satisfied, that is,  the energy level remains always nonnegative along $\rho$. Indeed by definition of $\lambda_1$, $(i)$ this is the case along $\rho_{v_0,v^*}$ since $\sigma_1^\ell$ is winning from $v_0$ in $G_\ell$ for the objective $\Obj(c_0^\ell)$, furthermore  the energy level at the end of $\rho_{v_0,v^*}$ is $\geq c^*_\ell = c^*_r + \Delta$ by Lemma~\ref{lem:energy}, $(ii)$ the same conclusions hold for each $C_i$ and for $\rho_{v^*}$ by Lemma~\ref{lem:energy} and since $\tau_1^j$ is winning from $v^*$ in $G_j$ for the objective $\Obj(c^*_j)$ for both $j \in \{\ell,r\}$.

Second the mean-payoff objective $\MPi_2(\sim 0)$ is also satisfied. Suppose that either $\rho'_\ell$ or $\rho'_r$ is finite, that is, the decomposition of $\rho$ ends with the suffix $\rho_{v^*}$. Then by definition of $\lambda_1$, if $\rho_{v^*}$ is labeled by $j$, then it is a suffix of $\rho'_j$ that is consistent with the winning strategy $\tau_1^j$ from $v^*$ in $G_j$. As the mean-payoff objective is prefix-independent and $\rho'_j$ belongs to $\MPi_2(\sim 0)$, $\rho$ also belongs to $\MPi_2(\sim 0)$. 
Suppose now that  both $\rho'_\ell$ or $\rho'_r$ are infinite. By definition of $\lambda_1$, each $\rho'_j$, $j \in \{\ell,r\}$, is consistent with the winning strategy $\tau_1^j$ in $G_j$, and thus belongs to $\MPi_2(\sim 0)$. Moreover, as the mean-payoff-inf objective is prefix-independent and convex, we have that $\rho$ also belongs to $\MPi_2(\sim 0)$ (see e.g.~\cite{VelnerC0HRR15} for a proof). 
We recall that an objective $\Obj$ is convex if for all plays $\rho_{\it odd}=\pi_1 \pi_3 \dots  \in \Obj$ and $\rho_{\it even}= \pi_0 \pi_2 \dots \in \Obj$, then we have $\rho=\pi_0 \pi_1 \pi_2 \pi_3 \dots \in \Obj$.
\end{proof}

Notice that from Proposition~\ref{prop:memoryless} and Lemma~\ref{lem:variants-memoryless} we directly get the following corollary.

\begin{corollary} \label{cor:variants}
For all energy mean-payoff games $G$ and initial vertices $v_0$, let ${\sim} \in \{>, \geq \}$. Then 
$\playerOne$ is winning from $v_0$ for $\energy_1(c_0) \cap \MPi_2(\sim 0)$ for some initial credit $c_0$
if and only if he is winning from $v_0$ for $\energy_1(c_0) \cap \MPs_2(\sim 0)$ for some initial credit $c_0$.
\end{corollary}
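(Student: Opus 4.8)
The plan is to derive this equivalence directly from Proposition~\ref{prop:memoryless} and Lemma~\ref{lem:variants-memoryless}, so the argument is short and all the real work has already been done; I would establish the two implications separately.

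For the implication from $\MPi_2$ to $\MPs_2$, I would argue directly, without any machinery. If $\sigma_1$ is a winning strategy for $\playerOne$ from $v_0$ for $\energy_1(c_0) \cap \MPi_2(\sim 0)$, then every play $\rho$ consistent with $\sigma_1$ satisfies the energy constraint with credit $c_0$ and has $\MPi_2(\rho) \sim 0$. Since $\MPs_2(\rho) \geq \MPi_2(\rho)$ for every play $\rho$, and since ${\sim} \in \{>,\geq\}$, we also get $\MPs_2(\rho) \sim 0$, hence $\rho \in \energy_1(c_0) \cap \MPs_2(\sim 0)$. So the very same strategy $\sigma_1$ and credit $c_0$ witness that $\playerOne$ wins for $\energy_1(c_0) \cap \MPs_2(\sim 0)$.

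For the converse I would use the contrapositive. Suppose $\playerOne$ has no winning strategy for $\energy_1(c_0) \cap \MPi_2(\sim 0)$ for any $c_0$, i.e.\ the answer to $\EMPi^{\sim 0}$ is {\sf No}. Proposition~\ref{prop:memoryless} then yields a memoryless strategy $\sigma_2$ for $\playerTwo$ that is winning (spoiling) against every initial credit $c_0 \in \N$ for the objective $\energy_1(c_0) \cap \MPi_2(\sim 0)$. Applying Lemma~\ref{lem:variants-memoryless} to this memoryless $\sigma_2$, it is also winning for $\playerTwo$ against every $c_0$ for the objective $\energy_1(c_0) \cap \MPs_2(\sim 0)$. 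Hence $\playerOne$ has no winning strategy for $\energy_1(c_0) \cap \MPs_2(\sim 0)$ for any $c_0$. Taking the contrapositive gives the remaining implication, and combining the two directions completes the proof.

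I do not expect any genuine obstacle here: the corollary is essentially a repackaging of the memoryless result for $\playerTwo$. The only points requiring a little care are that Lemma~\ref{lem:variants-memoryless} is stated for a fixed memoryless $\sigma_2$ with the quantifier over all credits $c_0$ \emph{inside}, which is exactly the form in which Proposition~\ref{prop:memoryless} delivers the spoiling strategy, so the two results compose without friction; and that the trivial direction is cleanest handled on $\playerOne$'s side rather than through the lemma, since the lemma is phrased in terms of $\playerTwo$'s strategies.
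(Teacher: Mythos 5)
Your proposal is correct and follows essentially the same route as the paper, which derives the corollary directly from Proposition~\ref{prop:memoryless} and Lemma~\ref{lem:variants-memoryless}; the paper leaves the details implicit, and your write-up (trivial direction via $\MPs_2 \geq \MPi_2$, converse via the contrapositive through the memoryless spoiling strategy) is exactly the intended argument.
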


While Proposition~\ref{prop:memoryless} allows us to obtain the membership in co-NP of the decision problems and to effectively construct winning memoryless strategies for $\playerTwo$, unfortunately it does not tell us how $\playerOne$ must play from a winning vertex (when spoiling strategies do not exist for $\playerTwo$).
In the following two sections we provide results that show how $\playerOne$ needs to play in order to win energy mean-payoff games. 
We first show that $\playerOne$ can win with finite memory for the case of strict inequalities, and then we provide infinite-memory winning strategies for the case of non-strict inequalities. For the later case, we already know that infinite memory is necessary even player-1 game structures (see Theorem~\ref{thm:onePlayer}).

\subsection{Strategies for $\playerOne$: case of strict inequalities}

In case of strict inequalities, our solution is based on a reduction to multi-dimensional energy games~\cite{ChatterjeeDHR10} for which we know how to construct strategies for $\playerOne$.

\subsubsection{Multi-dimensional energy games}

We need to recall the concept of \emph{$d$-dimensional energy games}, with $d \in \Nzero$. Those games are played on  $d$-dimensional game structure $G = (V, V_1,V_2, E, w)$ where the weight function $w : E \rightarrow \Z^d$ assigns a $d$-tuple (instead of a pair) of weights $w(e)$ to each edge $e \in E$. The \emph{unknown initial credit problem} asks, given a $d$-dimensional game structure and an initial vertex $v_0$, to decide whether there exists an initial credit $c_0 = (c_{0,1},\ldots,c_{0,d}) \in \N^d$ and a winning strategy for $\playerOne$ for the objective $\Obj = \cap_{j=1}^d \energy_j(c_{0,j})$. When $d =1$ and the answer to this problem is {\sf Yes}, we denote by $c(v_0) \in \N$ the minimum initial credit for which $\playerOne$ has a winning strategy from $v_0$.
 The complexity of this problem has been first studied in~\cite{ChatterjeeDHR10,ChatterjeeRR14,VelnerC0HRR15} and then in~\cite{JurdzinskiLS15} for a fixed number of dimensions. 

\begin{theorem}[\cite{ChatterjeeDHR10,ChatterjeeRR14,JurdzinskiLS15,VelnerC0HRR15}] \label{thm:multiEnergy}
The unknown initial credit problem for $d$-dimensional energy games can be solved in pseudo-polynomial time, that is in time $(|V|\cdot ||E||)^{{\cal O}(d^4)}$. If the answer to this problem is 
\begin{itemize}
\item {\sf Yes}, then exponential-memory strategies are sufficient and necessary for player~$\playerOne$ to win,
\item {\sf No}, then $\playerTwo$ has a spoiling memoryless strategy $\sigma_2$ that he can use for all initial credits $c_0 \in \N^d.$
\end{itemize}
\end{theorem}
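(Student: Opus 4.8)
The plan is to reduce the unknown initial credit problem to a finite safety game of pseudo-polynomial size, and then to read off the running time and the memory bounds from the solution of that game.

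\textbf{Step 1: a bounding lemma.} First I would prove that if $\playerOne$ wins the unknown initial credit problem from $v_0$ at all, then he can win while keeping \emph{every} energy coordinate inside $[0,B]$ throughout the play, for a bound $B$ that is pseudo-polynomial in $|V|\cdot\|E\|$ and whose logarithm is polynomial in $d$ (the same $B$ then bounds the minimal sufficient initial credit, so more than $B$ units in any coordinate never helps). The route I would take is a self-covering argument in the spirit of Rackoff's technique for VASS coverability, lifted to the two-player setting: fix any winning strategy $\sigma_1$ of $\playerOne$ with some initial credit $c_0$; on the finitely-branching tree of $\sigma_1$-consistent plays, combine Dickson's lemma with a quantitative pigeonhole to locate, on each branch, a position from which $\playerOne$ can ``recycle'' a previously seen configuration without ever exceeding $B$ in any coordinate, and then glue these local witnesses back together with K\"onig's lemma. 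Pinning down the exact value of $B$ --- and hence the exponent in the final running time --- is the delicate part; here I would rely on the analysis of~\cite{JurdzinskiLS15}, whose bound makes the finite arena of Step~2 have $(|V|\cdot\|E\|)^{\mathcal{O}(d^{4})}$ vertices.

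\textbf{Step 2: the finite safety game.} Given $B$, I would build the finite game structure $\widehat{G}$ whose vertices are the pairs $(v,\vec{e})$ with $v\in V$ and $\vec{e}\in\{0,\dots,B\}^{d}$, where an edge $e$ out of $(v,\vec{e})$ updates each coordinate by $w_{j}(e)$, \emph{truncated from above at} $B$, and where dropping any coordinate below $0$ sends the play to a sink that is losing for $\playerOne$. By Step~1, $\playerOne$ wins the unknown initial credit problem from $v_{0}$ if and only if he wins the safety game $\widehat{G}$ from $(v_{0},(B,\dots,B))$: a strategy keeping all coordinates in $[0,B]$ never triggers the truncation, and conversely a winning strategy in $\widehat G$ is a legal winning strategy in the unbounded game. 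Since $\widehat{G}$ has $|V|\cdot(B+1)^{d}=(|V|\cdot\|E\|)^{\mathcal{O}(d^{4})}$ vertices and safety games are solved in linear time in the size of the arena and are memoryless-determined, this yields the announced pseudo-polynomial algorithm.

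\textbf{Step 3: memory bounds and a uniform spoiling strategy.} In the \textsf{Yes} case a memoryless winning strategy of $\playerOne$ in $\widehat{G}$ unravels to a strategy in $G$ whose memory is the set of reachable truncated energy vectors in $\{0,\dots,B\}^{d}$, i.e. of exponential size; for the matching lower bound I would import a family of $d$-dimensional energy games forcing exponential memory from~\cite{ChatterjeeRR14}. In the \textsf{No} case, the main obstacle is to convert $\playerTwo$'s memoryless strategy in $\widehat{G}$ --- which a priori reads the current energy vector --- into a strategy that depends on the current vertex of $G$ only and spoils \emph{simultaneously} for every initial credit $c_{0}\in\N^{d}$. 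I would obtain it by an induction on $|E|-|V|$ entirely parallel to the proof of Proposition~\ref{prop:memoryless}: when some $\playerTwo$-vertex has several outgoing edges, split them into two games $G_{\ell},G_{r}$, apply the induction hypothesis; in the remaining subcase, where both $G_{\ell}$ and $G_{r}$ are \textsf{Yes}-instances, recombine two \emph{bounded} finite-memory winning strategies of $\playerOne$ (obtained from Step~1) by interleaving them along the returns to the branching vertex, so that the two coordinate-wise running sums stay nonnegative separately, producing a winning strategy for $\playerOne$ in $G$ and contradicting the \textsf{No} answer. Checking that this recombination goes through for all credits at once --- which is exactly where the boundedness from Step~1 is indispensable --- is the real work, and it is carried out in~\cite{ChatterjeeDHR10,ChatterjeeRR14}.
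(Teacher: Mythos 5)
This theorem is not proved in the paper at all: it is imported as a black box from \cite{ChatterjeeDHR10,ChatterjeeRR14,JurdzinskiLS15,VelnerC0HRR15}, so there is no internal proof to compare your sketch against. That said, your architecture --- a bounding lemma, a reduction to a truncated finite safety game, unfolding the truncated energy vectors into $\playerOne$'s memory, and an edge-induction for $\playerTwo$'s uniform memoryless spoiling strategy --- is the correct high-level shape of how the cited works establish the result, and your observation that Step~2 needs the \emph{strengthened} bounding lemma (win while keeping every coordinate in $[0,B]$, not merely win with initial credit at most $B$, so that the truncation never fires) is exactly the right subtlety.

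The difficulty is that your sketch delegates precisely the hard content back to the references it is supposed to reprove. Concretely: (a) the self-covering-tree argument of Step~1 (Dickson plus K\"onig on the strategy tree) is the older Br\'azdil--Jan\v{c}ar--Ku\v{c}era-style route and does not by itself produce a bound $B$ compatible with the stated $(|V|\cdot\|E\|)^{{\cal O}(d^4)}$ running time; that exponent comes from the quite different induction-on-dimension analysis of bounding games in \cite{JurdzinskiLS15}, which you invoke as a black box, so Step~1 as written would prove decidability in pseudo-polynomial time for fixed $d$ but not the claimed complexity. (b) The exponential-memory \emph{necessity} is simply cited from \cite{ChatterjeeRR14}. (c) The uniformity of $\playerTwo$'s memoryless strategy over \emph{all} initial credits --- the part you correctly identify as ``the real work,'' since a memoryless strategy in the truncated arena reads the energy vector --- is again deferred to \cite{ChatterjeeDHR10,ChatterjeeRR14}; note that your interleaving argument in the recombination subcase must also track that the energy level upon each return to the branching vertex dominates the minimal credit there (the analogue of Lemma~\ref{lem:energy}), which you gesture at but do not carry out. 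In short, the proposal is a sound roadmap through the literature rather than a proof, which is acceptable here only because the paper itself treats the statement as external.
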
  

We recall two useful lemmas.

\begin{lemma}[\cite{VelnerC0HRR15}]\label{lem:energyCycle}
Let $G$ be a player-1 $d$-dimensional energy game. Then the answer to the unknown initial credit problem is {\sf Yes} if and only if there exists a reachable cycle $\pi$ in $G$ such that $w(\pi) \geq (0,\ldots,0)$.
\end{lemma}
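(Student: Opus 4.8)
The plan is to prove the two implications separately, using the standard fact that in a player-1 game structure the unknown initial credit problem has answer {\sf Yes} if and only if there is a single play $\rho \in \Plays(G)$ from $v_0$ lying in $\cap_{j=1}^{d} \energy_j(c_{0,j})$ for some $c_0 = (c_{0,1},\dots,c_{0,d}) \in \N^d$.

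For the implication from right to left I would argue as follows. Given a cycle $\pi$ reachable from $v_0$ with $w(\pi) \geq (0,\dots,0)$, fix a simple path $\lambda$ from $v_0$ to a vertex of $\pi$ and consider the play $\rho$ that follows $\lambda$ and then loops forever around $\pi$. Since each full traversal of $\pi$ adds the componentwise nonnegative vector $w(\pi)$, the energy level at the start of the $k$-th traversal is nondecreasing in $k$, while along $\lambda$ and within a single traversal of $\pi$ the energy drops by at most $(|V|+|\pi|)\cdot ||E||$ in each dimension. Hence the constant initial credit $c_0$ with all coordinates equal to $(|V|+|\pi|)\cdot ||E||$ makes $\rho$ satisfy $\cap_{j=1}^{d}\energy_j(c_{0,j})$, so the answer is {\sf Yes}; and one can read off from $\rho$ a finite-memory winning strategy if needed.

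For the converse, I would start from a play $\rho$ from $v_0$ and a credit $c_0 \in \N^d$ with $c_0 + w(\rhofactor{0}{k}) \geq (0,\dots,0)$ for all $k$. Since $V$ is finite, pick a vertex $v$ occurring infinitely often along $\rho$ and let $k_0 < k_1 < k_2 < \cdots$ enumerate the positions with $\rho_{k_i}=v$. Set $D_i = w(\rhofactor{k_i}{k_{i+1}})$ and $T_j = \sum_{i=0}^{j-1} D_i = w(\rhofactor{k_0}{k_j})$, so every $\rhofactor{k_i}{k_{i+1}}$ is a cycle on $v$. From $T_j = w(\rhofactor{0}{k_j}) - w(\rhofactor{0}{k_0}) \geq -c_0 - w(\rhofactor{0}{k_0})$ the sequence $(T_j)_{j\ge 0}$ in $\Z^d$ is bounded below by a fixed vector $-M\cdot(1,\dots,1)$, so $(T_j + M\cdot(1,\dots,1))_{j\ge 0}$ is an infinite sequence in $\N^d$. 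Invoking Dickson's lemma (the well-quasi-ordering of $(\N^d,\leq)$), I obtain indices $j_1 < j_2$ with $T_{j_1} \leq T_{j_2}$ componentwise; then $\pi := \rhofactor{k_{j_1}}{k_{j_2}}$ is a cycle on $v$, reachable from $v_0$ via $\rhofactor{0}{k_{j_1}}$, with $w(\pi) = T_{j_2}-T_{j_1} \geq (0,\dots,0)$, as desired.

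I expect the extraction step at the end to be the only real obstacle. Boundedness of the energy alone gives only that finite \emph{sums} of simple cycles in the cycle decomposition of $\rho$ stay bounded below, which does not produce one honest cycle; restricting attention to the return times of a single recurrent vertex and appealing to the well-quasi-order of $\N^d$ (rather than to a plain pigeonhole argument) is what turns this into a single nonnegative cycle. Everything else --- choosing the constant credit on one side and the constant $M$ on the other --- is routine bookkeeping.
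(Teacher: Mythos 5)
Your argument is correct, but note that the paper does not prove this lemma at all: it is imported verbatim from \cite{VelnerC0HRR15} as a recalled fact, so there is no in-paper proof to compare against. As a self-contained proof your proposal is sound. The right-to-left direction ($\lambda\pi^\omega$ with constant credit $(|V|+|\pi|)\cdot\|E\|$ in every coordinate) is routine and matches what the paper does in the analogous Lemma~\ref{lem:cycle}. For the left-to-right direction, your diagnosis of where the difficulty lies is exactly right: cycle-decomposition bookkeeping only bounds sums of simple cycles, and the real content is extracting one genuine nonnegative cycle. Fixing a recurrent vertex $v$, looking at the partial sums $T_j$ of the return-loop weights, and applying Dickson's lemma to the shifted sequence in $\N^d$ to find $j_1<j_2$ with $T_{j_1}\leq T_{j_2}$ is the standard route and is precisely the $d$-dimensional generalization of the ``local minimum for the energy'' device the paper uses in dimension one (Lemma~\ref{lem:minLocaux} and the proof of Theorem~\ref{thm:caracterisation}): in one dimension a local minimum at a recurrent vertex already yields a nonnegative return cycle, whereas in $d$ dimensions the minima need not coincide across coordinates and the well-quasi-ordering of $(\N^d,\leq)$ is what replaces them. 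One cosmetic remark: since $\leq$ on $\N^d$ is a partial order, you could equivalently invoke the fact that every infinite sequence in a well-quasi-order contains an increasing pair; either phrasing is fine, and the extracted cycle is reachable via $\rho[0,k_{j_1}]$ as you say.
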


\begin{lemma}[\cite{ChatterjeeD12}]\label{lem:energy}
Let $G$ be a $1$-dimensional energy game and $v_0$ be an initial vertex. For all plays $\rho$ consistent with a winning strategy $\sigma_1$ for $\playerOne$, if the initial credit is $c(v_0) + \Delta$ for $\Delta \geq 0$, then the energy level at all positions of $\rho$ where a state $v$ occurs is at least $c(v) + \Delta$.
\end{lemma}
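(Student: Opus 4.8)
The plan is to combine two elementary observations: (i) the minimality in the definition of $c(\cdot)$ turns every energy level reached along an optimally-played run into a guaranteed winning initial credit for the vertex currently visited, and (ii) the initial credit affects only the winning condition, not the moves available to either player, so raising the initial credit by $\Delta$ raises every subsequent energy level by exactly $\Delta$. Accordingly I read the statement with $\sigma_1$ being a winning strategy for \playerOne from $v_0$ for the \emph{minimal} initial credit $c(v_0)$ (which is how the lemma is applied in the paper).

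First I would note that such a minimal-credit winning strategy exists: for a fixed vertex the set of initial credits from which \playerOne wins is a nonempty, upward-closed subset of $\N$, hence has a least element, and this element is witnessed by some strategy (it also follows from memoryless determinacy, Theorem~\ref{thm:multiEnergy}). Recall also that a strategy is independent of the chosen initial credit, so the set of plays consistent with $\sigma_1$ is the same for every credit. Next, fix a play $\rho$ consistent with $\sigma_1$ and a position $k$ with $\rho_k=v$, and consider the shifted strategy $\sigma_1^{v}$ from $v$ defined by $\sigma_1^{v}(\pi)=\sigma_1(\rhofactor{0}{k-1}\cdot\pi)$ for every path $\pi$ from $v$. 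Every play $\rho'$ from $v$ consistent with $\sigma_1^{v}$ extends to a play $\rhofactor{0}{k-1}\cdot\rho'$ from $v_0$ consistent with $\sigma_1$; since $\sigma_1$ wins from $v_0$ with credit $c(v_0)$, that play keeps its energy $\ge 0$, and reading this only from position $k$ onward gives $c(v_0)+w_1(\rhofactor{0}{k})+w_1(\rho'[0,m])\ge 0$ for all $m\ge 0$. Hence $\sigma_1^{v}$ is winning from $v$ for the initial credit $c(v_0)+w_1(\rhofactor{0}{k})$, so by minimality of $c(v)$ we obtain $c(v_0)+w_1(\rhofactor{0}{k})\ge c(v)$. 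Finally, when the initial credit is $c(v_0)+\Delta$, the energy level of $\rho$ at position $k$ equals $c(v_0)+\Delta+w_1(\rhofactor{0}{k})$, which by the previous inequality is at least $c(v)+\Delta$, as claimed.

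I do not expect a real difficulty here; the one point that genuinely needs care is the reading of the hypothesis, which I would make explicit. The conclusion is false for an \emph{arbitrary} winning strategy: a wasteful strategy that dips far into negative energy right after $v_0$ (and is therefore winning only for a non-minimal credit) can reach some vertex $v$ with energy well below $c(v)+\Delta$; so one really must take $\sigma_1$ winning for the minimal credit $c(v_0)$. The only remaining verification, entirely routine, is that the concatenation $\rhofactor{0}{k-1}\cdot\rho'$ is consistent with $\sigma_1$ at every \playerOne-vertex, which is immediate from the definition of $\sigma_1^{v}$.
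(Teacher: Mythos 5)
The paper gives no proof of this lemma: it is imported verbatim from \cite{ChatterjeeD12}, so there is nothing in-paper to compare against. Your argument is correct and is the standard one — the residual strategy $\sigma_1^v$ at position $k$ wins from $v$ with initial credit $c(v_0)+w_1(\rhofactor{0}{k})$ (a nonnegative integer because $\rho$ satisfies the energy objective with credit $c(v_0)$), so minimality of $c(v)$ gives $c(v_0)+w_1(\rhofactor{0}{k})\geq c(v)$, and raising the credit by $\Delta$ shifts every energy level by exactly $\Delta$. Your caveat that $\sigma_1$ must be read as winning for the \emph{minimal} credit $c(v_0)$ is well taken and matches how the lemma is invoked later in the paper (e.g., for the strategies $\tau_1^j$ and $\sigma_i^v$, which are explicitly taken for minimal initial credits); the only quibble is the parenthetical appeal to Theorem~\ref{thm:multiEnergy} for existence of a minimal-credit winning strategy, which is unnecessary since upward-closedness of the set of winning credits already suffices.
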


The next proposition shows that we can reduce energy mean-payoff games with strict inequality constraints to energy games with 4 dimensions.

\begin{proposition} \label{prop:reductionEnergy}
The problems $\EMPi^{>0}$ and $\EMPs^{>0}$ for energy mean-payoff games are both polynomially reducible to the unknown initial credit problem for $4$-dimensional energy games. Moreover, for the energy game $G'$ constructed from the given $G$, we have  $||E'|| = ||E||$ and $|V'|, |E'|$ are linear in $|V|, |E|$, and from a finite-memory winning strategy $\sigma'_1$ of $\playerOne$ in $G'$, we can derive a finite-memory winning strategy $\sigma_1$ of $\playerOne$ in $G$ such that the memory size of $\sigma_1$ is upper bounded by the memory size of $\sigma'_1$.
%
%
\end{proposition}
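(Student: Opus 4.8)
The plan is to construct the 4-dimensional energy game $G'$ from $G$ by "unfolding" the mean-payoff constraint on the second dimension into two extra energy dimensions. The key idea is that a mean-payoff-inf (or mean-payoff-sup) objective with strict inequality $>0$ is equivalent, in a one-player setting, to the existence of a reachable good cycle (Theorem~\ref{thm:caracterisation}), and more generally, by Proposition~\ref{prop:memoryless}, a negative answer for $\playerOne$ corresponds to a memoryless spoiling strategy for $\playerTwo$ in whose induced one-player game there is no reachable good cycle. So what I want is a gadget that converts "the second-dimension weight accumulates faster than a slowly-growing allowance" into a genuine energy (lower-bounded sum) condition. Concretely, I would keep the first dimension of $G$ as the first dimension of $G'$, and encode the second dimension of $G$ as a pair of dimensions in $G'$: roughly, dimension 2 of $G'$ carries $w_2$ itself (as an energy constraint ensuring $w_2$ stays bounded below along the play — capturing the "good cycle has $w_1 \geq 0$, $w_2 > 0$" shape via the interplay with a controlled decrement), while dimensions 3 and 4 implement a "counter" allowing $\playerOne$ to periodically reset an allowance so that the mean value can be driven to $0$ from below. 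The construction should be local: each edge $e$ of $E$ gives an edge of $E'$ with the 4-dimensional weight built from $w_1(e), w_2(e)$ and fixed constants, plus possibly a few auxiliary vertices/edges (a constant number per vertex) allowing $\playerOne$ to choose when to "collect" progress. This keeps $||E'|| = ||E||$ and $|V'|, |E'|$ linear in $|V|, |E|$.

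Next I would verify correctness in both directions. For soundness: given a winning strategy $\sigma'_1$ for $\playerOne$ in $G'$, the four energy constraints being simultaneously satisfiable imply (via Lemma~\ref{lem:energyCycle} applied inside the one-player game obtained after fixing a spoiling $\playerTwo$ strategy if there were one — or directly) that the corresponding play in $G$ keeps $w_1$ bounded below, hence lies in $\energy_1(c_0)$ for a suitable $c_0$, and that the $w_2$-accumulation stays above a sublinear function, hence $\MPi_2 > 0$ (equivalently $\MPs_2 > 0$, by Corollary~\ref{cor:variants} the two are interchangeable here). Since the auxiliary structure is deterministic for $\playerOne$ and invisible to $\playerTwo$, the strategy $\sigma_1$ in $G$ is read off by projecting $\sigma'_1$, and its memory is no larger than that of $\sigma'_1$ — the auxiliary vertices are controlled by $\playerOne$ and the choices there can be folded into the memory states already present, or require no extra memory if they are forced. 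For completeness: if $\playerOne$ wins $\EMPi^{>0}$ in $G$, then by the characterization a reachable good cycle survives against every $\playerTwo$ strategy; I would show this yields a reachable cycle $\pi'$ in $G'$ with $w(\pi') \geq (0,0,0,0)$ (after the appropriate looping/combination, exactly as in the proof of Proposition~\ref{prop:goodcycle}), hence by Lemma~\ref{lem:energyCycle}/Theorem~\ref{thm:multiEnergy} the unknown initial credit problem for $G'$ has answer {\sf Yes}, and the finite-memory winning strategy it provides projects back.

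\textbf{The main obstacle} I anticipate is designing the gadget for the two extra dimensions so that it faithfully captures $\MPi_2(\rho) > 0$ rather than something strictly stronger or weaker. The subtlety is that the energy mean-payoff objective with strict inequality does \emph{not} reduce to a single energy constraint $w_2 \geq $ const — Figure~\ref{fig::EMP:pseudoPolyMemory} shows $\playerOne$ genuinely needs to let $w_2$ drift arbitrarily low before recovering, so the gadget must grant a replenishable allowance. The standard trick (used in reductions from mean-payoff to energy and in multi-dimensional settings) is to let $\playerOne$ "declare" at chosen points that a whole block has net nonnegative contribution, transferring accumulated surplus into a budget dimension; making this precise while keeping only finitely many auxiliary vertices and weights bounded by $||E||$, and arguing that a spurious declaration is punished by one of the four energy dimensions eventually going negative, is the delicate bookkeeping. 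I would lean on the fact (Theorem~\ref{thm:multiEnergy}, second bullet, plus Proposition~\ref{prop:memoryless}) that both games are determined in the relevant sense via memoryless $\playerTwo$ strategies, so it suffices to check the equivalence after fixing any memoryless $\sigma_2$ and reasoning purely about cycles in the resulting one-player graphs, where Proposition~\ref{prop:goodcycle} gives exact control over the shapes that can occur.
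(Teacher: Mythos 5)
Your proposal correctly identifies the architecture of the argument — a local gadget per edge, correctness via memoryless spoiling strategies for $\playerTwo$ (Theorem~\ref{thm:multiEnergy}) and cycle analysis in the induced one-player graphs (Theorem~\ref{thm:caracterisation}, Lemma~\ref{lem:energyCycle}), and a projection of $\sigma'_1$ back to $G$ for the memory bound. This matches the paper's strategy. But the proposal is not a proof: the one piece of genuine technical content, the gadget itself, is never constructed. You explicitly defer it as ``delicate bookkeeping,'' yet everything else in the argument is routine once the gadget is fixed, so what you have written is a plan whose only hard step is left open.

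Moreover, the informal description you do give of the gadget is off in a way that matters. You say dimension~2 of $G'$ ``carries $w_2$ itself (as an energy constraint ensuring $w_2$ stays bounded below).'' A plain energy constraint on $w_2$ is strictly weaker than $\MPi_2(\rho)>0$: a cycle with $w_2(C)=0$ keeps the $w_2$-energy bounded but has mean payoff exactly $0$, so the reduction would be unsound. The paper's gadget resolves this by replacing each edge $(v,v')$ of weight $(x,y)$ with a path through new $\playerOne$-vertices $r,s$, where entering the sub-gadget via $(r,s)$ \emph{charges} dimension~2 a toll of $1$ (weight $(0,-1,0,0)$), the self-loop $(s,s)$ has weight $(0,0,1,-1)$, and the edge $(v,r)$ has weight $(x,y,-1,1)$. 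Dimensions 3 and 4 are exact opposites, so keeping both nonnegative forces the number of $(s,s)$ loops along a cycle to equal the number of original edges traversed; since loops can only be taken after paying the toll, any cycle of $G'$ with weight $\geq(0,0,0,0)$ projects to a cycle of $G$ with $w_1\geq 0$ and $w_2\geq n\geq 1$, i.e.\ a good cycle, and conversely a good cycle of weight $(\alpha,\beta)$ and length $k$ lifts to a cycle of $G'$ of weight $(\alpha,\beta-1,0,0)$ by entering the sub-gadget once and looping $k$ times. Your ``replenishable allowance / declaration'' intuition is pointing in this direction, but until you pin down weights with exactly this balancing property (toll strictly positive, loop count tied to edge count by a pair of opposite dimensions) and verify both directions of the cycle correspondence, the reduction — and hence the proposition — is not established.
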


\begin{proof}
We first explain the reduction. Given an energy mean-payoff game structure $G = (V, V_1,V_2, E, w)$ with $w : E \rightarrow \Z^2$, we construct a $4$-dimensional energy game $G' = (V', V'_1,V'_2, E', w')$ with $w' : E' \rightarrow \Z^4$ as follows. Each edge $e = (v,v') \in E$ labeled by $w(e) = (x,y)$ is replaced by:
\begin{itemize}
\item five edges $(v,r), (r,s), (s,s), (s,r)$, and $(r,v')$ where $r,s$ are two new vertices,
\item such that  $w'(v,r) = (x,y,-1,1)$,  $w'(r,s) = (0,-1,0,0)$, $w'(s,s) = (0,0,1,-1)$, $w'(s,r) = (0,0,0,0)$, and $w'(r,v') = (0,0,0,0)$.
\end{itemize}
\noindent
This is illustrated in Fig.~\ref{fig:gadget}.

The set $V'_2$ is equal to $V_2$, and $V'_1$ is composed of all vertices of $V_1$ and the $2 \cdot |E|$ new vertices (two for each edge of $G$). 
By construction, we have  $||E'|| = ||E||$ and $|V'|, |E'|$ are linear in $|V|, |E|$.

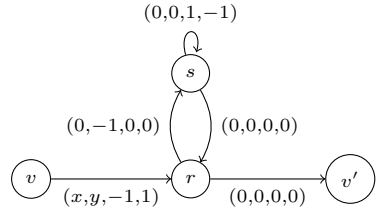
\begin{figure}[h]
\centering
  \begin{tikzpicture}[scale=.7]
      \everymath{\scriptstyle}
		
		\node[draw, circle] (s) at (2,2){$s$};
		\node[draw,circle] (v) at (-1,0){$v$};
		\node[draw,circle] (r) at (2,0){$r$};
		\node[draw, circle] (v') at (5,0){$v'$};
		
		\draw[->] (v) to node [below]{$(x,y,-1,1)$} (r);
		\draw[->] (r) to node [below]{$(0,0,0,0)$} (v');
		\draw[->] (s) edge [loop above] node {$(0,0,1,-1)$} (s);

		\draw[->] (s) to [bend left] node [right]{$(0,0,0,0)$} (r);
		\draw[->] (r) to [bend left] node [left]{$(0,-1,0,0)$} (s);

	\end{tikzpicture}
\caption{Construction of a $4$-dimensional energy game.}
\label{fig:gadget}
\end{figure}

With this reduction, let us prove that the answer to the energy mean-payoff decision problem for $G$ is {\sf Yes} (for both problems $\EMPi^{>0}$ and $\EMPs^{>0}$) if and only if the answer to unknown initial credit problem for $G'$ is {\sf Yes}. Recall that we already know by Corollary~\ref{cor:variants} that the answer is $\sf Yes$ simultaneously for problems $\EMPi^{>0}$ and $\EMPs^{>0}$.

\medskip
Let us first suppose that the answer is {\sf No} for $G'$. Then by Theorem~\ref{thm:multiEnergy}, $\playerTwo$ has a spoiling memoryless strategy $\sigma_2$ that he can use for all initial credits $c'_0 \in \N^4$ in $G'$. As $V_2 = V'_2$, we can interprete $\sigma_2$ in $G$. 
We consider the game structures $G'(\sigma_2)$ and $G(\sigma_2)$ induced by $\sigma_2$ from $G'$ and $G$ respectively, and where  $\playerOne$ is the unique player. By Theorem~\ref{thm:caracterisation}, $\sigma_2$ is winning for $\playerTwo$ in $G$ (for both $\energy(c_0) \cap \MPi(> 0)$ and $\energy(c_0) \cap \MPs(> 0)$, for some $c_0$) if and only if there is no reachable good cycle in $G(\sigma_2)$. Assume the contrary and let $\pi = v_1 v_2 \ldots v_{k+1}$ with $v_{k+1} = v_1$ be such a reachable cycle of length $k$, that is, such that 
\begin{eqnarray} \label{eq:alpha}
w(\pi) =  (\alpha,\beta) \quad  \mbox{ with } \alpha \geq 0 \mbox{ and } \beta >0. 
\end{eqnarray}
We are going to construct from $\pi$ a reachable cycle $\pi'$ in $G'(\sigma_2)$ such that $w'(\pi') \geq (0,0,0,0)$. With Lemma~\ref{lem:energyCycle}, this will contradict $\sigma_2$ being winning for $\playerTwo$ in $G'$.

The cycle $\pi'$ is constructed as follows where for each edge $(v_\ell, v_{\ell + 1})$ of $\pi$ we denote by $r_\ell$ and $s_\ell$ the two new vertices of Figure~\ref{fig:gadget}: 
\begin{itemize}
\item the first edge $(v_1,v_2)$ is replaced the path $v_1 r_1 s_1^{k+1} r_1 v_2$ of length $k+4$ that loops $k$ times in the edge $(s_1,s_1)$ 
\item for each $\ell \in \{2, \ldots, k \}$, the edge $(v_\ell,v_{\ell +1})$ is replaced by the path $v_\ell r_\ell v_{\ell+1}$ of length $2$
 \end{itemize}
By definition of $G'$, this cycle $\pi'$ has a weight $w'(\pi')$ equal to
$$w'(\pi') = (0,-1,0,0) + (0,0,k,-k) + (\alpha,\beta,-k,k).$$
Indeed the sum $(0,-1,0,0) + (0,0,k,-k)$ is the cost of path $r_1 s_1^{k+1} r_1$, and the last term $(\alpha,\beta,-k,k)$ comes from the definition of cost $w'(v_\ell,r_\ell)$ for all $\ell \in \{1, \ldots, k \}$. It follows by (\ref{eq:alpha}) that $w'(\pi') = (\alpha, \beta -1,0,0) \geq (0,0,0,0)$ as announced.

\medskip
Let us now suppose that the answer is {\sf Yes} for $G'$. Then by Theorem~\ref{thm:multiEnergy}, $\playerOne$ has a winning strategy $\sigma'_1$ for some initial credit $c'_0$, that is finite-memory with a memory size $M'$. Let us show how to derive from this strategy a winning strategy $\sigma_1$ for $\playerOne$ in $G$ for some initial credit $c_0$, that is finite-memory and has size $M \leq M'$. In this way the last part of Proposition~\ref{prop:reductionEnergy} will be also proved.

First notice that if a play $\rho' \in \Plays(G')$ is consistent with $\sigma'_1$, then it cannot loop forever on one or on both vertices $r,s$ among the new vertices (see Figure~\ref{fig:gadget}). Otherwise in the first case, $\rho$ would loop on the simple cycle $\pi' = (s,s)$ with weight $w'(\pi') = (0,0,1,-1)$, and in the second case, it would loop on some cycle $\pi' \in \{r,s\}^+$ with at least one occurrence of edge $(r,s)$, thus with a weight vector $w'(\pi')$ such that $w'_2(\pi') < 0$. Hence the energy level of $\rho'$ would not remain above $(0,0,0,0)$ for any initial credit $c'_0 \in \N^d$, which is impossible.


Let us now explain how to construct a strategy $\sigma_1$ in $G$ from the finite-memory winning strategy $\sigma'_1$. Intuitively, the plays $\rho$ consistent with $\sigma_1$ will be derived from plays $\rho'$ consistent with $\sigma'_1$ where we delete factors $\rhoprimfactor{k}{\ell} \in \{r,s\}^+$ such that $r, s$ are the new vertices. We proceed as follows. Let $\pi' u \in V'^* \cdot V$ be a path in $G'$ that is consistent with $\sigma'_1$ and that ends in a vertex $u \in V$. We construct from $\pi' u$ a path $\pi u \in V^* \cdot V$ in $G$ such that each factor of $\pi' u$ of the form $v \lambda v'$ such that $v,v' \in V$ and $\lambda \in (V'_1 \setminus V_1)^+$ is replaced by the factor $v v'$. Notice that each such path $\pi u \in V^* \cdot V$ is derived from a unique path $\pi' u$ that is consistent with $\sigma'_1$. Then when $u \in V_1$, we define $\sigma_1(\pi u)$ as
$$\sigma_1(\pi u) = \sigma'_1(\pi' u).$$
One can check that this strategy $\sigma_1$ is finite-memory with a memory size $M$ less than or equal to the size $M'$ of $\sigma'_1$ (intuitively, in the Moore machine of $\sigma'_1$, we remove the finite portions producing factors $v \lambda v'$ as described above).

It remains to prove that $\sigma_1$ is winning for $\playerOne$ (for both $\energy(c_0) \cap \MPi(> 0)$ and $\energy(c_0) \cap \MPs(> 0)$, for some $c_0$). 
First notice that the energy objective is satisfied because dimension 1 is not affected by the reduction from $G$ to $G'$ (on the first component, only weights $0$ label the new edges, see Figure~\ref{fig:gadget}) and $\sigma'_1$ is winning for the energy objective of dimension 1 in $G'$. So we have to show that each play $\rho$ consistent with $\sigma_1$ satisfies $\MPs_2(\rho) \geq \MPi_2(\rho) > 0$. Consider $\rho$ as a play in the game structure $G(\sigma_1)$ and its cycle decomposition in this structure. As a first step, we show that each (simple) cycle in this decomposition is good.

Take such a cycle $\pi u$, let $k \geq 1$ be its length, and let $\pi' u$ be the cycle of $G'(\sigma'_1)$ from which $\pi u$ is derived. We denote its weight vector by $w(\pi u) = (\alpha, \beta)$ with  $\alpha, \beta \in \Z$. We have to prove that $\alpha \geq 0$ and $\beta > 0$. As $\sigma'_1$ is winning, we have $w'(\pi' u) \geq (0,0,0,0)$ (recall that it is a cycle in $G'(\sigma'_1)$). Moreover
\begin{eqnarray} \label{eq:beta}
w'(\pi' u) = (\alpha, \beta - n, -k + \ell, k - \ell) \geq (0,0,0,0)  
\end{eqnarray}
where $n$ is the number of subpaths deleted from $\pi' u$ to derive $\pi u$ and $\ell$ the total number of edges $(s,s)$, $s \in V' \setminus V$, used by $\pi' u$. It follows from (\ref{eq:beta}) that $\alpha \geq 0$. Moreover,  $k = \ell \geq 1$ (since $k$ is the length of $\pi u$) showing that $n \geq 1$ (as $\ell \geq 1$, at least one subpath has been deleted). Therefore $\beta \geq n > 0$. This shows that the cycle decomposition of $\rho$ in $G(\sigma_1)$ is composed of simple cycles that are all good. 

We can now explain why $\rho$ satisfies $\MPi_2(\rho) > 0$. Let $N = |V| \cdot M$ be the number of vertices of $G(\sigma_1)$. Thus each simple path $\pi$ or cycle $C$ of $G(\sigma_1)$ has a length bounded by $N$, and 
\begin{eqnarray} \label{eq:truc}
w_2(\pi) \geq -N \cdot ||E||, \quad\quad\quad w_2(C) \geq 1 
\end{eqnarray}
(as just explained). Take any prefix $\rhofactor{0}{k}$ of $\rho$ and its cycle decomposition into an acyclic part and $t$ simple cycles. It follows that 
\begin{eqnarray} \label{eq:bidule}
k \leq N + t \cdot N
\end{eqnarray}
and 
$$\begin{array}{lllllll}
w_2(\rhofactor{0}{k}) &\geq& -N \cdot ||E|| + t &&&\mbox{by (\ref{eq:truc})} \\
&\geq& -N \cdot ||E|| + \frac{k}{N} - 1 &&&\mbox{by (\ref{eq:bidule})}. 
\end{array}$$
Therefore the average weight of $\rhofactor{0}{k}$ is at least equal to $\frac{-N \cdot ||E|| + \frac{k}{N} - 1}{k}$ and this lower bound converges to $\frac{1}{N} > 0$. This establishes that $\MPi_2(\rho) > 0$. 

Hence $\sigma_1$ is winning and the proof is completed.
\end{proof}

\subsection{Strategies for $\playerOne$: case of non-strict inequalities}

By Theorem~\ref{thm:caracterisationLarge}, we know that infinite memory may be necessary for $\playerOne$ to win in case of non-strict inequalities. The reduction to multi-dimensional energy games of previous section is thus not applicable for this case. Instead, we show how we can effectively construct a winning strategy for $\playerOne$ by combining an infinite number of finite-memory strategies.

\begin{proposition} \label{prop:infinite}
\label{prop:non-strict-effective}
For both problems $\EMPi^{\geq 0}$ and $\EMPs^{\geq 0}$, if $\playerOne$ is winning from an initial vertex $v_0$, then one can effectively construct a strategy for him to win from $v_0$. This strategy requires infinite memory. 
\end{proposition}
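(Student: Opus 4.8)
The plan is to build the required infinite‑memory strategy as an infinite concatenation, played in \emph{phases}, of the finite‑memory strategies that Proposition~\ref{prop:reductionEnergy} and Theorem~\ref{thm:multiEnergy} produce for the nearby strict thresholds $-\tfrac1m$. By Corollary~\ref{cor:variants} it is enough to treat $\EMPi^{\geq 0}$, and the strategy produced will witness both $\EMPi^{\geq 0}$ and $\EMPs^{\geq 0}$ at once because $\MPs_2(\rho)\geq\MPi_2(\rho)$ for every play $\rho$. First I would pass to a cleaner arena: let $W$ be the set of vertices from which $\playerOne$ wins $\EMPi^{\geq 0}$ for some initial credit. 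Deciding $v\in W$ is exactly the energy mean‑payoff decision problem, decidable by Proposition~\ref{prop:memoryless} together with Theorem~\ref{thm:onePlayer}, so $W$ is computable; moreover $W$ is a trap for $\playerTwo$ that $\playerOne$ is never forced to leave (a $\playerTwo$‑edge out of $W$ reaches a vertex losing for $\playerOne$ at any credit, contradicting the source being in $W$, and $\playerOne$'s winning strategy from a vertex of $V_1\cap W$ must choose an edge staying in $W$, with a suitably decreased credit). Hence $G':=\restrict{G}{W}$ is a legitimate game structure containing $v_0$, and $\playerOne$ wins $\EMPi^{\geq 0}$ — a fortiori $\EMPi^{>-\tfrac1m}$ for every $m\in\Nzero$ — from \emph{every} vertex of $G'$.

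Next, for each $m\in\Nzero$ I would produce, effectively, a finite‑memory strategy $\sigma_1^{(m)}$ winning for $\playerOne$ from every vertex of $G'$ for $\energy_1(a_m)\cap\MPi_2(>-\tfrac1m)$, for some computable credit $a_m$. By Remark~\ref{rem:threshold}, winning $\MPi_2(>-\tfrac1m)$ is the problem $\EMPi^{>0}$ in the integer game $H_m$ obtained from $G'$ by replacing $w_2$ with $m\cdot w_2+1$; by Proposition~\ref{prop:reductionEnergy} this reduces to the (positively answered) unknown initial credit problem for a $4$‑dimensional energy game, which by Theorem~\ref{thm:multiEnergy} admits an effectively constructible finite‑memory winning strategy on its whole winning region; pulling it back through Proposition~\ref{prop:reductionEnergy} yields $\sigma_1^{(m)}$. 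The structural fact I need, read off from the proof of Proposition~\ref{prop:reductionEnergy}, is that in $G'(\sigma_1^{(m)})$ every reachable cycle $C$ satisfies $w_1(C)\geq 0$ and $\tfrac{w_2(C)}{|C|}>-\tfrac1m$. Two consequences: along any play consistent with $\sigma_1^{(m)}$ the energy stays within $d_m\cdot||E||$ of its value at the start of that play, where $d_m$ is the finite size of $G'(\sigma_1^{(m)})$; and any long enough factor has average second weight above $-\tfrac1m$.

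The strategy $\lambda_1$ then plays in consecutive phases. In phase $z$, $\playerOne$ follows $\sigma_1^{(z)}$ from the current vertex; the play eventually stays inside a single strongly connected component of $G'(\sigma_1^{(z)})$, and $\playerOne$ keeps playing until a configuration inside that component is revisited, having moreover waited until phase $z$ has lasted at least $T_z$ steps, where $T_z$ is chosen astronomically larger than the total length of all previous phases and than $d_z\cdot||E||$; phase $z+1$ then starts from that recurrent vertex, which — projecting into $W$ — is winning for $\EMPi^{>-\tfrac1{z+1}}$, so $\sigma_1^{(z+1)}$ is available there. That $\MPi_2(\rho)\geq0$ for the resulting play $\rho$ is a routine imitation of the computation in the proof of Theorem~\ref{thm:caracterisationLarge}: a prefix ending during phase $z$ has second weight at least $-\Delta_{z-1}-\tfrac{t}{z}-d_z||E||$, where $\Delta_{z-1}$ bounds the finite overhead of phases $1,\dots,z-1$ and $t$ is the number of steps taken so far in phase $z$; dividing by the prefix length — at least $T_{z-1}$, itself at least $(z-1)$ times everything accumulated before — gives a lower bound of the form $\tfrac{-z}{z^2}$, which tends to $0$. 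Finally $\MPs_2(\rho)\geq\MPi_2(\rho)\geq0$ settles the $\EMPs$ variant, and the unbounded phase lengths force $\lambda_1$ to use infinite memory.

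The hard part will be the energy bookkeeping \emph{at} phase boundaries. Inside a phase the energy is under control (within $d_z||E||$ of the phase's starting value, since cycles of $G'(\sigma_1^{(z)})$ never decrease the first component), but restarting $\sigma_1^{(z+1)}$ needs an initial credit $a_{z+1}$ that grows with $z$ — the scaled game $H_z$ carries weights of magnitude $z||E||$ — while $\playerTwo$ may keep the energy low throughout phase $z$, so there is no room to pre‑charge an unbounded reservoir. The way I intend to get around this is to exploit the recurrence built into the transition point: since every cycle of $G'(\sigma_1^{(z)})$ has $w_1\geq 0$, the energy at successive visits of the chosen recurrent configuration is non‑decreasing and bounded, hence eventually constant; combined with a lemma in the spirit of Lemma~\ref{lem:energy} for the multi‑dimensional energy games produced by the reduction — guaranteeing that, from a vertex of the winning region, the minimum credit needed to continue with $\sigma_1^{(z+1)}$ is already met once the energy has stabilised there — this would let phase $z+1$ begin with no energy loss, so that one fixed finite initial credit (e.g.\ $d_0=a_1$) suffices for $\lambda_1$. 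Making this last point precise — pinning down the minimum‑credit behaviour uniformly across the infinite family $(\sigma_1^{(m)})_{m}$, and verifying that the stabilised energy meets it, possibly after choosing the recurrent vertex among the available ones so as to minimise that requirement — is the step I expect to demand the most care.
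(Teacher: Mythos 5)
Your overall architecture matches the paper's: restrict to the winning region, build finite-memory strategies $\sigma_1^{(m)}$ for the relaxed objectives $\energy_1\cap\MPi_2(>-\tfrac1m)$ via the reduction of Proposition~\ref{prop:reductionEnergy}, concatenate them in ever-longer phases so that the running average is squeezed towards $0$, and conclude for $\MPs$ via $\MPs_2\geq\MPi_2$. The mean-payoff half of your argument is sound and essentially the paper's (the paper ends phase $i$ on an accumulated-weight condition rather than a time bound $T_z$, pre-charging $w_2(\rho_i)>N^{v_i}_{i+1}\cdot||E||-|\rho_i|\cdot\epsilon_i$ to absorb the next phase's acyclic dip, but your ``divide by a huge $T_{z-1}$'' variant achieves the same limit, modulo choosing $T_{z-1}$ after computing $d_z$).

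The genuine gap is exactly where you flag it: the energy bookkeeping at phase boundaries. The idea you are missing is that the minimum credits do \emph{not} grow without bound. For each vertex $v$ of the winning region, the minimum credit $c_i(v)$ for $\energy_1\cap\MPi_2(>-\epsilon_i)$ is non-decreasing in $i$ (the objectives are nested) and bounded above by the minimum credit $c(v)$ for the limit objective $\energy_1\cap\MPi_2(\geq 0)$, since any strategy winning the latter wins the former. Being a bounded non-decreasing sequence of integers, $(c_i(v))_i$ stabilises at some index $k_v$; setting $\kappa=\max_v k_v$ and $\gamma=\max_{v,i}(c_{i+1}(v)-c_i(v))$, the paper starts with the fixed surplus $d_0=\kappa\cdot\gamma+c_1(v_0)$, and Lemma~\ref{lem:energy} guarantees that at each transition vertex $v_i$ the energy is at least $c_i(v_i)+\Delta_i$ with $\Delta_i\geq 0$, because only finitely many transitions (at most $\kappa$) can each consume at most $\gamma$ of the surplus. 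Your proposed substitute --- waiting at a recurrent configuration until the energy ``stabilises'' and then invoking an unproved analogue of Lemma~\ref{lem:energy} --- does not work in general: if every cycle of $G'(\sigma_1^{(z)})$ has first-dimension weight exactly $0$ (as in Figure~\ref{fig::EMP:infiniteMemory}), the energy at the recurrent vertex never increases, so nothing is gained by waiting, and without the stabilisation of the $c_i(v)$ you have no bound on how much extra credit the switch to $\sigma_1^{(z+1)}$ demands. (Your worry that $a_{z+1}$ grows ``because $H_z$ carries weights of magnitude $z||E||$'' also conflates the two dimensions: the rescaling $w_2\mapsto m\cdot w_2+1$ leaves the energy dimension untouched; the credit can still increase with $m$ because the mean-payoff constraint tightens, but it is capped by $c(v)$.) Without this monotone-bounded-hence-stationary argument, the construction as you describe it does not yield a single finite initial credit, so the proof is incomplete.
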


\begin{proof}
Remember by Corollary~\ref{cor:variants} that $\playerOne$ is winning from $v_0$ for the objective $\energy(c_0) \cap \MPi(\geq 0)$ for some $c_0$ if and only if he is winning from $v_0$ for the objective $\energy(c_0) \cap \MPs(\geq 0)$ for some $c_0$. Here, we show how to construct a winning strategy for $\playerOne$ for the mean-payoff-inf case only. Indeed a winning strategy in this case is also winning for the mean-payoff-sup case. 

We first note that if $\playerOne$ is winning from a vertex $v$ for the objective
\begin{equation*}
\Obj(c_0) = \energy_1(c_0) \cap \MPi_2(\geq 0),  
\end{equation*}
\noindent
then he is also winning from $v$ for the objective
\begin{equation*}
\Obj_i(c_0) = \energy_1(c_0) \cap \MPi_2(> -\epsilon_i)
\end{equation*} 
\noindent
for all $\epsilon_i = \frac{1}{2^i}$, $i \in \Nzero$. Let $\sf Win$ be the set of vertices $v$ from which $\playerOne$ is winning for $\Obj(c_0)$ for some $c_0$. In particular $v_0 \in {\sf Win}$ by hypothesis. From now on, we assume that the vertices not in $\sf Win$ are removed from $V$ leading to a game structure that we still denote by $G$. This can be done as a winning strategy for $\playerOne$ will never enter those vertices.

For all vertices $v \in {\sf Win}$, we denote by $c(v) \in \N$ the minimum initial credit from which $\playerOne$ is winning for $\Obj(c(v))$ from $v$. Similarly for all $i \in \Nzero$, we denote by $c_i(v) \in \N$ the minimum initial credit from which he is winning for $\Obj_i(c_i(v))$ from $v$ and by $\sigma^v_i$ such a winning strategy for $\playerOne$. Recall by Proposition~\ref{prop:reductionEnergy} that all strategies $\sigma_i^v$ can be supposed to be finite-memory and to have memory size bounded by $M_i^v$. The game structure $G(\sigma_i^v)$ induced by $\sigma_i^v$ has a number of vertices equal to
\begin{eqnarray} \label{eq:size}
N_i^v = |{\sf Win}| \cdot M_i^v
\end{eqnarray}
Also, we have that $c_1(v) \leq c_2(v) \leq c_3(v) \leq \ldots \leq c(v)$. Moreover as these initial credits are integers, 
\begin{eqnarray} \label{eq:kv}
\exists k_v, \forall i\geq k_v : \quad c_i(v) = c_{k_v}(v).
\end{eqnarray}
Let us define 
\begin{eqnarray} \label{eq:KappaGamma}
\kappa = \max_{v \in \sf Win} k_v \mbox{ and } \gamma = \max \{c_{i+1}(v) - c_i(v) \mid v \in {\sf Win}, i \in \Nzero \}.
\end{eqnarray}
These constants will be useful later for the energy objective.

\subparagraph{An effective winning strategy for $\playerOne$.}
Let us define a strategy $\tau_1$ for $\playerOne$ from $v_0$ that will be proved to be winning for $\playerOne$. A play $\rho$ consistent with $\tau_i$ is the limit of a sequence of prefixes $\rho_i$ of increasing length constructed in the following way:
\begin{enumerate}
\item Initialize $i = 1$ and $\rho_0 = v_0$;
\item Assume that a prefix $\rho_{i-1}$ has been constructed so far and that its last vertex is $v_{i-1}$. Apply, starting from $v_{i-1}$, the strategy $\sigma_i^{v_{i-1}}$ (against $\playerTwo$) until the produced path $\pi_i$ consistent with $\sigma_i^{v_{i-1}}$ and the path $\rho_i$ equal to the concatenation $\rho_{i-1}$ with $\pi_i$ satisfy
\begin{eqnarray} \label{eq:rhoi}
w_2(\rho_i) > N_{i+1}^{v_i} \cdot ||E|| - |\rho_i| \cdot \epsilon_i.
\end{eqnarray}
\item Increment $i$ by 1 and goto 2.
\end{enumerate}

Notice that in (\ref{eq:rhoi}), we require for $w_2(\rho_i)$ more than $w_2(\rho_i) > - |\rho_i| \cdot \epsilon_i$. Indeed the latter inequality would be enough to guarantee that the mean-payoff-sup value of $\rho$ satisfies $\MPs(\rho) \geq 0$ but we will explain later that we need (\ref{eq:rhoi}) to guarantee $\MPi(\rho)\geq 0$.

For the correctness of the given construction, we need to prove that for each $i \in \Nzero$, there exists a path $\rho_i$ satisfying (\ref{eq:rhoi}). This is a consequence of point $(ii)$ of the next lemma.

\begin{lemma}
\label{lem:accumulation}
As each $\sigma_i^v$ is a finite-memory strategy from $v$ winning for $\energy_1(c_0) \cap \MPi_2 (> -\epsilon_i)$,
  \begin{itemize}
   \item[$(i)$] for all plays $\pi$ consistent with $\sigma_i^v$ from $v$, for all $k \in \mathbb{N}$, we have  $w_2(\pifactor{0}{k}) > - N_i^v \cdot ||E|| - k \cdot \epsilon_i$, and
    \item[$(ii)$] for all $K \in \mathbb{N}$, there exists $k \in \mathbb{N}$ such that for all plays $\pi$ consistent with $\sigma_i^v$ from $v$, we have $w_2(\pifactor{0}{k}) > K - k \cdot \epsilon_i$. 
  \end{itemize}
\end{lemma}

\begin{proof}
Let us come back to the game structure $G(\sigma_i^v)$ with $N_i^v$ vertices (by (\ref{eq:size})). As $\sigma_i^v$ is winning for the objective $\MPi_2 (> -\epsilon_i)$, all reachable cycles $C$ in $G(\sigma_i^v)$ have a average weight 
\begin{eqnarray} \label{eq:ccycle}
\frac{w_2(C)}{|C|} > - \epsilon_i.
\end{eqnarray}
Moreover as the weight $w_2(C)$ is an integer, $w_2(C) \geq - |C| \cdot \epsilon_i + t_C$, for some $t_C > 0$. Let $t =  \min\{t_C \mid C \mbox{ reachable cycle in } G(\sigma_i^v) \}$. This tells us that one unit $t> 0$ of weight is accumulated each time a cycle is closed in $G(\sigma_i^v)$:
\begin{eqnarray} \label{eq:cycle+}
w_2(C) \geq - |C| \cdot \epsilon_i + t.
\end{eqnarray}

Let us prove $(i)$. Consider a play $\pi$ consistent with $\sigma_i^v$ from $v$, i.e., an infinite path in $G(\sigma_i^v)$. Let $k \in \mathbb{N}$ and let us reason on the cycle decomposition of $\pifactor{0}{k}$. First, as the acyclic part of this decomposition has a length bounded by $N_i^v$, its weight is bounded below by $-N^v_i \cdot ||E||$. Second, let $\ell$ be the total length of the cycles $C$ of the cyclic decomposition of $\pifactor{0}{k}$. As all cycles $C$ in $G(\sigma_i^v)$ satisfy (\ref{eq:ccycle}), we conclude that the total weight of this cyclic part of $\pifactor{0}{k}$ is bounded below by $- \ell \cdot \epsilon_i$. Finally, as $\ell \leq k$, we obtain the claimed lower bound of $(i)$, that is, $w_2(\pifactor{0}{k}) > - N_i^v \cdot ||E||  - k \cdot \epsilon_i$.

Let us now prove $(ii)$. We simply repeat the arguments given for $(i)$ by using (\ref{eq:cycle+}) instead of (\ref{eq:ccycle}). If $\alpha$ cycles are closed during the cycle decomposition of $\pifactor{0}{k}$, we then get $w_2(\pifactor{0}{k}) \geq \alpha \cdot t - N_i^v \cdot ||E||  - k \cdot \epsilon_i$ instead of the inequality of $(i)$. So, given $K \in \mathbb{N}$, take $k \in \mathbb{N}$ such that $\alpha$ is large enough to get an accumulated positive weight $\alpha \cdot t$ such that $\alpha \cdot t - N_i^v \cdot ||E|| > K$. This establishes $(ii)$.
\end{proof}

Let us prove that $\tau_1$ is a winning strategy (with infinite memory) from $v_0$ for the objective $\Obj(d_0)$ with the initial credit 
\begin{eqnarray} \label{eq:d0}
d_0 = \kappa \cdot \gamma + c_1(v_0)
\end{eqnarray}
with the constants defined in (\ref{eq:KappaGamma}). Let $\rho$ be a play consistent with $\tau_1$ from $v_0$, that is, $\rho$ is the limit of a sequence of prefixes $\rho_i$ as described previously in the definition of $\tau_i$. Remember that each $\rho_i$, $i \in \Nzero$, is the concatenation of $\rho_{i-1}$ and $\pi_i$ such that $\pi_i$ is consistent with $\sigma_i^{v_{i-1}}$ from $v_{i-1}$.

\subparagraph{Mean-payoff-inf objective.}
We begin by showing that $\rho$ satisfies $\MPi_2(\rho) \geq 0$. To achieve this goal, it is enough to show that for all $i \in \Nzero$, the average weight never falls below $- \epsilon_{i-1}$ during the construction of $\rho_i$ (i.e. the construction of $\pi_i$), and this average weight is above $-\epsilon_{i}$ at the end of the construction of $\rho_i$ (see Figure~\ref{fig:MPi}).

\begin{figure}[h]
\centering
	\begin{tikzpicture}[scale=0.85]
	      \everymath{\scriptstyle}
	
		\draw[dashed] (0.5,2)  -- (1,2) -- (2,2) -- (3,2) -- (4,2) -- (5,2) -- (6,2) -- (7,2);
		\draw[dashed] (0.5,3.5) -- (1,3.5) -- (2,3.5) -- (3,3.5) -- (4,3.5) -- (5,3.5) --(6,3.5) -- (7,3.5) ;
		\draw[->] (0.5,5) -- (1,5) -- (2,5) -- (3,5) -- (4,5) -- (5,5) --(6,5) -- (7,5) ;
	
		\filldraw[]  (0.5,3.5) node[align=center, left] {$-\epsilon_{i}$};
		\filldraw[]  (0.5,2) node[align=center, left] {$-\epsilon_{i-1}$};
		
		\filldraw[]  (1,3.5) circle (1.5pt);
		\filldraw[]  (1,2) circle (1.5pt)  ;
		
		\filldraw[]  (7,5) node[align=center, right] {$ \MPi_2(\rho)$ };
	
		\filldraw[]  (3,5.5) node[align=center, above] {$\frac{\omega_2(\rho_{i-1})}{\mid\rho_{i-1}\mid}$ };
		\filldraw[]  (5,5.5) node[align=center, above] {$\frac{\omega_2(\rho_{i})}{\mid\rho_{i}\mid}$ };
		
		\filldraw[]  (3,5) circle (1.5pt);
		\filldraw[]  (5,5) circle (1.5pt)  ;
	
		\draw [] plot [smooth, tension=0.7] coordinates { (2.6, 2.7) (3, 2.8) (4, 2.4) (4.7,3.7)  (5.37, 4.05)};
		
		\draw [dotted] (5.37, 4.05) -- (5.7, 4.1);
		\draw [dotted] (2.6, 2.7) -- (2.35, 2.5);
		
		\draw[<-]  (1,1) -- (1,2) -- (1,3) --  (1,4) --  (1,5) --  (1,5.5);
	
		\draw[dashed] (3,1)  -- (3,2) -- (3,3) --  (3,4) --  (3,5.5) ;
	
		\draw[dashed] (5,1) -- (5,2) -- (5,3) --  (5,4) --  (5,5.5) ;
	\end{tikzpicture}
\caption{$\rho$ satisfies $\MPi_2(\rho) \geq 0$.}
\label{fig:MPi}
\end{figure}

Let us show that such a property is a consequence of Lemma~\ref{lem:accumulation} and inequality~(\ref{eq:rhoi}) satisfied by $\rho_i$. First by (\ref{eq:rhoi}), the average weight of $\rho_i$ satisfies $\frac{w_2(\rho_i)}{|\rho_i|} > -\epsilon_i$. Second, consider any prefix $\pifactor{0}{k}$ of $\pi_i$ and the corresponding prefix $\rhofactor{0}{k'}$ of $\rho_i$ such that $k' = k+|\rho_{i-1}|$. Then by point $(i)$ of Lemma~\ref{lem:accumulation}, we have $w_2(\pifactor{0}{k}) > - N_i^v \cdot ||E|| - k \cdot \epsilon_i$, and by (\ref{eq:rhoi}) applied to $\rho_{i-1}$, we have $w_2(\rho_{i-1}) > N_{i}^{v_i} \cdot ||E|| - |\rho_{i-1}| \cdot \epsilon_{i-1}$. Therefore we get
\begin{eqnarray*}
w_2(\rhofactor{0}{k'}) &=& w_2(\rho_{i-1}) + w_2(\pifactor{0}{k}) \\
&>& (N_{i}^{v_i} \cdot ||E|| - |\rho_{i-1}| \cdot \epsilon_{i-1}) + (-N_i^v \cdot ||E|| - k \cdot \epsilon_i) \\
&>& - |\rhofactor{0}{k'}| \cdot \epsilon_{i-1}
\end{eqnarray*}
Hence, as announced, we have that the average weight of the prefix $\rhofactor{0}{k'}$ of $\rho_i$ is above $-\epsilon_{i-1}$.

\subparagraph{Energy objective.} It remains to explain why the energy objective is also satisfied by $\rho$ with the initial credit $d_0$ defined in (\ref{eq:d0}). Recall from the definition of $\tau_1$ that $\rho$ is the limit of a sequence of prefixes $\rho_i$ such that each $\rho_i$ is the concatenation of $\rho_{i-1}$ and $\pi_i$. Recall also that $c_i(v) \in \N$ is the minimum initial credit for which $\sigma^v_i$ is winning from $v$.

By construction, $\pi_1$ is consistent with $\sigma_1^{v_0}$ with the initial credit $d_0 = c_1(v_0) + \Delta_1$, where $\Delta_1 = \kappa \cdot \gamma$. Hence the energy level of $\rho_1 = \pi_1$ never drops below zero and it is at least equal to $c_1(v_1) + \Delta_1$ in the last vertex $v_1$ of $\rho_1$ by Lemma~\ref{lem:energy}. Similarly $\pi_2$ is consistent with $\sigma_2^{v_1}$ with the initial credit $c_1(v_1) + \Delta_1 = c_2(v_1) + \Delta_2$, where $\Delta_2 = \kappa \cdot \gamma - (c_2(v_1) - c_1(v_1))$. Hence the energy level of $\rho_2$ never drops below zero and it is at least equal to $c_2(v_2) + \Delta_2$ in the last vertex $v_2$ of $\rho_2$ by Lemma~\ref{lem:energy}. This argument can be repeated for all $i \in \Nzero$: the energy level of $\rho_i$ never drops below zero and it is at least equal to $c_i(v_{i}) + \Delta_i$, with $\Delta_i = \kappa \cdot \gamma - \sum_{j=1}^{i-1} (c_{j+1}(v_{j}) - c_{j}(v_{j}))$. Notice that we always have $\Delta_i \geq 0$ by (\ref{eq:kv}) and by definition of $\kappa$ and $\gamma$ (see (\ref{eq:KappaGamma})). Therefore the energy level of $\rho$ never drops belows zero. 

This proves that $\tau_1$ is a winning strategy for the objective $\energy_1(d_0) \cap \MPi_2(\geq 0)$ and thus conclude the proof.
\end{proof}

\subsection{Proof of Theorem~\ref{thm:twoPlayer}}

We conclude this section with the proof of Theorem~\ref{thm:twoPlayer}.

\begin{proof}[Proof of Theorem~\ref{thm:twoPlayer}]
We establish the three assertions of the theorem as follows.

We first prove that the energy mean-payoff decision problems for two-player games $G$ are in co-NP for the four variants. This result is obtained as follows. By Proposition~\ref{prop:memoryless}, memoryless strategies are sufficient for $\playerTwo$ to win, for all four variants. Hence, the following is an algorithm in co-NP: guess a memoryless strategy $\sigma_2$ for $\playerTwo$, and in the resulting one-player game $G(\sigma_2)$, verify in polynomial time whether $\playerOne$ is winning thanks to Theorem~\ref{thm:onePlayer}.

Second, we consider the two variants with strict inequalities. By Proposition~\ref{prop:reductionEnergy}, there exists a polynomial reduction of the energy mean-payoff decision problem to the unknown initial credit problem for $4$-dimensional energy games. By Theorem~\ref{thm:multiEnergy}, it follows that the energy mean-payoff decision problem can be solved in pseudo-polynomial time and that exponential-memory strategies are sufficient for $\playerOne$ to win.

Finally, we consider the last two variants with non-strict inequalities. In Proposition~\ref{prop:non-strict-effective}, we have shown how we can effectively construct a winning strategy for $\playerOne$ in this case.
\end{proof}

\

\bibliographystyle{abbrv}
\bibliography{biblio}

\end{document}